\title{Non-linear Triple Changes Estimator for Targeted Policies}
\author[1]{Sina Akbari\thanks{Corresponding author: Sina Akbari, sina.akbari@epfl.ch}}
\author[1]{Negar Kiyavash}
\affil[1]{EPFL, Switzerland}
\date{}
\theoremstyle{plain}
\newtheorem{theorem}{Theorem}[section]
\newtheorem{proposition}[theorem]{Proposition}
\newtheorem{lemma}[theorem]{Lemma}
\theoremstyle{definition}
\newtheorem{assumption}{}
\theoremstyle{remark}
\newtheorem{remark}[theorem]{Remark}
\def\icmltitlerunning#1{\gdef\@icmltitlerunning{#1}}
\icmltitlerunning{Triple Changes Estimator}
\begin{document}
\maketitle







\begin{abstract}
    The renowned difference-in-differences (DiD) estimator relies on the assumption of `parallel trends,' which does not hold in many practical applications.
    To address this issue, the econometrics literature has turned to the triple difference estimator. 
    Both DiD and triple difference are limited to assessing average effects exclusively.
    An alternative avenue is offered by the changes-in-changes (CiC) estimator, which provides an estimate of the entire counterfactual distribution at the cost of relying on (stronger) distributional assumptions.
    In this work, we extend the triple difference estimator to accommodate the CiC framework, presenting the `triple changes estimator' and its identification assumptions, thereby expanding the scope of the CiC paradigm.
    Subsequently, we empirically evaluate the proposed framework and apply it to a study examining the impact of Medicaid expansion on children's preventive care.
\end{abstract}

\section{Introduction}
In the domains of econometrics and quantitative social sciences, the \emph{difference-in-differences} (DiD) estimator and its extensions have emerged as indispensable tools for estimating causal effects in observational studies.
With its roots traced back to the work of \citet{snow1855mode} and further popularized by seminal works such as \citet{ashenfelter1984using} and \citet{card1993minimum},
DiD has provided a reliable method for evaluating the effect of policy interventions over time.
In particular, DiD stands out in studies where inference based on controlling for confounders or using instrumental variables is deemed unsuitable, and where pre-treatment information is available.
Researchers have actively pursued various extensions of the DiD estimator, aimed at enhancing its applicability and robustness \cite{athey2006identification, sofer2016negative, callaway2021difference, roth2023parallel}.

The DiD estimator hinges on the assumption of parallel trends, which states that the (average) outcome of both the control and treatment groups share exactly the same evolution trend in the absence of the treatment.
Mathematically speaking, this assumption translates to
\begin{equation}\label{eq:common-trends}
    \E{\Y(t_1)-\Y(t_0)\cond D=1} = \E{\Y(t_1)-\Y(t_0)\cond D=0},
\end{equation}
where $\Y(t)$ denotes the potential outcome associated with the absence of the treatment at time $t$, and $D$ denotes the assigned treatment.
In particular, $D=1$ and $D=0$ represent the treatment and control groups, respectively.
The validity of this assumption can be challenged, and parallel trends might get violated due to unobserved time-varying factors or dynamic changes in the study context.
This violation may result in biased estimates.
As a response to potential deviations from parallel trends, researchers have turned to more flexible models, such as the \emph{triple difference} framework \cite{gruber1994incidence}.
The triple difference estimator can be formulated as the difference between two DiD estimators.
Intuitively, the difference of two DiD estimators is unbiased, provided both estimators have the same bias.
Indeed, sole purpose of of subtracting the second DiD estimator is to  debias the first one.
Despite the prevalent use of the triple difference estimator, especially over the last two decades \cite{raifman2018association, sakurai2020relationship, han2016effect, chen2020triple, tai2001racial}, only recently a formal presentation of the framework and its identification assumptions was provided \cite{olden2022triple}.

\begin{figure}[t]
    \centering
    \begin{tikzpicture}
        \centering\hspace{.55cm}
        \tikzstyle{block} = [circle, inner sep=1pt, fill=black]
		\tikzstyle{input} = [coordinate]
		\tikzstyle{output} = [coordinate]
        \tikzset{edge/.style = {->,> = latex',-{Latex[width=2mm]}}}
        \node[block] (y11) at (-1,1.8) {};
        \node[block] (y01) at (-1,0) {};
        \node[block] (y12) at (1,1.8) {};
        \node[block] (y02) at (1,0) {};

        \node at (-1,2+.1) {$\Y(t_0)$};
        \node at (-1,-0.2-.1) {$\Y(t_1)$};
        \node at (1,2+.1) {$\Y(t_0)$};
        \node at (1,-0.2-.1) {$\Y(t_1)$};
        
        \node[rotate=90] at (-1-.3,1.1) {$T_{0}$};
        \node[rotate=90] at (1-.3,1.1) {$T_{1}=T_0$};

        
        \draw[edge, line width=1.4pt] (y11) to (y01);
        \draw[edge, line width=1.4pt] (y12) to (y02);

        \node[rectangle, line width = 1pt, draw=blue] at (-1,-1) {$D=0$};
        \node[rectangle, line width = 1pt, draw=blue] at (1,-1) {$D=1$};
    \end{tikzpicture}
    \caption{No drift assumption of CiC framework. Mappings $T_0$ and $T_1$ are assumed to be identical.}
    \label{fig:one}
\end{figure}


The triple difference framework guarantees only the identification of the \emph{average treatment effect} (on the treated).
Another issue that arises is that the triple difference estimator is biased if the outcomes in the pre- and post-treatment or among control and treatment groups are measured on a different scale.
The \emph{changes-in-changes} (CiC) framework \citep{athey2006identification}, on the other hand, is scale-invariant and yields the identification of the counterfactual outcome probability distribution.
To do so, CiC framework requires additional assumptions beyond Eq.~\eqref{eq:common-trends}.
It postulates that there exists a unique monotone mapping $T_d$ which maps the probability measure over $\Y(t_0)$ to that over $\Y(t_1)$ within group $D=d$ for $d\in\{0,1\}$ (see Figure \ref{fig:one}.)
Moreover, these two mappings are assumed to be identical.
That is, $T_1(y) = T_0(y)$ for every $y$, or since the mappings are bijective, 
\begin{equation}\label{eq:tid}
    T_1\circ T_0^{-1} = \textrm{Id},
\end{equation}
where $\textrm{Id}$ represents the identity map.
Eq.~\eqref{eq:tid} states that there is \emph{no drift} in the evolution trend across groups.
~Note that Eq.~\eqref{eq:tid}, was not explicitly stated in \citet{athey2006identification} as an assumption but can be derived as a consequence of the following four assumptions:
\begin{assumption}[Model assumption]\label{as:model}
    The potential outcomes $\Y(t)$ can be modelled using a production function $h$ of a latent variable $U$, which models the individual characteristics:
    \begin{equation*}
        \forall t\in\{0,1\}:\quad \Y(t) = h(U; t).
    \end{equation*}
\end{assumption}
In particular, \ref{as:model} posits that $h$ does not depend on the group assignment ($D$).
\begin{assumption}[Strict monotonicity]\label{as:monotone}
    The function $h(\,\cdot\,; t)$ is strictly increasing in $U$ for every $t\in\{t_0,t_1\}$.
\end{assumption}
\begin{assumption}[Time invariance]\label{as:invariance}
    Within every subgroup, the distribution of the latent variable $U$ does not change over time.
\end{assumption}
\begin{assumption}[Latent support overlap]\label{as:sup}
    The support of the latent variable $U$ in the treatment group is a subset of its support in the control group.
\end{assumption}

In the case of a one-dimensional outcome, mappings $T_d$ can be expressed as
\begin{equation}\label{eq:td}
    T_d= F^{-1}_{\Y(t_1)\cond D=d}\circ F_{\Y(t_0)\cond D=d}
\end{equation} 
where $F_{\Y(t)\cond D=d}(\cdot)$ represents the cumulative density function of $\Y(t)$ in group $D=d$.
Under regularity conditions, $T_d$ 
is the unique monotone map that pushes forward the probability measure over $\Y(t_0)$ to that over $\Y(t_1)$ in group $D=d$ \citep{villani2009optimal, santambrogio2015optimal}.
Combining Equations \eqref{eq:tid} and \eqref{eq:td}, it is straightforward to identify the counterfactual distribution of $\Y(t_1)$ in the treatment group, $F_{\Y(t_1)\cond D=1}(y)$.
Specifically, 
\[  
\begin{split}
    F_{\Y(t_1)}&_{\cond D=1}(y)=
    F_{\Y(t_0)\cond D=1}\circ
    F^{-1}_{\Y(t_0)\cond D=0}\circ
    F_{\Y(t_1)\cond D=0}(y)
    ,
\end{split}
    \]
which matches Eq.~(9) in the original work of \citet{athey2006identification}.

The no-drift assumption specified in Eq.~\eqref{eq:tid} can be challenged in practice, especially in scenarios where the treatment or exposure is directed toward a specific sub-population. 
This situation arises in studies on the impact of targeted interventions, such as a criminal justice initiative ($D$), on recidivism rates, 
where this intervention is exclusively administered to individuals with specific criminal histories or risk profiles. Naturally, it is expected that the time evolution of counterfactual recidivism rates will exhibit significant disparities between the control and treated groups.
Similar challenges arise when the eligibility criteria is narrow within the context of social programs such as welfare or housing assistance, which target specific demographic groups.

Recognizing the strengths and weaknesses of both the triple difference and CiC frameworks, we propose a novel estimator that combines the best of both worlds.
Our proposed `triple changes' estimator aims to overcome the limitations of DiD and the triple difference estimator by leveraging the scale invariance and strong identification results of CiC, while introducing a more flexible mapping assumption that allows us to relax \eqref{eq:tid}.
We briefly present the contributions of this work.
\begin{itemize}[leftmargin=*]
    \item We formally present and analyze the triple changes estimator as an extension to CiC, and provide the necessary assumptions for its point identification first in the scalar case.
    We then discuss how to generalize our results to high-dimensional outcomes by harnessing theory of optimal transport. 
    \item We provide several partial identification results under relaxed versions of our proposed point identifiability assumptions.
    Further, we show the validity of analogous results for the classic CiC framework as a special case of our derivations.
    \item We introduce a finite-sample estimator for the average treatment effect on the treated within our framework and analyze its asymptotic behaviour.
    \item We conduct an empirical evaluation of our estimator on both synthetic and real datasets.
\end{itemize}

This paper is organized as follows.
Section \ref{sec:model} reviews the necessary background and the setup of the study.
In Section \ref{sec:one}, the identification of our estimand of interest is studied for a scalar outcome.
Section \ref{sec:inf} provides an estimator for the latter and studies its asymptotic properties.
In Section \ref{sec:ot}, we extend our work to high-dimensional outcomes using theory of optimal transport.
Numerical evaluations are presented in Section \ref{sec:exp}.

\subsection{Causal model}\label{sec:model}
We consider a study where we have access to data from two sources, e.g., two states of the united states, or two cities, or any two separate populations.
These two sources of data will be denoted by $S=s_0$ and $S=s_1$ throughout.
We assume that a treatment (e.g., a health-care policy) is administered in one state, without loss of generality in $S=s_1$, and not in the other.
In both states, the individuals are partitioned into two cohorts, namely, $D=d_1$ and $D=d_0$, signifying the individuals that are eligible and not eligible for receiving the treatment, respectively\footnote{In a classic controlled trial, these would correspond to the treatment and control groups, respectively.}. 
The outcome is measured in two time points, namely $t_0<t_1$, where the eligible individuals in state $s_1$ receive the treatment at an infinitesimal amount of time after $t_0$\footnote{Note that we do not limit our setting to panel data.
In particular, the individuals for which the outcome is measured may differ across time points.}.
We denote by $Y^{D=d_0}(t)$ and $Y^{D=d_1}(t)$ the potential outcome variables associated with the outcome at time $t$ in the absence, and in the presence of treatment, respectively. 
To improve readability, we will often use the short-hands $\Y(t)$ and $Y^1(t)$ for $Y^{D=d_0}(t)$ and $Y^{D=d_1}(t)$, respectively. 
We denote the observed outcome at time $t$ by $Y(t)$.
Throughout, we make the following standard consistency assumption \citep{rubin1980randomization}.
\begin{assumption}[Consistency]\label{as:cons}
    At each time $t$, the realized outcome $Y(t)$ is determined as
    \[Y(t)=\sum_d\ind{D=d}\cdot Y^{D=d}(t),\]
    where $\ind{\cdot}$ denotes the indicator function.
\end{assumption}

The estimand of interest is the effect of treatment on the treated, i.e., the group corresponding to $S=s_1,D=d_1$.
As such, we target learning the probability measure over the counterfactual outcome $\Y$ within this subgroup:
\[F(\Y\mid S=s_1,\:D=d_1),\]
where $F$ denotes the cumulative density function.
When clear from context, we use the shorthand $F_{\Y\mid s_1,d_1}$ instead.
Throughout, we assume that random variables are defined over a compact domain, and that densities are absolutely continuous with respect to the Lebesgue measure.

\section{One-dimensional Estimator}\label{sec:one}

\begin{figure*}[t]
    \centering
    \begin{tikzpicture}[scale=0.95]
        \centering\hspace{.55cm}
        \tikzstyle{block} = [circle, inner sep=1pt, fill=black]
		\tikzstyle{input} = [coordinate]
		\tikzstyle{output} = [coordinate]
        \tikzset{edge/.style = {->,> = latex',-{Latex[width=2mm]}}}
        \node[block] (y11) at (-1,2) {};
        \node[block] (y01) at (-1,0) {};
        \node[block] (y12) at (1,2) {};
        \node[block] (y02) at (1,0) {};
        \node[block] (y13) at (4+3,2) {};
        \node[block] (y03) at (4+3,0) {};
        \node[block] (y14) at (6+3,2) {};
        \node[block] (y04) at (6+3,0) {};
        \node[block] (ys01) at (3,1) {};
        \node[block] (ys02) at (8+3,1) {};

        \node at (-1,2.2+.1) {$\Y(t_0)$};
        \node at (-1,-0.2-.1) {$\Y(t_1)$};
        \node at (1,2.2+.1) {$\Y(t_0)$};
        \node at (1,-0.2-.1) {$\Y(t_1)$};
        \node at (4+3,2.2+.1) {$\Y(t_0)$};
        \node at (4+3,-0.2-.1) {$\Y(t_1)$};
        \node at (6+3,2.2+.1) {$\Y(t_0)$};
        \node at (6+3,-0.2-.1) {$\Y(t_1)$};
        \node at (3.6,1.4) {$T_{s_0,d_0}\big(\Y(t_0)\big)$};
        \node at (8.6+3,1.4) {$T_{s_1,d_0}\big(\Y(t_0)\big)$};
        
        \node[rotate=90] at (-1-.3,1.1) {$T_{s_0,d_0}$};
        \node[rotate=90] at (1-.3,1.1) {$T_{s_0,d_1}$};
        \node[rotate=90] at (4+3-.3,1.1) {$T_{s_1,d_0}$};
        \node[rotate=90] at (6+3-.3,1.1) {$T_{s_1,d_1}$};

        \node[red] at (2.4,0.15) {$T^*$};
        \node[red] at (7.4+3,0.15) {$T^*$};
        
        \draw[edge] (y11) to (y01);
        \draw[edge] (y12) to (y02);
        \draw[edge] (y13) to (y03);
        \draw[edge] (y14) to (y04);
        \draw[edge] (y12) to (ys01);
        \draw[edge] (y14) to (ys02);
        \draw[edge, red, bend left=20, line width=1.4pt] (ys01) to (y02);
        \draw[edge, red, bend left=20, line width=1.4pt] (ys02) to (y04);

        \node (up) at (5, 3.5) {};
        \node (down) at (5, -1.3) {};
        \draw[dashed, line width=.8pt] (up) to (down);
        \node[rectangle, line width = 1pt, draw=black] at (.,3.3) {$S=s_0$};
        \node[rectangle, line width = 1pt, draw=black] at (7.8,3.3) {$S=s_1$};
        
        \node[rectangle, line width = 1pt, draw=blue] at (-1,-1) {$D=d_0$};
        \node[rectangle, line width = 1pt, draw=blue] at (1.8,-1) {$D=d_1$};
        \node[rectangle, line width = 1pt, draw=blue] at (4+3,-1) {$D=d_0$};
        \node[rectangle, line width = 1pt, draw=blue] at (6.8+3,-1) {$D=d_1$};
    \end{tikzpicture}
    \caption{Illustration of the state-independent non-overlap effects assumption (\ref{as:stindep}).
    The drift mapping $T^*$ highlighted in red is identical among both states.}
    \label{fig:four}
\end{figure*}


We commence our analysis by considering cases where the outcome of interest, $Y$, is one-dimensional, i.e., a scalar.

In contrast to \citet{athey2006identification}, we posit that the outcome of an individual can be determined by a combination of the latent variable $U$ (with a common support across groups), and the group to which the individual belongs.
This adjustment relaxes the model assumption of the CiC framework, as outlined below.

\begin{customass}{1}[Model assumption]\label{as:model2}
    At each state $S=s$ and treatment group $D=d$,
    the potential outcomes $\Y(t)$ can be determined through a production function $h_{s,d}(\cdot\:;t)$ of a latent variable $U$, which models the individual characteristics:
    \begin{equation*}
        \forall t\in\{0,1\}:\quad \Y(t) = h_{s,d}(U; t).
    \end{equation*}
\end{customass}

    A key distinction between our setup and the classic CiC framework lies in relaxing \ref{as:model} to \ref{as:model2}, allowing the production functions $h_{s,d}(\cdot)$ to be \emph{group-specific}.
    In applications involving targeted treatment assignments, \ref{as:model2} emerges as a more sensible assumption.
Assumptions \ref{as:monotone} and \ref{as:invariance} are adapted analogously:

\begin{customass}{2}[Strict monotonicity]\label{as:monotone2}
    The production functions $h_{s,d}(\,\cdot\,; t)$ are strictly increasing in $U$ for every $t\in\{t_0,t_1\}$, and every $s,d$.
\end{customass}
\begin{customass}{3}[Time invariance]\label{as:invariance2}
    Within every subgroup, the distribution of the latent variable $U$ does not change over time.
    That is, $\forall u, \forall s,d$,
    \[F_{U\vert S=s, D=d, T=t_1}(u) = F_{U\vert S=s, D=d, T=t_0}(u).\]
\end{customass}
Additionally, we prefer to formulate the overlap assumption in relation to the potential outcomes rather than the latent variable $U$. 
This preference arises from the broader accessibility and interpretability of the support of the outcome, as opposed to that of latent characteristics.
\begin{customass}{4}[Outcome support overlap]\label{as:sup2}
    The potential outcomes $\Y(t_0)$ and $\Y(t_1)$ are defined over domains $\mathbb{Y}_0$ and $\mathbb{Y}_1$, respectively, which are common across subgroups $S\in\{s_0,s_1\}, D\in\{d_0,d_1\}$.
    Moreover, $\mathbb{Y}_0\subseteq\mathbb{Y}_1$.
\end{customass}

Assumptions \ref{as:model2} through \ref{as:sup2} establish the existence of four distinct monotone maps that push forward the density of $\Y(t_0)$ to that of $\Y(t_1)$ in each subgroup.
In particular, let $T_{s,d}$ denote the monotone map that pushes forward the density of $\Y(t_0)$ in the group corresponding to $S=s, D=d$ to the density of $\Y(t_1)$ in the same group (see Figure \ref{fig:four}.)
Specifically, $T_{s,d}$ can be expressed in terms of the $h(\cdot)$ functions as\footnote{For the purposes of this section, Eq.~\eqref{eq:td3} can be considered as the definition of maps $T_{s,d}$.}
\begin{equation}\label{eq:td3}
    T_{s,d}(y) = h_{s,d}\big( h_{s,d}^{-1}(y; t_0) ; t_1\big),
\end{equation}
or equivalently, in terms of the cumulative density functions,
\begin{equation}\label{eq:td2}
    T_{s,d} = F^{-1}_{\Y(t_1)\cond s,d}\circ F_{\Y(t_0)\cond s,d}.
\end{equation}
It is noteworthy that under \ref{as:cons}, $F_{\Y(t)\mid s,t}=F_{Y(t)\mid s,t}$ for every subgroup except $\{s_1,d_1\}$.
Consider the map
$T^*_s=T_{s,d_1}\circ T^{-1}_{s,d_0}$.
$T^*_s$ can be interpreted as the non-linear drift between the maps from $\Y(t_0)$ to $\Y(t_1)$ among control and treatment groups in any state $s$ (see Figure \ref{fig:four}.)
Identification in CiC is achieved by assuming that there is no drift, i.e., $T^*_s$ is the identity map (see Eq.~\ref{eq:tid}).
We shall proceed by relaxing this assumption as follows.
\begin{assumption}[State-independent drifts]\label{as:stindep}
The drift between the mappings of potential outcomes $\Y(t_0)$ to $\Y(t_1)$ in the control and treatment groups is independent of the state.
Formally,
\[T^*\coloneqq T_{s_0,d_1}\circ T^{-1}_{s_0,d_0} \equiv T_{s_1,d_1}\circ T^{-1}_{s_1,d_0}.\]
\end{assumption}
Specifically, rather than assuming \emph{no drift}, we have relaxed the assumption to \emph{equal drift} across the two states.
\ref{as:stindep} can also be expressed in terms of the production functions, albeit at the cost of interpretability:
 \begin{multline}\label{eq:stindep}
            h_{s_0,d_1}\Big\{
            h_{s_0,d_1}^{-1}\big\{
            h_{s_0,d_0}\big(
            h_{s_0,d_0}^{-1}(\cdot;t_1)
            ;t_0\big)
            ;t_0\big\}
            ;t_1\Big\}
            =\\
            h_{s_1,d_1}\Big\{
            h_{s_1,d_1}^{-1}\big\{
            h_{s_1,d_0}\big(
            h_{s_1,d_0}^{-1}(\cdot;t_1)
            ;t_0\big)
            ;t_0\big\}
            ;t_1\Big\}.
    \end{multline}
We are now ready to state our identification result.
The complete set of proofs for our results can be found in Appendix \ref{apx:proofs}.

\begin{restatable}{theorem}{thmid}\label{thm:id}
    Under assumptions \ref{as:model2} - \ref{as:sup2} and \ref{as:cons} - \ref{as:stindep}, the cumulative density function of the missing counterfactual $\Y(t_1)$ in the group $S=s_1,D=d_1$ is identified as:
    \begin{equation}\label{eq:thm1}\begin{split}
        F_{\Y(t_1)}&_{\mid s_1,d_1}(y)
        =\\&
        F_{Y(t_0)\mid s_1,d_1}
        \circ
        T^{-1}_{s_0,d_1}
        \circ
        T_{s_0,d_0}
        \circ
        T^{-1}_{s_1,d_0}
        (
        y)
        ,
    \end{split}
    \end{equation}
where $T_{s,d}$ is given by Eq.~\eqref{eq:td2}.
\end{restatable}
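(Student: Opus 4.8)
The plan is to reduce the identification of the full counterfactual CDF to the identification of a single unobserved transport map, $T_{s_1,d_1}$, and then to recover that map from the three observable ones via the equal-drift restriction \ref{as:stindep}. First I would record what the data reveal. By \ref{as:cons}, in every cell other than $\{s_1,d_1\}$ the observed and untreated potential outcomes coincide at both time points, while in $\{s_1,d_1\}$ treatment is administered only after $t_0$, so $F_{\Y(t_0)\mid s_1,d_1}=F_{Y(t_0)\mid s_1,d_1}$ is still observed and only $F_{\Y(t_1)\mid s_1,d_1}$ is missing. Consequently the maps $T_{s_0,d_0}$, $T_{s_0,d_1}$ and $T_{s_1,d_0}$ defined in \eqref{eq:td2} are each identified directly from observed marginals, whereas $T_{s_1,d_1}$ is not.

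Next I would justify that each $T_{s,d}$ is a well-defined, strictly increasing bijection of the outcome domain and that it coincides with the structural evolution map \eqref{eq:td3}. This is the CiC-style step: under \ref{as:model2} and \ref{as:monotone2} the outcome at each time is a strictly increasing image of $U$, and by the time-invariance assumption \ref{as:invariance2} the law of $U$ is the same at $t_0$ and $t_1$ within a cell; hence the quantiles of $\Y(t_0)$ and $\Y(t_1)$ are indexed by the same $U$, which gives $h_{s,d}\big(h_{s,d}^{-1}(\,\cdot\,;t_0);t_1\big)=F^{-1}_{\Y(t_1)\mid s,d}\circ F_{\Y(t_0)\mid s,d}$. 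Absolute continuity of the densities together with strict monotonicity ensures the CDFs are continuous and strictly increasing, so the inverses exist and all compositions below are well defined; the overlap condition \ref{as:sup2} ($\mathbb{Y}_0\subseteq\mathbb{Y}_1$, common across cells) guarantees that the image of each intermediate map lies in the domain of the next.

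With the maps in hand, the counterfactual CDF satisfies $F_{\Y(t_1)\mid s_1,d_1}=F_{\Y(t_0)\mid s_1,d_1}\circ T^{-1}_{s_1,d_1}$, since $T_{s_1,d_1}$ pushes the law of $\Y(t_0)$ forward to that of $\Y(t_1)$ in that cell. It therefore remains to express $T_{s_1,d_1}$ through observables. Assumption \ref{as:stindep} equates the two drifts, $T_{s_1,d_1}\circ T^{-1}_{s_1,d_0}=T_{s_0,d_1}\circ T^{-1}_{s_0,d_0}$; right-composing with $T_{s_1,d_0}$ isolates $T_{s_1,d_1}=T_{s_0,d_1}\circ T^{-1}_{s_0,d_0}\circ T_{s_1,d_0}$, whose inverse is a composition of the three identified maps. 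Substituting this inverse into $F_{\Y(t_1)\mid s_1,d_1}=F_{\Y(t_0)\mid s_1,d_1}\circ T^{-1}_{s_1,d_1}$ and using $F_{\Y(t_0)\mid s_1,d_1}=F_{Y(t_0)\mid s_1,d_1}$ yields the claimed expression \eqref{eq:thm1}.

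I expect the main obstacle to be the second step rather than the final algebra: one must verify carefully that the quantile map \eqref{eq:td2} genuinely coincides with the structural map \eqref{eq:td3} and is a bijection, because this is exactly what licenses treating $T^*$ as a transportable object across states and what makes \ref{as:stindep} an identifying (rather than vacuous) restriction. The delicate points are the strict-monotonicity and continuity argument needed for invertibility, and the support bookkeeping from \ref{as:sup2} ensuring every composition appearing in $T^{-1}_{s_1,d_1}$ stays within the overlapping domains; once these are settled, the manipulation of the drift identity is immediate.
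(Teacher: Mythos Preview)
Your proposal is correct and follows essentially the same route as the paper's proof. The paper carries out the argument in terms of the production functions $h_{s,d}$, first deriving the key identity $h_{s,d}\big(h_{s,d}^{-1}(y;t);\tilde t\big)=F^{-1}_{\Y(\tilde t)\mid s,d}\circ F_{\Y(t)\mid s,d}(y)$ from \ref{as:model2}--\ref{as:invariance2} and then substituting into the production-function form \eqref{eq:stindep} of \ref{as:stindep}; you perform the equivalent argument directly at the level of the transport maps $T_{s,d}$, first arguing that the quantile map \eqref{eq:td2} coincides with the structural map \eqref{eq:td3} and then manipulating the drift identity $T_{s_1,d_1}\circ T^{-1}_{s_1,d_0}=T_{s_0,d_1}\circ T^{-1}_{s_0,d_0}$ to isolate $T_{s_1,d_1}$. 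The content, including the step you correctly flag as the delicate one, is the same.
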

\begin{remark}
    To avoid unnecessarily heavy notation, we did not discuss the observed covariates.
    However, an identical analysis can be done after adjusting for the observed covariates, $X$.
    In particular, production functions may depend on the observed covariates, as long as their monotonicity in $U$ is maintained for every $x$ in the domain of $X$. 
    \ref{as:invariance2}, \ref{as:sup2} and \ref{as:stindep} need to be valid conditioned on $X$ in this scenario, and Eq.~\eqref{eq:thm1} must hold when every term is conditioned on $X$.
\end{remark}
\begin{remark}
    We articulated our identifiability assumptions in accordance with the original work of \citet{athey2006identification}.
    An alternative way of presenting the assumptions would be to do it akin to the \emph{quantile-quantile equi-confounding bias} assumption proposed by \citet{ghassami2022combining}.
    In our context, this would translate to directly assuming the existence of monotone maps $T_{s,d}$ based on Eq.~\eqref{eq:td2} instead of drawing conclusions from \ref{as:model2}-\ref{as:sup2} to establish it.
\end{remark}
\begin{remark}
    We formulated production functions to model the potential outcomes under no treatment ($\Y$), whereas we left the other potential outcome, $Y^1$, unrestricted.
    Due to symmetry, one could model $Y^1$ using production functions and leave $\Y$ unrestricted.
    This scenario might arise for instance, in an study where 3 out of 4 cohorts receive treatment.
    One should exercise greater caution in such cases however, since in practice, the treatment may have effects that significantly alter the composition of the population under study.
    Under these circumstances, assuming monotone production functions for $Y^1$ could be a more drastic assumption.
\end{remark}
\subsection{Relaxing monotonicity}
As mentioned earlier, \ref{as:monotone2} (and its counterpart, \ref{as:monotone} in \citet{athey2006identification}) is an untestable assumption, and might be drastic to impose in certain applications.
The monotonicity of functions $h_{s,d}(\cdot;t)$ in $U$ has two implications:
(i) these functions are bijective, establishing a well-defined inverse for them;
(ii) the property that $\mathbbm{P}(h_{s,d}(U;t)\leq y)=\mathbbm{P}(U\leq h^{-1}_{s,d}(y;t))$, which is repeatedly utilized in proving the point identification result in Theorem \ref{thm:id} (see Appendix \ref{apx:proofs}).
While the bijectivity of  $h_{s,d}(\cdot;t)$ appears to be essential for our framework to work, we can relax (ii).
~Let us first rephrase the monotonicity assumption in the equivalent form:
\begin{equation}\label{eq:rephmono}
    \langle u_0- u_1, h_{s,d}(u_0;t)- h_{s,d}(u_1;t)\rangle\geq0,\quad \forall u_0,u_1.
\end{equation}
This assumption can be relaxed as follows.
\begin{assumption}[$\epsilon$-monotonicity]\label{as:asm}
    Functions $h^{-1}_{s,d}(\cdot;t)$ are well-defined. Additionally, for any $u$ in the support of $U$, 
    \begin{equation}\label{eq:asmonotone}\p\big(\langle U- u, h_{s,d}(U;t)- h_{s,d}(u;t)\rangle<0\mid s,d\big)\leq\frac{\epsilon}{2}.\end{equation}
\end{assumption}
For example, the monthly income of an individual in terms of her age after adjusting for the other covariates can be a $\epsilon$-monotone function.
In general, monthly income increases due to promotions and inflation.
However, temporary unemployment and retirement can affect this trend.
See Figure \ref{fig:eps} for a visualization.

\begin{figure}
    \centering
    \includegraphics[width=0.6\textwidth]{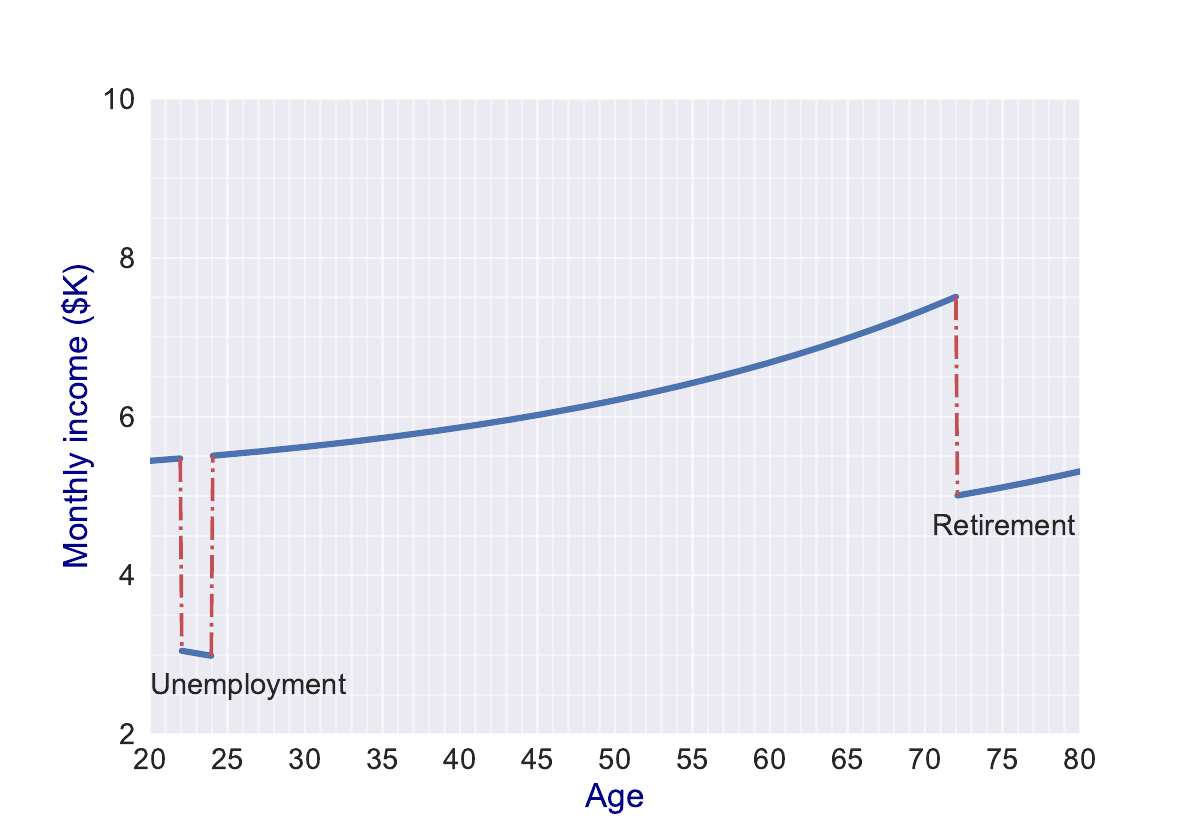}
    \caption{Gross monthly income versus age after adjusting for other covariates. }
    \label{fig:eps}
\end{figure}

We expect that under this relaxation, point identification cannot be achieved.
However, the following partial identification result holds.
\begin{restatable}{proposition}{prppartial}\label{prp:partial}
Under assumptions \ref{as:model2}, \ref{as:invariance2}, \ref{as:sup2} and \ref{as:cons} - \ref{as:asm}, for any $y\in\mathbb{Y}_1$,
\begin{equation*}
\begin{split}
        F_{Y(t_0)\mid s_1,d_1}&\circ
        \mathrm{\underline{\Phi}}^{-1}_{s_0,d_1}
        \circ
        \mathrm{\underline{\Phi}}_{s_0,d_0}
        \circ
        \mathrm{\underline{\Phi}}^{-1}_{s_1,d_0}
        (
        y)
        -\epsilon
        \\&\hspace{3cm}\leq
        F_{\Y(t_1)\mid s_1,d_1}(
        y)
        \leq\\
        &\hspace{4.5cm}F_{Y(t_0)\mid s_1,d_1}\circ
        \mathrm{\overline{\Phi}}^{-1}_{s_0,d_1}
        \circ
        \mathrm{\overline{\Phi}}_{s_0,d_0}
        \circ
        \mathrm{\overline{\Phi}}^{-1}_{s_1,d_0}
        (
        y)
        +\epsilon,
\end{split}
\end{equation*}
where 
\[
\begin{split}
\mathrm{\underline{\Phi}}_{s,d}(y)=F^{-1}_{Y(t_1)\vert s,d}\big(F_{Y(t_0)\vert s,d}(y)-\epsilon\big),\\
\mathrm{\underline{\Phi}}^{-1}_{s,d}(y)=F^{-1}_{Y(t_0)\vert s,d}\big(F_{Y(t_1)\vert s,d}(y)-\epsilon\big),\\
\mathrm{\overline{\Phi}}_{s,d}(y)=F^{-1}_{Y(t_1)\vert s,d}\big(F_{Y(t_0)\vert s,d}(y)+\epsilon\big),\\
\mathrm{\overline{\Phi}}^{-1}_{s,d}(y)=F^{-1}_{Y(t_0)\vert s,d}\big(F_{Y(t_1)\vert s,d}(y)+\epsilon\big).
\end{split}
\]
\end{restatable}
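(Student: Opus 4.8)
The plan is to retrace the proof of Theorem~\ref{thm:id}, but to replace every exact appeal to the rank-matching identity (property (ii), $\p(h_{s,d}(U;t)\le y\mid s,d)=\p(U\le h^{-1}_{s,d}(y;t)\mid s,d)$, which held under \ref{as:monotone2}) by a quantitative two-sided estimate valid under the weaker \ref{as:asm}. Writing $G_{s,d}$ for the conditional CDF of $U$ given $(s,d)$ -- time-invariant by \ref{as:invariance2} -- the workhorse is the estimate
\[
\bigl|\,F_{\Y(t)\mid s,d}(y)-G_{s,d}\bigl(h^{-1}_{s,d}(y;t)\bigr)\,\bigr|\le \tfrac{\epsilon}{2},\qquad(\star)
\]
for all $y,t,s,d$, where the production functions and their inverses exist by \ref{as:model2} and \ref{as:asm} and $h^{-1}_{s,d}(y;t)$ lies in the support of $U$ by \ref{as:sup2}. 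Everything after $(\star)$ is monotone bookkeeping, so once it is established the two-sided bound follows by composing inequalities.

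To prove $(\star)$, I would fix $u^\ast=h^{-1}_{s,d}(y;t)$, so that $h_{s,d}(u^\ast;t)=y$, and write
\begin{multline*}
F_{\Y(t)\mid s,d}(y)-G_{s,d}(u^\ast)=\p\bigl(h_{s,d}(U;t)\le y,\,U> u^\ast\mid s,d\bigr)\\
-\p\bigl(h_{s,d}(U;t)> y,\,U\le u^\ast\mid s,d\bigr).
\end{multline*}
Injectivity of $h_{s,d}(\cdot;t)$ (guaranteed by \ref{as:asm}) forces $h_{s,d}(U;t)=y$ only at $U=u^\ast$, so on each of the two events displayed the sign of $U-u^\ast$ is opposite to that of $h_{s,d}(U;t)-h_{s,d}(u^\ast;t)$; each event is therefore contained in the order-violation set of \ref{as:asm} and has probability at most $\epsilon/2$, while absolute continuity of the outcome disposes of the boundary $h_{s,d}(U;t)=y$. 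This yields $(\star)$.

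Next I would use $(\star)$ twice, at $t_0$ and at $t_1$, to sandwich the production-function maps $T_{s,d}$ of Eq.~\eqref{eq:td3}. For every observable subgroup $(s,d)\ne(s_1,d_1)$, where $F_{\Y(t)\mid s,d}=F_{Y(t)\mid s,d}$ by \ref{as:cons}, combining the two copies of $(\star)$ and applying the increasing map $F^{-1}_{Y(t_1)\mid s,d}$ gives $\underline{\Phi}_{s,d}(y)\le T_{s,d}(y)\le\overline{\Phi}_{s,d}(y)$; interchanging the roles of $t_0$ and $t_1$ gives $\underline{\Phi}^{-1}_{s,d}(y)\le T^{-1}_{s,d}(y)\le\overline{\Phi}^{-1}_{s,d}(y)$. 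Here it is essential that \ref{as:stindep} be read in its production-function form \eqref{eq:stindep}, which remains exact under $\epsilon$-monotonicity and yields $T^{-1}_{s_1,d_1}=T^{-1}_{s_0,d_1}\circ T_{s_0,d_0}\circ T^{-1}_{s_1,d_0}$; it is precisely the failure of $T_{s,d}$ (the $h$-map) to coincide with $F^{-1}_{Y(t_1)\mid s,d}\circ F_{Y(t_0)\mid s,d}$ that turns the point identification of Theorem~\ref{thm:id} into the present interval. Since each $\underline{\Phi}_{s,d},\overline{\Phi}_{s,d}$ and their inverses are compositions of increasing functions, I can push the per-map sandwiches through the three-fold composition to obtain
\[
\underline{\Phi}^{-1}_{s_0,d_1}\!\circ\underline{\Phi}_{s_0,d_0}\!\circ\underline{\Phi}^{-1}_{s_1,d_0}(y)\ \le\ T^{-1}_{s_1,d_1}(y)\ \le\ \overline{\Phi}^{-1}_{s_0,d_1}\!\circ\overline{\Phi}_{s_0,d_0}\!\circ\overline{\Phi}^{-1}_{s_1,d_0}(y).
\]

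Finally, to pass from $T^{-1}_{s_1,d_1}$ to the counterfactual CDF, I would apply $(\star)$ once more inside the treated group. With $y_0=T^{-1}_{s_1,d_1}(y)$ and $p=G_{s_1,d_1}\bigl(h^{-1}_{s_1,d_1}(y;t_1)\bigr)$, the estimate at $t_1$ gives $\lvert F_{\Y(t_1)\mid s_1,d_1}(y)-p\rvert\le\epsilon/2$ and at $t_0$ gives $\lvert F_{Y(t_0)\mid s_1,d_1}(y_0)-p\rvert\le\epsilon/2$, so $\lvert F_{\Y(t_1)\mid s_1,d_1}(y)-F_{Y(t_0)\mid s_1,d_1}(y_0)\rvert\le\epsilon$; this is the source of the external $\pm\epsilon$. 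Composing the increasing map $F_{Y(t_0)\mid s_1,d_1}$ with the sandwich for $y_0$ then delivers the proposition. The main obstacle is $(\star)$ -- turning the inner-product order-violation bound of \ref{as:asm} into a uniform discrepancy between the outcome CDF and the pushed-forward latent CDF, and tracking the constant so that each pairing of applications contributes exactly the single $\epsilon$ that sits inside each $\Phi$ and the external $\epsilon$. The only remaining nuisance is keeping $F_{Y(t)\mid s,d}(\cdot)\pm\epsilon$ inside $[0,1]$ so that $\underline{\Phi},\overline{\Phi}$ are well defined, which is handled by the standard boundary conventions for generalized inverses.
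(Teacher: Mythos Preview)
Your proposal is correct and follows essentially the same approach as the paper. Your key estimate $(\star)$ is exactly the paper's inequality \eqref{eq:prpgen1pp}, derived by the same splitting into order-preserving and order-violating events; your sandwich $\underline{\Phi}_{s,d}\le T_{s,d}\le\overline{\Phi}_{s,d}$ is the paper's \eqref{eq:prpgenineq}; and the composition step---using that the $\underline{\Phi},\overline{\Phi}$ bounds (not the $T$ maps) are monotone so the per-map sandwiches propagate---is precisely the chain of inequalities the paper writes out after \eqref{eq:prpgenineq}, working from the inside out so the remaining outer layers are always $F$-based and hence increasing.

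The only organizational difference is that the paper does not prove Proposition~\ref{prp:partial} directly: it first establishes the more general Proposition~\ref{prp:partialgen}, which simultaneously relaxes \ref{as:monotone2} to $\epsilon$-monotonicity and \ref{as:invariance2} to $\delta$-invariance (yielding $\epsilon+\delta$ in every slot), and then obtains Proposition~\ref{prp:partial} as the corollary $\delta=0$. Your direct argument is the $\delta=0$ specialization of that same proof.
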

\begin{remark}
    Note that strict monotonicity is a special case of \ref{as:asm}, which corresponds to $\epsilon=0$.
    Accordingly, Proposition \ref{prp:partial} reduces to Theorem \ref{thm:id} with the choice of $\epsilon=0$.
\end{remark}
\begin{remark}
    An analogous result applies to the original framework of CiC when \ref{as:monotone} is relaxed to $\epsilon$-monotonicity.
    See Appendix \ref{apx:CiC} for details.
\end{remark}
\subsection{Relaxing time invariance}
In certain applications, \ref{as:invariance2} may also be violated.
In particular, in repeated cross-sections, it might be challenging to maintain the same distribution among the cases under study.
Even if possible, this may both reduce the number of available samples and result in selection bias, making inference more complicated.
As such, we consider a relaxation of \ref{as:invariance2} and derive a partial identification result in this setting.
We assume that the distribution of the latent variable may change over time, but this change is bounded in Kolmogorov (aka KS) distance \cite{kolmogorov1933sulla}.
\begin{restatable}[$\delta$-invariance]{assumption}{asdelta}\label{as:delta}
    Within every subgroup, the Kolmogorov distance of the distribution of the latent variable across $T=t_0$ and $T=t_1$ is bounded by $\delta$.
     That is,
    \[\sup_u\big\vert F_{U\vert S=s,D=d,T=t_1}(u)-F_{U\vert S=s,D=d,T=t_0}(u)\big\vert\leq\delta.\]
\end{restatable}
Note again that when $\delta=0$, \ref{as:delta} reduces to \ref{as:invariance2}.
\begin{restatable}{proposition}{prpdelta}\label{prp:delta}
    Under assumptions \ref{as:model2}, \ref{as:monotone2}, \ref{as:delta}, \ref{as:sup2}, and \ref{as:cons}-\ref{as:stindep}, for any $y\in\mathbb{Y}_1$,
\begin{equation*}\small
\begin{split}
        F_{Y(t_0)\mid s_1,d_1}&\circ
        \mathrm{\underline{\Psi}}^{-1}_{s_0,d_1}
        \circ
        \mathrm{\underline{\Psi}}_{s_0,d_0}
        \circ
        \mathrm{\underline{\Psi}}^{-1}_{s_1,d_0}
        (
        y)
        -\delta
        \\&\leq
        F_{\Y(t_1)\mid s_1,d_1}(
        y)
        \leq\\
        &F_{Y(t_0)\mid s_1,d_1}\circ
        \mathrm{\overline{\Psi}}^{-1}_{s_0,d_1}
        \circ
        \mathrm{\overline{\Psi}}_{s_0,d_0}
        \circ
        \mathrm{\overline{\Psi}}^{-1}_{s_1,d_0}
        (
        y)
        +\delta,
\end{split}
\end{equation*}
where
\[
\begin{split}
\mathrm{\underline{\Psi}}_{s,d}(y)=F^{-1}_{Y(t_1)\vert s,d}\big(F_{Y(t_0)\vert s,d}(y)-\delta\big),\\
\mathrm{\underline{\Psi}}^{-1}_{s,d}(y)=F^{-1}_{Y(t_0)\vert s,d}\big(F_{Y(t_1)\vert s,d}(y)-\delta\big),\\
\mathrm{\overline{\Psi}}_{s,d}(y)=F^{-1}_{Y(t_1)\vert s,d}\big(F_{Y(t_0)\vert s,d}(y)+\delta\big),\\
\mathrm{\overline{\Psi}}^{-1}_{s,d}(y)=F^{-1}_{Y(t_0)\vert s,d}\big(F_{Y(t_1)\vert s,d}(y)+\delta\big).
\end{split}
\]
\end{restatable}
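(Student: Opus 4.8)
The plan is to mirror the proof of Theorem~\ref{thm:id}, replacing each exact equality that in the point-identified case comes from time invariance by a two-sided inequality controlled by $\delta$. Write $F_{U\mid s,d,t}$ for the CDF of the latent variable in subgroup $(s,d)$ at time $t$, and let $T_{s,d}$ denote the production map of Eq.~\eqref{eq:td3}, which is the object for which \ref{as:stindep} is assumed; note that the maps $\mathrm{\underline{\Psi}}_{s,d},\mathrm{\overline{\Psi}}_{s,d}$ in the statement are instead built from the \emph{observed} CDFs, and under \ref{as:delta} these no longer coincide with $T_{s,d}$. The workhorse of the argument is that, by the strict monotonicity of $h_{s,d}(\cdot;t)$ in $U$ (\ref{as:monotone2}), every observed CDF translates to the latent scale: $F_{Y(t)\mid s,d}(y)=F_{U\mid s,d,t}\big(h_{s,d}^{-1}(y;t)\big)$ and, dually, $F^{-1}_{Y(t)\mid s,d}(p)=h_{s,d}\big(F^{-1}_{U\mid s,d,t}(p);t\big)$.

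\emph{Step 1 (bracketing the production maps).} I would first show that for each of the three observed cells $(s,d)\in\{(s_0,d_1),(s_0,d_0),(s_1,d_0)\}$ the $\delta$-shifted maps sandwich the production map and its inverse pointwise,
\[\mathrm{\underline{\Psi}}_{s,d}(y)\le T_{s,d}(y)\le \mathrm{\overline{\Psi}}_{s,d}(y),\qquad \mathrm{\underline{\Psi}}^{-1}_{s,d}(y)\le T^{-1}_{s,d}(y)\le \mathrm{\overline{\Psi}}^{-1}_{s,d}(y).\]
Substituting the two workhorse identities into the definition of $\mathrm{\overline{\Psi}}_{s,d}$ gives $\mathrm{\overline{\Psi}}_{s,d}(y)=h_{s,d}\big(F^{-1}_{U\mid s,d,t_1}(F_{U\mid s,d,t_0}(u)+\delta);t_1\big)$ with $u=h_{s,d}^{-1}(y;t_0)$, whereas $T_{s,d}(y)=h_{s,d}(u;t_1)$. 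The inequality then reduces to $F^{-1}_{U\mid s,d,t_1}(F_{U\mid s,d,t_0}(u)+\delta)\ge u$, which is immediate from \ref{as:delta} (i.e.\ $F_{U\mid s,d,t_1}(u)\le F_{U\mid s,d,t_0}(u)+\delta$) together with the monotonicity of $h_{s,d}(\cdot;t_1)$ and $F^{-1}_{U\mid s,d,t_1}$. The remaining three inequalities are verbatim computations.

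\emph{Step 2 (residual $\delta$ in the target cell) and Step 3 (propagation).} The one place where time invariance is genuinely unavailable is the target cell $(s_1,d_1)$. Using the workhorse identities with $w=h_{s_1,d_1}^{-1}(y;t_1)$, one has $F_{\Y(t_1)\mid s_1,d_1}(y)=F_{U\mid s_1,d_1,t_1}(w)$ and $F_{\Y(t_0)\mid s_1,d_1}\big(T^{-1}_{s_1,d_1}(y)\big)=F_{U\mid s_1,d_1,t_0}(w)$, so \ref{as:delta} yields $\big|F_{\Y(t_1)\mid s_1,d_1}(y)-F_{\Y(t_0)\mid s_1,d_1}(T^{-1}_{s_1,d_1}(y))\big|\le\delta$; this is the source of the additive $\pm\delta$ in the statement. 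By \ref{as:stindep}, exactly as in Theorem~\ref{thm:id}, $T^{-1}_{s_1,d_1}=T^{-1}_{s_0,d_1}\circ T_{s_0,d_0}\circ T^{-1}_{s_1,d_0}$. I would then push the Step-1 brackets through this composition, using that every map present—production maps, the $\mathrm{\Psi}$-maps, $F_{\Y(t_0)\mid s_1,d_1}$, and all inverses—is monotone increasing. Working from the innermost factor outward, each stage either applies an increasing map to an already-bracketed argument or replaces a factor by its pointwise bound, giving $T^{-1}_{s_1,d_1}(y)\le \mathrm{\overline{\Psi}}^{-1}_{s_0,d_1}\circ\mathrm{\overline{\Psi}}_{s_0,d_0}\circ\mathrm{\overline{\Psi}}^{-1}_{s_1,d_0}(y)$ and the symmetric lower inequality. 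Applying the increasing map $F_{\Y(t_0)\mid s_1,d_1}$ and adding the residual $\pm\delta$ from Step~2 produces the two claimed bounds.

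The step demanding the most care is the propagation in Step~3: because the chain alternates forward maps with inverses, I must verify at each composition that the correct (over- versus under-) bracket is applied to both the forward and the inverse factors and that monotonicity is invoked in the right direction, so that no inequality silently flips. A secondary technicality is boundary handling, since the arguments $F_{U\mid s,d,t}(\cdot)\pm\delta$ (equivalently $F_{Y(t)\mid s,d}(\cdot)\pm\delta$) may leave $[0,1]$; the generalized inverses $F^{-1}$ should be read with arguments clipped to $[0,1]$ (i.e.\ extended by the endpoints of the support), after which the bounds remain valid and collapse to Theorem~\ref{thm:id} at $\delta=0$.
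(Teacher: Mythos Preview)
Your approach is correct and is essentially the $\epsilon=0$ specialization of the paper's proof of the more general Proposition~\ref{prp:partialgen}. The paper itself proves Proposition~\ref{prp:delta} as a one-line corollary: strict monotonicity \ref{as:monotone2} implies \ref{as:eps} with $\epsilon=0$, and the result follows from Proposition~\ref{prp:partialgen}. Your direct argument reproduces the key steps of that general proof---the sandwich inequality \eqref{eq:prpgenineq} (your Step~1 and Step~2 combined), the rewriting of \ref{as:stindep} in terms of the production-map composition, and the monotone propagation of the brackets through the chain (your Step~3)---specialized to the case where monotonicity is exact. The only organizational difference is that you separate the three observed cells (Step~1) from the target cell (Step~2), whereas the paper handles all four uniformly via \eqref{eq:prpgenineq} and only singles out $(s_1,d_1)$ at the very end; this is purely cosmetic. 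Your remark on clipping the arguments of $F^{-1}$ to $[0,1]$ is a technicality the paper leaves implicit.
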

\begin{remark}
Even more generally, we can simultaneously relax \ref{as:monotone2} to \ref{as:asm} and \ref{as:invariance2} to \ref{as:delta}. 
See Proposition \ref{prp:partialgen} in Appendix \ref{apx:partial} for the partial identification result for this case.
\end{remark}
\subsection{Marginal contrasts and  joint counterfactuals}\label{sec:func}
Theorem \ref{thm:id} guarantees the identification of the probability density of the missing counterfactual, $\Y(t_1)$ in the group corresponding to $S=s_1,D=d_1$.
Having access to this density, we can compute any \emph{marginal contrast estimand} \cite{franks2019flexible}.
A marginal contrast estimand is an estimand that can be expressed as a functional of the marginal distribution of the counterfactual outcomes.
This includes a vast majority of commonly used estimands, such as average treatment effects, conditional average treatment effects, quantile treatment effects, risk ratios, etc.
However, in certain applications, more information is desired.
Examples of estimands that are not marginal contrasts include the \emph{distribution} of the treatment effect, individual-level treatment effects, or the quantiles of the treatment effect.
As a concrete example, consider $Z \coloneqq Y^1(t_1) - \Y(t_1)$.
The density of $Z$ is not identifiable from merely the marginal densities of $Y^1(t_1)$ and $\Y(t_0)$.
In particular, the treatment may have non-zero effects on a fraction of the population even if the two potential outcomes have identical distributions.
Estimands that are not marginal contrasts require stronger identifiability assumptions in general.
In this section, we discuss a stronger version of \ref{as:invariance2} that can lead to the identification of such estimands.

\begin{customassu}{3}[Strong time invariance]\label{as:invariance3}
The latent variable $U$ does not change over time, and there is no loss to follow-up.
\end{customassu}

\ref{as:invariance3} is stronger than \ref{as:invariance2}, in the sense that it completely rules out the possibility of any changes in the latent variable itself, or the cohort of the study.
In contrast, \ref{as:invariance2} would allow for loss to follow-up, as long as similar individuals were recruited for the study.
\ref{as:invariance2} even accommodates a repeated cross-sections study.
\ref{as:invariance2} also allows for the evolution of the latent variable, as long as its distribution among the study population remains the same.
On the other hand, the stronger assumption \ref{as:invariance3} allows for the identification of a wider range of causal estimands.
In particular, under \ref{as:invariance3}, the joint density of the counterfactuals $\big(\Y(t_1), Y^{1}(t_1)\big)$ is identified.

\begin{restatable}{proposition}{prpjoint}\label{prp:joint}
    Under assumptions \ref{as:model2}, \ref{as:monotone2}, \ref{as:invariance3}, \ref{as:sup2}, and \ref{as:cons} - \ref{as:stindep}, the joint density of $\Y(t_1)$ and $Y^1(t_1)$ in the group corresponding to $S=s_1, D=d_1$ is identified as
    \begin{equation*}
    \begin{split}
        &F_{\Y(t_1), Y^1(t_1)\mid s_1,d_1}(y^0,y^1) 
        =\\&
        F_{Y(t_0), Y(t_1)\mid s_1,d_1}\big(F^{-1}_{Y(t_0)\mid s_1,d_1} \circ F_{\Y(t_1)\mid s_1,d_1}(y^0),y^1\big),
    \end{split}\end{equation*}
    where $F_{\Y(t_1)\mid s_1,d_1}(\cdot)$ is given by Equation \eqref{eq:thm1}.
\end{restatable}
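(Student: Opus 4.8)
The plan is to exploit the genuine panel structure guaranteed by \ref{as:invariance3}: since the latent variable $U$ is literally fixed across time and there is no loss to follow-up, each unit in the group $S=s_1,D=d_1$ carries a single value of $U$ that generates both its pre-treatment outcome $Y(t_0)=\Y(t_0)=h_{s_1,d_1}(U;t_0)$ and its counterfactual post-treatment outcome $\Y(t_1)=h_{s_1,d_1}(U;t_1)$. By \ref{as:monotone2} the two production functions are strictly increasing, so eliminating $U$ yields the deterministic, strictly increasing relationship $\Y(t_1)=T_{s_1,d_1}\big(\Y(t_0)\big)=T_{s_1,d_1}\big(Y(t_0)\big)$, with $T_{s_1,d_1}$ as in Eq.~\eqref{eq:td3}. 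This is the crucial consequence of strong time invariance: it is not merely the marginal law of $\Y(t_1)$ that is pinned down, but the value of $\Y(t_1)$ for the very same unit whose observed trajectory is $(Y(t_0),Y(t_1))$.

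First I would record, via \ref{as:cons}, that in the group $S=s_1,D=d_1$ the observable pair $(Y(t_0),Y(t_1))$ equals $(\Y(t_0),Y^1(t_1))$, so its joint law $F_{Y(t_0),Y(t_1)\mid s_1,d_1}$ is directly identified from data. Next I would use the deterministic coupling above to express the target joint distribution as a pushforward: because $Y^1(t_1)=Y(t_1)$ and $\Y(t_1)=T_{s_1,d_1}(Y(t_0))$ hold simultaneously for each unit, the map $(y_0,y_1)\mapsto\big(T_{s_1,d_1}(y_0),y_1\big)$ transports the observed joint law of $(Y(t_0),Y(t_1))$ onto the desired joint law of $(\Y(t_1),Y^1(t_1))$.

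The computation then reduces to a change of variables at the level of CDFs. Evaluating
\[
F_{\Y(t_1),Y^1(t_1)\mid s_1,d_1}(y^0,y^1)=\p\big(T_{s_1,d_1}(Y(t_0))\le y^0,\;Y(t_1)\le y^1\mid s_1,d_1\big),
\]
and using that $T_{s_1,d_1}$ is strictly increasing to rewrite the first event as $Y(t_0)\le T^{-1}_{s_1,d_1}(y^0)$, gives $F_{Y(t_0),Y(t_1)\mid s_1,d_1}\big(T^{-1}_{s_1,d_1}(y^0),y^1\big)$. It remains to identify $T^{-1}_{s_1,d_1}$. From Eq.~\eqref{eq:td2} together with $F_{\Y(t_0)\mid s_1,d_1}=F_{Y(t_0)\mid s_1,d_1}$ one has $T_{s_1,d_1}=F^{-1}_{\Y(t_1)\mid s_1,d_1}\circ F_{Y(t_0)\mid s_1,d_1}$, hence $T^{-1}_{s_1,d_1}=F^{-1}_{Y(t_0)\mid s_1,d_1}\circ F_{\Y(t_1)\mid s_1,d_1}$. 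Substituting in the counterfactual marginal $F_{\Y(t_1)\mid s_1,d_1}$ supplied by Theorem~\ref{thm:id} yields exactly the claimed expression.

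The main obstacle — or rather the point requiring the most care — is justifying the deterministic coupling itself, and making explicit why \ref{as:invariance3} (and not the weaker \ref{as:invariance2}) is indispensable. Under \ref{as:invariance2} the latent variable is only required to have a time-invariant \emph{distribution}, so a unit observed at $t_1$ need not be the unit observed at $t_0$; this suffices to recover the marginal of $\Y(t_1)$ as in Theorem~\ref{thm:id}, but it severs the within-unit link between $\Y(t_1)$ and $Y^1(t_1)$, leaving the joint law unidentified. I would therefore argue that strong time invariance provides the matched panel on which the map $T_{s_1,d_1}$ acts unit-by-unit, and note that the invertibility of $T_{s_1,d_1}$ and of the relevant marginal CDFs follows from strict monotonicity together with the absolute-continuity and compact-support regularity assumed throughout.
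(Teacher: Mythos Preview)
Your proposal is correct and follows essentially the same route as the paper: establish the unit-level deterministic relation $\Y(t_1)=T_{s_1,d_1}\big(Y(t_0)\big)$ from \ref{as:invariance3} and \ref{as:monotone2}, invert it inside the joint CDF, and then express $T_{s_1,d_1}^{-1}=F^{-1}_{Y(t_0)\mid s_1,d_1}\circ F_{\Y(t_1)\mid s_1,d_1}$ with the counterfactual marginal supplied by Theorem~\ref{thm:id}. Your explicit discussion of why \ref{as:invariance3} (rather than \ref{as:invariance2}) is needed for the within-unit coupling is a nice addition that the paper's proof leaves implicit.
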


\section{Inference}\label{sec:inf}
In this section, we discuss the estimation aspect of our framework with a focus on the average effect of treatment on the treated.
More formally, we consider the estimation of 
\[\tau\coloneqq\E{Y^1(t_1)-\Y(t_1)\vert S=s_1,D=d_1}.\]
Note that under the identifiability assumptions of Section \ref{sec:one}, $\tau$ is identified as
\begin{multline}\label{eq:tau}
\tau = \E{Y(t_1)\vert s_1,d_1} - 
\E{F^{-1}_{Y(t_1)\mid s_0,d_1}
        \circ\\
        F_{Y(t_0)\mid s_0,d_1}
        \circ
        F^{-1}_{Y(t_0)\mid s_0,d_0}
        \circ
        F_{Y(t_1)\mid s_0,d_0}
        \circ\\
        F^{-1}_{Y(t_1)\mid s_1,d_0}
        \circ
        F_{Y(t_0)\mid s_1,d_0}\big(
        Y(t_0)
        \big)\vert s_1,d_1}.
\end{multline}

We make the following assumption on the data generating mechanism to render estimation feasible.
\begin{assumption}\label{as:estimation}
    Conditioned on $S=s,D=d,T=t$, variables $Y(t)$ are continuous random variables defined on a shared bounded domain $[\underline{y},\overline{y}]$, with continuously differentiable density functions $f_{sdt}$, where $f_{sdt}$ is bounded from above and away from $0$, and $\partial f_{sdt}/\partial y$ is bounded.
    For all $s,d,t$, $p_{sdt}=\mathbb{P}(S\!=s,D\!=d,T\!=t)>0$, and given $s,d,t$, the samples $Y_i(t)$ are independent draws from $f_{sdt}$.
\end{assumption}
Let $N$ be the total number of observed outcome samples.
Akin to \citet{athey2006identification}, we build a finite-sample estimator for $\tau$ based on empirical estimators of cumulative density functions.
In particular, let $\{Y_{sd,i}(t)\}_{i=1}^{N_{sdt}}$ denote the independent samples of the outcome at time $t$ in group $S=s, D=d$.
We define
\begin{equation}\label{eq:hatf}
    \hat{F}_{Y(t)\vert s,d}(y) \coloneqq N_{sdt}^{-1}\sum_{i=1}^{N_{sdt}}\ind{Y_{sd,i}(t)\leq y},\quad\text{and,}
\end{equation}
\begin{equation}\label{eq:hatfinv}
    \hat{F}^{-1}_{Y(t)\vert s,d}(u) \coloneqq \inf\{y\in\mathbb{Y}_t, \hat{F}_{Y(t)\vert s,d}(y)\geq u\}.
\end{equation}
Finally, the estimator for $\tau$ is built as:
\begin{multline}\label{eq:hattau}
    \hat{\tau}\coloneqq
    N^{-1}_{s_1d_1t_1}\sum_{i=1}^{N_{s_1d_1t_1}}Y_{s_1d_1,i}(t_1) - \\N^{-1}_{s_1d_1t_0}\sum_{i=1}^{N_{s_1d_1t_0}}
    \hat{F}^{-1}_{Y(t_1)\mid s_0,d_1}
        \circ
        \hat{F}_{Y(t_0)\mid s_0,d_1}
        \circ\\
        \hat{F}^{-1}_{Y(t_0)\mid s_0,d_0}
        \circ
        \hat{F}_{Y(t_1)\mid s_0,d_0}
        \circ
        \hat{F}^{-1}_{Y(t_1)\mid s_1,d_0}
        \circ\\
        \hat{F}_{Y(t_0)\mid s_1,d_0}\big(
        Y_{s_1d_1,i}(t_0)
        \big).
\end{multline}
The following theorem establishes the consistency and asymptotic normality of $\hat{\tau}$.
\begin{restatable}{theorem}{thmconsistency}\label{thm:consistency}
    Under \ref{as:estimation}, $\hat{\tau}-\tau = \mathcal{O}_p(N^{-\frac{1}{2}})$, and \begin{multline}
        \sqrt{N}(\hat{\tau}-\tau)\overset{D}{\rightarrow}\mathcal{N}(0,
    \dfrac{V_0}{p_{s_1d_1t_1}}+
    \dfrac{V_1}{p_{s_0d_1t_1}}+
    \dfrac{V_2}{p_{s_0d_1t_0}}+\\
    \dfrac{V_3}{p_{s_0d_0t_0}}+
    \dfrac{V_4}{p_{s_0d_0t_1}}+
    \dfrac{V_5}{p_{s_1d_0t_1}}+
    \dfrac{V_6}{p_{s_1d_0t_0}}+
    \dfrac{V_7}{p_{s_1d_1t_0}}
    ),
    \end{multline} where $\{V_i\}_{i=0}^7$ are given by Eq.~\eqref{eq:varianceterms}.
\end{restatable}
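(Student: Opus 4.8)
The estimator $\hat\tau$ in \eqref{eq:hattau} is a difference of two sample averages, so I would show it is asymptotically linear — admits an influence-function representation $\sqrt N(\hat\tau-\tau)=N^{-1/2}\sum\psi+o_p(1)$ — and then invoke a triangular-array CLT. Write $\hat\tau-\tau=A_N-B_N$, where $A_N=N^{-1}_{s_1d_1t_1}\sum_i Y_{s_1d_1,i}(t_1)-\E{Y(t_1)\cond s_1,d_1}$ and $B_N$ is the second (plug-in) term of \eqref{eq:hattau} minus $\E{\Phi(Y(t_0))\cond s_1,d_1}$, with $\Phi$ the composition of true CDFs in \eqref{eq:tau} and $\hat\Phi$ its empirical counterpart. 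The term $A_N$ is a clean i.i.d.\ average over the group $(s_1,d_1,t_1)$; by \ref{as:estimation} and the CLT, $\sqrt{N_{s_1d_1t_1}}\,A_N\to\mathcal N(0,V_0)$ with $V_0=\mathrm{Var}(Y(t_1)\cond s_1,d_1)$, which after rescaling by $N/N_{s_1d_1t_1}\to p^{-1}_{s_1d_1t_1}$ gives the first variance contribution.

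The substance is in $B_N$. Writing $P$ and $\hat P$ for the true and empirical laws of $Y(t_0)$ in $(s_1,d_1)$, I would split
\[
B_N=\underbrace{\int\Phi\,d(\hat P-P)}_{(\mathrm{I})}+\underbrace{\int(\hat\Phi-\Phi)\,dP}_{(\mathrm{II})}+\underbrace{\int(\hat\Phi-\Phi)\,d(\hat P-P)}_{(\mathrm{III})}.
\]
Term $(\mathrm{I})$ is an i.i.d.\ average of the fixed function $\Phi$ over the group $(s_1,d_1,t_0)$, contributing $V_7/p_{s_1d_1t_0}$ by the CLT. Term $(\mathrm{III})$ is a product of two independent factors each $\mathcal O_p(N^{-1/2})$ (using $\hat\Phi-\Phi=\mathcal O_p(N^{-1/2})$ uniformly and the $\mathcal O_p(N^{-1/2})$ fluctuation of the recentered empirical measure), hence $o_p(N^{-1/2})$ and negligible.

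Term $(\mathrm{II})$ is where the six remaining groups enter. I would establish Hadamard differentiability of the map $(F_{s_0d_1t_1},F_{s_0d_1t_0},F_{s_0d_0t_0},F_{s_0d_0t_1},F_{s_1d_0t_1},F_{s_1d_0t_0})\mapsto\Phi$ at the true CDFs, via the chain rule for composition together with Hadamard differentiability of the inversion map $F\mapsto F^{-1}$. Assumption \ref{as:estimation} — densities continuously differentiable, bounded above and \emph{away from zero} on the common compact support — is exactly what guarantees differentiability of the quantile maps (the derivative of $F^{-1}$ at $u$ being $1/f(F^{-1}(u))$) and a valid uniform Bahadur-type expansion. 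The functional delta method then yields $\hat\Phi-\Phi=\sum_{(s,d,t)}D_{sdt}[\hat F_{sdt}-F_{sdt}]+o_p(N^{-1/2})$, where each $D_{sdt}$ is a linear functional carrying the chain-rule density factors evaluated at the appropriate pushed-forward points. Integrating against the fixed measure $P$ turns each summand into a centered i.i.d.\ average $N^{-1}_{sdt}\sum_j\psi_{sdt}(Y_{sd,j}(t))$ with $\E{\psi_{sdt}}=0$; these are the six influence terms whose variances are $V_1,\dots,V_6$.

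Collecting the pieces represents $\sqrt N(\hat\tau-\tau)$ as a sum over the eight subgroups $(s,d,t)$ of standardized averages $\sqrt{N/N_{sdt}}\cdot N_{sdt}^{-1/2}\sum_j\psi_{sdt}$. Since $N_{sdt}/N\to p_{sdt}>0$ and, conditional on group memberships, samples in distinct groups are independent, the eight contributions are asymptotically independent mean-zero Gaussians with variances $V_i/p_{sdt}$, whose sum yields both the $\mathcal O_p(N^{-1/2})$ rate and the stated limiting variance. The main obstacle I anticipate is the rigorous control of the composition through the three \emph{inverse} CDFs: proving the Hadamard/Bahadur expansion for the quantile process uniformly over the support and verifying that the delta-method remainder and the cross term $(\mathrm{III})$ are genuinely $o_p(N^{-1/2})$ — the step that truly consumes the smoothness and positivity hypotheses of \ref{as:estimation}, the bookkeeping of the chain-rule factors being routine by comparison.
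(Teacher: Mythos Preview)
Your proposal is correct and follows essentially the same strategy as the paper: linearize the nested composition of empirical CDFs and quantile functions into a sum of eight group-specific i.i.d.\ averages (your $A_N$, term $(\mathrm{I})$, and the six pieces of $(\mathrm{II})$), show the remainder (your $(\mathrm{III})$ together with the delta-method error) is $o_p(N^{-1/2})$, and apply the CLT across the independent subgroups. The paper packages the linearization as an explicit chain of Bahadur-type expansions (its Lemma~\ref{lem:linear} and the supporting lemmas of Appendix~\ref{apx:lemmas}, in the style of \citet{athey2006identification}) rather than invoking Hadamard differentiability and the functional delta method by name, but the underlying argument is the same.
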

The expression for the variance and the discussion on its estimation are postponed to Appendix \ref{apx:asymptotic}.
\section{Optimal Transport Representation and High-dimensional Extension} \label{sec:ot}
So far, we focused on the special case of scalar outcome for ease of presentation.
In this section, we discuss the generalization of our results to cover multi-dimensional outcomes.
To this end, we first review the one-dimensional case from an optimal transport point of view.
Let $\eta_{s,d}$ and $\mu_{s,d}$ denote the probability measures over $\Y(t_0)$ and $\Y(t_1)$ in the group corresponding to $S=s,D=d$, respectively.
Brenier's theorem \cite{brenier1991polar} implies that there exists a unique mapping $T$ such that $T_\#\eta_{s,d}=\mu_{s,d}$, i.e., $T$ pushes forward $\eta_{s,d}$ to $\mu_{s,d}$, and $T$ is the gradient of a convex function.
Moreover, $T$ is the optimal transport map with quadratic cost.
More precisely, let $\Gamma$ denote the space of joint distributions over $\big(\Y(t_0),\Y(t_1)\big)$ conditioned on $S=s, D=d$, that agree with the marginal densities $\eta_{s,d}$ and $\mu_{s,d}$.
The optimization problem
\begin{equation}
    \inf_{\gamma\in\Gamma}\int\vert\vert y_0-y_1\vert\vert^2d\gamma(y_0,y_1)
\end{equation}
has a unique solution $\gamma^*$, where
    $\big(\Y(t_0),\Y(t_1)\big)\sim \gamma^*$ if and only if $\Y(t_0)\sim \eta_{s,d}$ and $\Y(t_1) = T\big(\Y(t_0)\big)$, $\eta_{s,d}-a.s$.\footnote{Note that we have omitted the implicit conditioning on $S=s,D=d$ in our notation to improve readability.}
It is straightforward to verify that a one-dimensional function is the gradient of a convex function if and only if it is monotone.
Therefore, the monotonicity of $h_{s,d}(\cdot)$ and the monotonicity of $T_{s,d}(\cdot)$ (as a consequence of the latter) imply $T_{s,d}\equiv T$.
In other words, the mapping $T_{s,d}$ is identified as the optimal transport map with quadratic cost that pushes forward $\eta_{s,d}$ to $\mu_{s,d}$.

Indeed, monotonicity can be slightly relaxed.
The identifiability of the map $T_{s,d}$ in one dimension is guaranteed under \emph{co-monotonicity} of $h_{s,d}(\cdot;t_0)$ and $h_{s,d}(\cdot;t_1)$:
\begin{equation}\label{eq:comonotone}
    \langle u_0-u_1, h_{s,d}(u_0;t_0)-h_{s,d}(u_1;t_1)\rangle\geq 0, \:\: \forall u_0,u_1.
\end{equation}
In order to achieve identifiability results in higher dimensions based on Brenier's theorem, we need to make sure that $T_{s,d}$ is the gradient of a convex function.
It is known that a function is the gradient of a convex function if and only if it is \emph{cyclically monotone} \cite{rockafellar1970convex}.
A function $h$ is said to be cyclically monotone if for any sequence $x_0,\dots,x_n$ in its domain,
\begin{equation}\label{eq:cyclic}
\sum_{i=0}^n\langle h(x_i), x_i-x_{i+1}\rangle\geq0,\end{equation}
where $x_{n+1}=x_0$.
For $n=1$, Eq.~\eqref{eq:cyclic} reduces to Eq.~\eqref{eq:rephmono}.
With this preliminary discussion in place, the identification result in higher dimensions can be stated as follows.
\begin{customassu}{2}[Co-cyclic monotonocity]\label{as:cocyclic}
    Functions $h_{s,d}(\cdot;t_0)$ and $h_{s,d}(\cdot;t_1)$ are co-cyclically monotone.
    That is, for any sequence $u_0,\dots, u_n$ in their common domain,
    \[\sum_{i=0}^n\langle h_{s,d}(u_i;t_0), h_{s,d}(u_i;t_1)-h_{s,d}(u_{i+1};t_1)\rangle\geq0,\]
    where $u_{n+1}=u_0$.
\end{customassu}
\begin{restatable}{theorem}{thmhighd}\label{thm:highd}
    Under assumptions \ref{as:model2}, \ref{as:cocyclic}, \ref{as:invariance2}, \ref{as:sup2}, and \ref{as:cons} - \ref{as:stindep}, the probability measure over the missing counterfactual, i.e., $\mu_{s_1,d_1}$, is identified as
    \begin{equation}
        \mu_{s_1,d_1} = (T^*\circ T_{s_1,d_0})_\#\eta_{s_1,d_1},
    \end{equation}
    where $T_{s,d}$ is the Brenier map that pushes forward $\eta_{s,d}$ to $\mu_{s, d}$ for every $s\in\{s_0,s_1\}, d\in\{d_0,d_1\}$, and $T^*$ is the Brenier map that pushes forward $(T_{s_0,d_0\#}\eta_{s_0,d_1})$ to $\mu_{s_0,d_1}$.
\end{restatable}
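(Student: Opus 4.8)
The plan is to reduce the high-dimensional identification to a chain of Brenier map computations that mirrors the scalar argument of Theorem~\ref{thm:id}, replacing the monotone rearrangements $F^{-1}\circ F$ by quadratic-cost optimal transport maps. The backbone is the fact, developed in the preceding discussion, that under \ref{as:model2} and \ref{as:cocyclic} each group-specific map $T_{s,d}$ defined through the production functions coincides with the Brenier map pushing $\eta_{s,d}$ to $\mu_{s,d}$. Once this is in place, the identification follows by isolating a state-independent ``drift'' map and transferring it from the fully observed state $s_0$ to the partially observed state $s_1$.

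First I would verify that $T_{s,d}$ is the Brenier map in every subgroup. Writing $y_i=h_{s,d}(u_i;t_0)$ so that $T_{s,d}(y_i)=h_{s,d}(u_i;t_1)$, the inequality of \ref{as:cocyclic} is exactly the cyclical-monotonicity condition \eqref{eq:cyclic} for the graph $\{(y_i,T_{s,d}(y_i))\}$, up to transposing the two coordinates of the relation; this transposition preserves cyclical monotonicity, since the defining inequality is required for every finite sequence, in particular the reversed one. By Rockafellar's theorem the graph of $T_{s,d}$ lies in the subdifferential of a convex potential, so $T_{s,d}$ is the gradient of a convex function. Because \ref{as:invariance2} forces $U$ to share its law at $t_0$ and $t_1$, the production-function composition transports $\Y(t_0)$ to $\Y(t_1)$ within the group, i.e.\ $(T_{s,d})_\#\eta_{s,d}=\mu_{s,d}$; Brenier's theorem then identifies $T_{s,d}$ as the unique optimal transport map between these absolutely continuous measures. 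Under \ref{as:cons} the three maps $T_{s_0,d_0},T_{s_0,d_1},T_{s_1,d_0}$ are functionals of observed distributions.

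Second, I would recast \ref{as:stindep} in transport language. Within a state $s$, pushing $\eta_{s,d_1}$ through the control map $T_{s,d_0}$ gives the absolutely continuous measure $T_{s,d_0\#}\eta_{s,d_1}$, and I let $T^*_s$ be the Brenier map from it to $\mu_{s,d_1}$. In one dimension $T^*_s$ collapses to $T_{s,d_1}\circ T_{s,d_0}^{-1}$, recovering the scalar drift, so this is the faithful multivariate analogue. Assumption \ref{as:stindep} then reads $T^*_{s_0}=T^*_{s_1}=:T^*$, and by construction $T^*$ depends only on state-$s_0$ distributions, hence is identified. Specializing the definition of $T^*_{s_1}$ to $s=s_1$ yields $\mu_{s_1,d_1}=(T^*_{s_1})_\#\big(T_{s_1,d_0\#}\eta_{s_1,d_1}\big)=(T^*\circ T_{s_1,d_0})_\#\eta_{s_1,d_1}$, which is the claimed formula.

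The main obstacle is the regularity bookkeeping needed for Brenier's theorem to apply at each stage: one must ensure every source measure is absolutely continuous, in particular that the intermediate measure $T_{s,d_0\#}\eta_{s,d_1}$ retains absolute continuity so that a unique Brenier map out of it exists. This hinges on the production functions being bijective, which is already required to define $T_{s,d}$ through \eqref{eq:td3}, together with the standing hypothesis that all densities are absolutely continuous on the common compact support of \ref{as:sup2}, preventing the transport maps from concentrating mass. A secondary, more conceptual point is that a composition of two Brenier maps need not itself be a Brenier map, which is precisely why the drift must be defined directly as the optimal map from $T_{s,d_0\#}\eta_{s,d_1}$ to $\mu_{s,d_1}$ rather than as $T_{s,d_1}\circ T_{s,d_0}^{-1}$; checking that \ref{as:stindep} is the correct hypothesis to make this definition coherent and transferable across states is the crux of the argument.
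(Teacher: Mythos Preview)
Your proposal is correct and follows essentially the same route as the paper: identify $T^*$ from state-$s_0$ data as the Brenier map from $T_{s_0,d_0\#}\eta_{s_0,d_1}$ to $\mu_{s_0,d_1}$, invoke \ref{as:stindep} to transfer it to state $s_1$, and conclude that it pushes $T_{s_1,d_0\#}\eta_{s_1,d_1}$ to $\mu_{s_1,d_1}$. The paper's proof is three sentences and relegates the verification that each $T_{s,d}$ is the Brenier map to the discussion preceding the theorem; you make that step explicit and, more notably, you flag the two regularity issues the paper glosses over---absolute continuity of the intermediate pushforwards and the fact that a composition of Brenier maps need not be Brenier, so $T^*$ must be \emph{defined} as a Brenier map rather than as $T_{s,d_1}\circ T_{s,d_0}^{-1}$---which is a genuine clarification of what \ref{as:stindep} has to mean in the multivariate setting.
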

Theorem \ref{thm:highd} is in essence a generalization of Theorem \ref{thm:id}.
To see this, note that the Brenier map that pushes forward $\eta_{s,d}$ to $\mu_{s,d}$ in one dimension is precisely $F^{-1}_{\Y(t_1)\vert s,d}\circ F_{\Y(t_0)\vert s,d}$.
However, the co-cyclic monotonicity assumption \ref{as:cocyclic} is not as easy to interpret as \ref{as:monotone2}.
To address this challenge, we follow the proposition of \citet{torous2021optimal} based on the following result.
\begin{proposition}[\citealp{saks2005weak}]\label{prp:saks}
    Let $K$ and $F$ be a convex and a finite subset of $\mathbb{R}^d$ respectively. A function $T:K\to F$ is cyclically monotone if and only if it is monotone.
\end{proposition}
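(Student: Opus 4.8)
The forward implication is immediate: specializing the cyclic-monotonicity inequality \eqref{eq:cyclic} to $n=1$ recovers exactly the monotonicity inequality \eqref{eq:rephmono}. Hence the entire content of the statement is the converse — that a pairwise condition (monotonicity) upgrades to a condition on cycles of every length (cyclic monotonicity) once the domain $K$ is convex and the range $F$ is finite. This is the theorem of \citet{saks2005weak}, and I would reproduce its proof by reducing both notions to a graph-theoretic statement and then invoking convexity.

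First I would encode the two properties on a finite weighted digraph $\mathcal{G}$ whose vertices are the elements of the range $F$. For $a,b\in F$, write $K_a=\{x\in K: T(x)=a\}$ for the set on which $T$ equals $a$; these sets partition $K$. Assign to the directed edge $a\to b$ the weight
\[
\ell(a,b)=\inf_{x\in K_b}\langle\, b-a,\; x\,\rangle .
\]
A direct computation gives the two-cycle weight $\ell(a,b)+\ell(b,a)=\inf_{x\in K_a,\,y\in K_b}\langle a-b,\,x-y\rangle$, so \emph{monotonicity of $T$ is equivalent to every $2$-cycle of $\mathcal{G}$ having nonnegative weight}. Rearranging the sum in \eqref{eq:cyclic} and taking the infimum over the choice of one point in each cell shows likewise that \emph{cyclic monotonicity of $T$ is equivalent to $\mathcal{G}$ carrying no negative-weight cycle}; this is the finite-range form of Rockafellar's potential characterization \citep{rockafellar1970convex}, namely that a convex potential (equivalently, prices on the cells realizing $T$ as a subgradient) exists precisely when no negative cycle is present. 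The proposition therefore reduces to the following: if $\mathcal{G}$ has no negative $2$-cycle and $K$ is convex, then $\mathcal{G}$ has no negative cycle at all.

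For this core step I would argue by contradiction, choosing among all negative-weight cycles one with the fewest edges; by the $2$-cycle hypothesis it must have length $k\ge 3$. Convexity of $K$ enters through a geometric lemma: for near-optimal points $x\in K_a$, $y\in K_b$ realizing two consecutive edge weights, the segment $[x,y]$ lies in $K$, and since $F$ is finite the restriction of $T$ to $[x,y]$ takes only finitely many values, producing an ordered list of cell crossings from $a$ to $b$; applying the $2$-cycle inequalities at these crossings bounds the weight of the length-$k$ cycle below by that of a strictly shorter cycle that is still negative, contradicting minimality. I expect the main obstacle to be precisely this lemma — making the line-segment crossing argument rigorous and turning the local pairwise inequalities into a genuine shortcut of the cycle. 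Two technical points need care: the infima defining $\ell$ need not be attained, so I would work with $\epsilon$-optimal points and pass to the limit using compactness of $K$ (available under the paper's standing boundedness assumption) together with continuity of the inner products involved; and one must verify that the constructed shorter cycle remains strictly negative, which I would organize as an induction on cycle length. The remaining bookkeeping — telescoping the inner-product sums and tracking cell labels along each segment — is routine once the lemma is established.
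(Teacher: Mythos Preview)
The paper does not supply its own proof of this proposition: it is quoted as an external result from \citet{saks2005weak} and used as a black box to replace the co-cyclic monotonicity assumption \ref{as:cocyclic} by the more interpretable co-monotonicity condition \eqref{eq:comonotone}. There is therefore nothing in the paper to compare your proposal against.

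That said, your sketch is a faithful outline of the original Saks--Yu argument. The graph encoding you set up is correct: with your edge weights one checks that nonnegativity of every $2$-cycle is exactly monotonicity and nonnegativity of every cycle is exactly cyclic monotonicity, so the proposition becomes the purely combinatorial claim that on a finite-vertex digraph arising from a convex domain, ``no negative $2$-cycle'' forces ``no negative cycle.'' Your minimal-counterexample plus line-segment idea is the heart of the Saks--Yu proof: because $K$ is convex, the segment between two near-optimal witnesses stays in $K$, and because the range is finite the segment passes through only finitely many cells, which lets one replace an edge of the putative minimal negative cycle by a chain of shorter hops whose $2$-cycle inequalities telescope to contradict minimality. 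The two caveats you flag---working with $\varepsilon$-optimal witnesses since the infima defining $\ell$ need not be attained, and keeping track of strict negativity after the shortcut---are the right ones. One small remark: the Saks result does not need $K$ compact, only convex; your appeal to compactness is convenient but not essential, and in the paper's setting (compact supports) it is available anyway.
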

Proposition \ref{prp:saks} implies that in our setting, if the densities $\eta_{s,d}$ are supported on convex sets and $\Y(t_1)$ is finite-valued in every subgroup, then \ref{as:cocyclic} reduces to co-monotonicity \eqref{eq:comonotone}.
In other words, restricting $\eta_{s,d}$ and $\mu_{s,d}$ densities to be defined over convex and finite sets, respectively, Assumption \ref{as:cocyclic} of Theorem \ref{thm:highd} can be replaced by the easily interpretable assumption of Eq.~\eqref{eq:comonotone}.
\begin{figure*}[t]
    \centering
    \begin{subfigure}[b]{0.375\linewidth}
        \includegraphics[width=\textwidth]{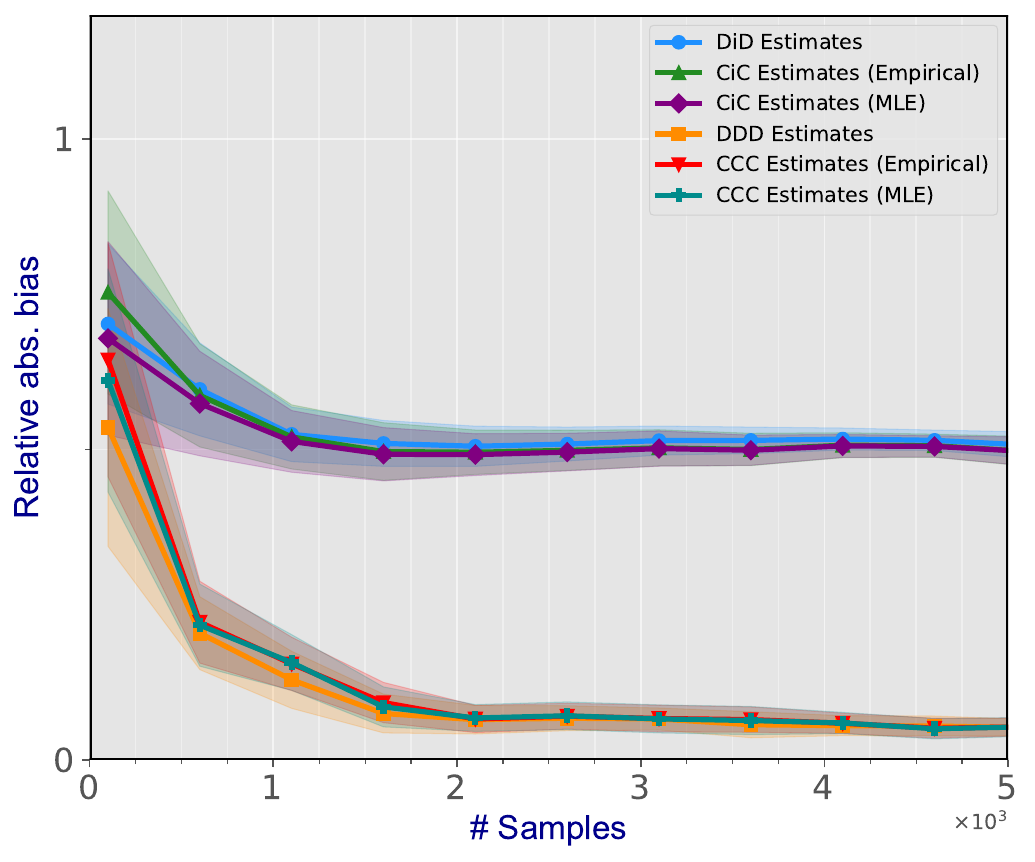}
        \caption{Linear production functions.}
        \label{fig:syn-linear}
    \end{subfigure}\hspace{.4cm}
    \begin{subfigure}[b]{0.405\linewidth}
        \includegraphics[width=\textwidth]{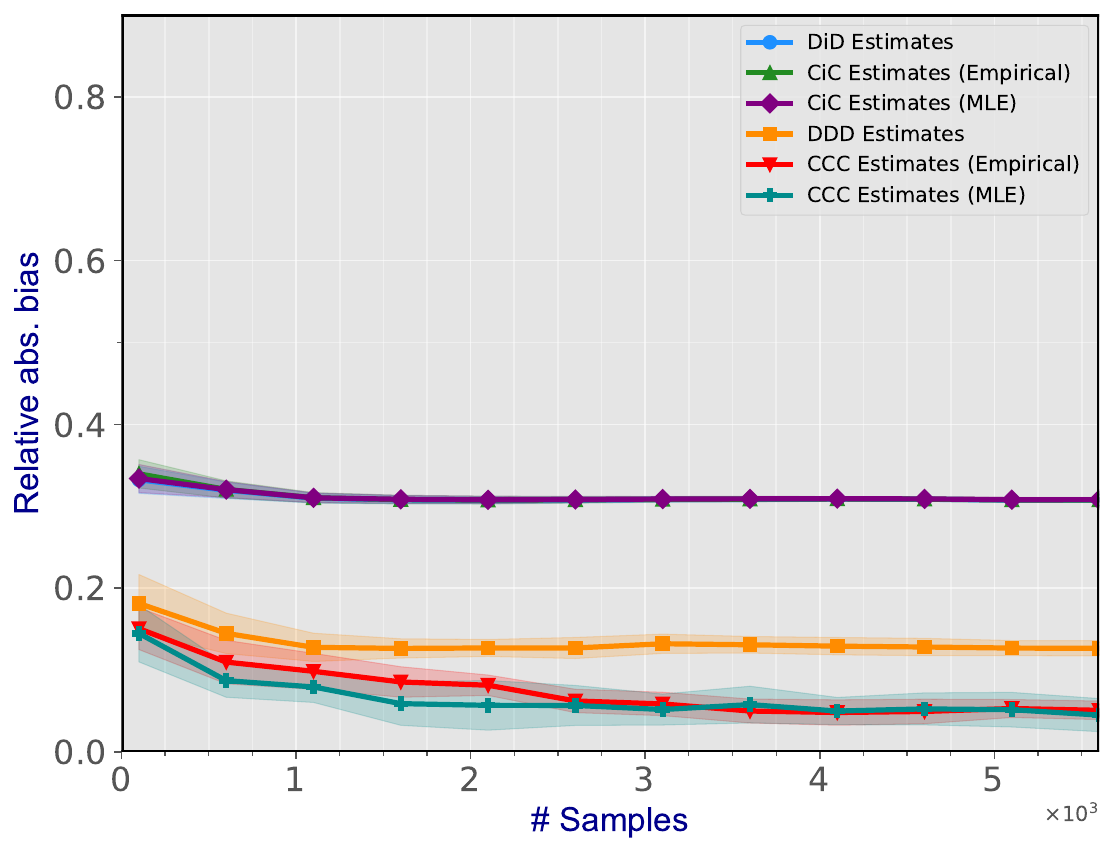}
        \caption{Nonlinear production functions.}
        \label{fig:syn-nonlinear}
    \end{subfigure}
    \caption{Relative bias of the estimators evaluated on (\subref{fig:syn-linear}) a linear, and (\subref{fig:syn-nonlinear}) a non-linear model.}
    \label{fig:syn}
\end{figure*}
\section{Simulation Studies}\label{sec:exp}
Our empirical analysis is structured into two main sections\footnote{
The code to reproduce the results of this paper are accessible at https://github.com/SinaAkbarii/Triple-Changes.}. 
In the first part, we compare the estimation error of the triple changes estimator and DiD, triple difference, and CiC estimators using synthetically generated datasets, where we know the ground truth: i.e., the treatment effect on the treated.
To evaluate the performance of each estimator, we focus on the relative bias metric, defined as
$\varepsilon=\big\vert 1-\frac{\widehat{\tau}}{\tau}\big\vert,$
where 
$\tau$ is the average treatment effect on the treated, given by \eqref{eq:tau}, and $\widehat{\tau}$ represents the estimate provided by the respective estimator.
In the second part, we apply our estimator to data from \emph{National Survey on Children's Health} (NSCH)\footnote{https://mchb.hrsa.gov/national-survey-childrens-health-questionnaires-datasets-supporting-documents} to assess the effect of Medicaid expansion under Affordable Care Act (ACA) in the united states on preventive care for children.

\subsection{Synthetic data}
We begin with a simple linear model where the latent variable $U$ given $s,d$ is sampled from a Gaussian distribution with mean $\nu_{s,d}$ and variance $1$ (see Appendix \ref{apx:exp} for a complete table of parameters $\nu_{s,d}$, as well as further details of our experiment setup,) and the production functions are defined as
\begin{equation}\label{eq:prod}
    h_{s,d}(u;t)\coloneqq 2u+\big(\frac{1+s}{4}+\frac{d-0.5}{2}\big) t.
\end{equation}
The actual outcome at time $t_1$ in the treated group $(s_1,d_1)$ is also sampled from a Gaussian distribution with mean $2.75$ and variance $1$.

CiC and triple changes estimators use estimates of the cumulative density functions and their inverses.
We implemented two versions of each of these estimators, namely a non-parametric estimator using the empirical estimator of density functions (see Eq.~\ref{eq:hatf} and \ref{eq:hatfinv},) and a model-based version using the maximum likelihood estimator given a class of distributions.
The class of distributions in this section are specified correctly, i.e., the true model lies in the class.
See appendix \ref{apx:exp} for a setting where this class is misspecified.

In Figure \ref{fig:syn}, the triple difference and triple changes estimators are denoted by DDD and CCC, respectively.
As depicted by Figure \ref{fig:syn-linear}, DiD and CiC estimators exhibit persistent bias, not converging to zero even with increasing sample size.
In contrast, the biases of triple difference and triple changes estimators approach zero with as the sample size grows.
Furthermore, since the triple difference estimator only relies on empirical averages, it shows slightly lower bias in small sample sizes.

To add non-linearities to the previous model, we modified the data generating mechanism, including two of the production functions.
Specifically, for $(s_0,d_1,t_1)$, and $(s_1,d_1,t_1)$, the production functions were modified to
\[h_{s,d}(u;t) = 0.1\exp\Big(2u+\big(\frac{1+s}{4}+\frac{d-0.5}{2}\big) t\Big),\]
whereas the other six groups were generated according to Eq.~\eqref{eq:prod}.
The outcomes for this model are illustrated in Figure \ref{fig:syn-nonlinear}. Notably, under the nonlinear model, the triple-difference estimator yields biased estimates, while the triple-changes estimators remain (asymptotically) unbiased. Furthermore, owing to the complexity of the model, empirical estimators of density functions exhibit more bias compared to their MLE-based counterparts, particularly in low-sample regimes.

\subsection{Application to NSCH data}
The Medicaid expansion under Affordable Care Act was designed to extend Medicaid coverage to more low-income citizens in the US.
This expansion raised the income threshold for Medicaid eligibility to 138\% of the federal poverty level (FPL).
While Louisiana ($s_1$) adopted this expansion in 2016, the neighboring states of Mississippi and Texas ($s_0$) are yet to do so.
We utilized publicly available anonymous NSCH data for the years 2016 ($t_0$) and 2017 ($t_1$) to assess the impact of Medicaid expansion on children's access to preventive healthcare, specifically analyzing the change in the frequency of doctor visits. Individuals with FPL $\leq100\%$ were classified as eligible for the expansion ($d_1$), while those with FPL $\geq140\%$ were considered not eligible ($d_0$). Data within the uncertainty margin between these thresholds was excluded.

Applying the triple difference and triple changes estimators on the frequency of doctor visits for preventive purposes with 1000 bootstraps resulted in means of $0.170$ and $0.145$, respectively, with $90\%$ confidence intervals of $[-0.004, 0.344]$ and $[-0.01, 0.331]$, respectively.
Our findings suggest that the adoption of Medicaid expansion in Louisiana increased the likelihood of children undergoing an annual preventive care visit, compared to those in non-expansion states. This aligns with the conclusion of \citet{roy2020impact}.

\section{Concluding Remarks}
We formally presented the triple changes estimator for assessing the treatment effect on the treated in observational studies.
We derived a set of necessary assumptions for point identification and discussed partial identification results under relaxed versions of these assumptions. 
Future research avenues could include exploring partial identification under alternative assumptions, investigating extensions to time series settings, and conducting statistical analyses to enhance the robustness and applicability of the estimator.

\bibliographystyle{plainnat}
\bibliography{biblio}
\clearpage

\appendix
\onecolumn
\begin{center}
    \Large\bfseries Appendices
\end{center}

The appendices are organized as follows.
We begin with providing a general partial identification result that was promised in the main text in Section \ref{apx:partial}.
We review the consistency and asymptotic normality of the estimator $\hat{\tau}$ given by Eq.~\eqref{eq:hattau} in Appendix \ref{apx:asymptotic}.
Appendix \ref{apx:lemmas} provides twelve lemmas that form the basis of the proof of Theorem \ref{thm:consistency}.
Appendix \ref{apx:proofs} includes the omitted proofs of the claims made in the main text.
Some of these claims are corollaries of the general result provided in Appendix \ref{apx:partial}.
Finally, Appendix \ref{apx:exp} provides further experimental results, as well as complementary details of the experimental setup used in the main text.
\section{Supplementary Findings on Partial Identification of the Missing Potential Outcome}\label{apx:partial}
Propositions \ref{prp:partial} and \ref{prp:delta} given in the text are in effect corollaries of the result given below.
We later provide an analogous result for the classic CiC framework.
For completeness, we include the relaxed assumptions here.
\begin{assumption}[$\epsilon$-monotonicity]\label{as:eps}
    Functions $h^{-1}_{s,d}(\cdot;t)$ are well-defined, and for any $u$ in the support of $U$, 
    \begin{equation}\label{eq:aseps}\p\big(\langle U- u, h_{s,d}(U;t)- h_{s,d}(u;t)\rangle<0\mid S=s,D=d,T=t\big)\leq\frac{\epsilon}{2}.\end{equation}
\end{assumption}
\begin{remark}
    \ref{as:eps} is slightly different from \ref{as:asm} in that we have also conditioned on $T=t$ in Eq.~\eqref{eq:aseps}.
    Clearly, under \ref{as:invariance2}, the two are equivalent.
    However, we are relaxing \ref{as:invariance2} to \ref{as:delta} in this section, and it is imperative to highlight this difference.
\end{remark}
\asdelta*

\begin{proposition}\label{prp:partialgen}
    Under assumptions \ref{as:model2}, \ref{as:eps}, \ref{as:delta}, \ref{as:sup2}, and \ref{as:cons}-\ref{as:stindep}, for any $y\in\mathbb{Y}_1$,
    \begin{equation*}
        \begin{split}
            F_{Y(t_0)\mid s_1,d_1}\circ
            \mathrm{\underline{\Psi}}^{-1}_{s_0,d_1}
            \circ
            \mathrm{\underline{\Psi}}_{s_0,d_0}
            \circ
            \mathrm{\underline{\Psi}}^{-1}_{s_1,d_0}&
            (
            y)
            -\epsilon-\delta
            \\&\leq
            F_{\Y(t_1)\mid s_1,d_1}(
            y)
            \leq\\
            &\hspace{2cm}F_{Y(t_0)\mid s_1,d_1}\circ
            \mathrm{\overline{\Psi}}^{-1}_{s_0,d_1}
            \circ
            \mathrm{\overline{\Psi}}_{s_0,d_0}
            \circ
            \mathrm{\overline{\Psi}}^{-1}_{s_1,d_0}
            (
            y)
            +\epsilon+\delta,
        \end{split}
        \end{equation*}
        where
        \[
        \begin{split}
            \mathrm{\underline{\Psi}}_{s,d}(y)=F^{-1}_{Y(t_1)\vert s,d}\big(F_{Y(t_0)\vert s,d}(y)-\epsilon-\delta\big),\\
            \mathrm{\underline{\Psi}}^{-1}_{s,d}(y)=F^{-1}_{Y(t_0)\vert s,d}\big(F_{Y(t_1)\vert s,d}(y)-\epsilon-\delta\big),\\
            \mathrm{\overline{\Psi}}_{s,d}(y)=F^{-1}_{Y(t_1)\vert s,d}\big(F_{Y(t_0)\vert s,d}(y)+\epsilon+\delta\big),\\
            \mathrm{\overline{\Psi}}^{-1}_{s,d}(y)=F^{-1}_{Y(t_0)\vert s,d}\big(F_{Y(t_1)\vert s,d}(y)+\epsilon+\delta\big).
        \end{split}
        \]
\end{proposition}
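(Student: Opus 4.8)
The plan is to reduce the claim to the exact identification of Theorem~\ref{thm:id} at the level of the production-function maps $T_{s,d}(y)=h_{s,d}\big(h^{-1}_{s,d}(y;t_0);t_1\big)$, and then to quantify the error incurred by replacing strict monotonicity and exact time invariance with their $\epsilon$- and $\delta$-relaxations. Because \ref{as:eps} still guarantees that $h^{-1}_{s,d}(\cdot;t)$ is well-defined, the maps $T_{s,d}$ and their inverses exist as bijections in every subgroup, and \ref{as:stindep} applied to these maps yields the purely algebraic identity $T^{-1}_{s_1,d_1}=T^{-1}_{s_0,d_1}\circ T_{s_0,d_0}\circ T^{-1}_{s_1,d_0}$ that already underlies Theorem~\ref{thm:id}. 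The key structural observation is that this identity never invokes monotonicity, so it survives the relaxation intact; all of the approximation error is therefore confined to the two places where an outcome CDF is tied to the latent CDF of $U$. Throughout I use \ref{as:cons} to identify $F_{\Y(t)\mid s,d}$ with the observed $F_{Y(t)\mid s,d}$ in every group except $\{s_1,d_1\}$ (and at $t_0$ also there).

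First I would prove a per-group \emph{sandwich lemma}: for each subgroup and any $y$,
\[
\mathrm{\underline{\Psi}}_{s,d}(y)\le T_{s,d}(y)\le \mathrm{\overline{\Psi}}_{s,d}(y),\qquad \mathrm{\underline{\Psi}}^{-1}_{s,d}(y)\le T^{-1}_{s,d}(y)\le \mathrm{\overline{\Psi}}^{-1}_{s,d}(y).
\]
Writing $u=h^{-1}_{s,d}(y;t_0)$, the events $\{h_{s,d}(U;t_0)\le h_{s,d}(u;t_0)\}$ and $\{U\le u\}$ differ only on the set where $\langle U-u,\,h_{s,d}(U;t_0)-h_{s,d}(u;t_0)\rangle<0$, whose probability is at most $\epsilon/2$ by \ref{as:eps}; hence $|F_{\Y(t_0)\mid s,d}(y)-F_{U\mid s,d,t_0}(u)|\le\epsilon/2$. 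The same estimate at $t_1$ contributes a further $\epsilon/2$, and \ref{as:delta} contributes $\delta$ when passing between $F_{U\mid s,d,t_0}$ and $F_{U\mid s,d,t_1}$. A triangle inequality then gives $|F_{\Y(t_1)\mid s,d}(T_{s,d}(y))-F_{\Y(t_0)\mid s,d}(y)|\le\epsilon+\delta$, and inverting with the strictly increasing, continuous map $F^{-1}_{\Y(t_1)\mid s,d}$ (a genuine inverse under the standing regularity of absolutely continuous densities on a compact domain) produces the forward sandwich; the symmetric chain started from $t_1$ gives the inverse-map sandwich. The identical argument in the treated group yields the standalone estimate $|F_{\Y(t_1)\mid s_1,d_1}(y)-F_{Y(t_0)\mid s_1,d_1}(T^{-1}_{s_1,d_1}(y))|\le\epsilon+\delta$, which supplies the single additive $\pm(\epsilon+\delta)$ appearing outside the composition in the final bound.

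The remaining step is to propagate the per-group sandwich through the threefold composition. Since $F_{Y(t_0)\mid s_1,d_1}$ is increasing, it suffices to show
\[
\mathrm{\underline{\Psi}}^{-1}_{s_0,d_1}\circ\mathrm{\underline{\Psi}}_{s_0,d_0}\circ\mathrm{\underline{\Psi}}^{-1}_{s_1,d_0}(y)\le T^{-1}_{s_1,d_1}(y)\le \mathrm{\overline{\Psi}}^{-1}_{s_0,d_1}\circ\mathrm{\overline{\Psi}}_{s_0,d_0}\circ\mathrm{\overline{\Psi}}^{-1}_{s_1,d_0}(y),
\]
after substituting the drift identity. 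I would prove the upper inequality from the inside out: the innermost sandwich gives $T^{-1}_{s_1,d_0}(y)\le\mathrm{\overline{\Psi}}^{-1}_{s_1,d_0}(y)$; then for any argument $z$ the sandwich $T_{s_0,d_0}(z)\le\mathrm{\overline{\Psi}}_{s_0,d_0}(z)$ combined with the \emph{monotonicity of the surrogate} $\mathrm{\overline{\Psi}}_{s_0,d_0}$ lets me advance the argument; one further application of the same device for $T^{-1}_{s_0,d_1}$ closes the chain. The point worth stressing is that monotonicity is invoked only for the CDF-based surrogates $\mathrm{\underline{\Psi}},\mathrm{\overline{\Psi}}$ (which are automatically increasing), never for the possibly non-monotone latent maps $T_{s,d}$; consequently the deviations of the three observed groups are \emph{absorbed} into the composed surrogate maps rather than accumulating additively, and the only free additive term remains the $\pm(\epsilon+\delta)$ from the treated group. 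Applying the increasing map $F_{Y(t_0)\mid s_1,d_1}$ and adding that treated-group estimate gives the two-sided bound; setting $\epsilon=0$ recovers Proposition~\ref{prp:delta}, $\delta=0$ recovers Proposition~\ref{prp:partial}, and $\epsilon=\delta=0$ recovers Theorem~\ref{thm:id}.

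I expect the main obstacle to be precisely this propagation: controlling a composition of three approximate maps without letting their errors compound into a multiple of $\epsilon+\delta$. The resolution is the observation above that each per-group deviation can be packaged inside the monotone surrogates, so that ordinary monotonicity of the surrogates, rather than of the latent production maps, carries the inequalities through unchanged. A secondary technical nuisance is boundary behaviour: when $F_{Y(t_0)\mid s,d}(y)\pm(\epsilon+\delta)$ or $F_{Y(t_1)\mid s,d}(y)\pm(\epsilon+\delta)$ leaves $[0,1]$, the quantile function must be read with the usual clipping convention to $\mathbb{Y}_t$, which I would handle by restricting to $y$ for which the relevant levels stay in range, the asserted bound being vacuous otherwise.
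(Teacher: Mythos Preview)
Your proposal is correct and follows essentially the same route as the paper's own proof: both establish the per-group sandwich
\(\underline{\Psi}_{s,d}\le T_{s,d}\le\overline{\Psi}_{s,d}\) (and the analogue for inverses) from the \(\epsilon/2\)-bound of \ref{as:eps} combined with the \(\delta\)-shift of \ref{as:delta}, invoke the drift identity of \ref{as:stindep} written in production-function form, and then push the inequalities through the three-fold composition using only the monotonicity of the CDF-based surrogates, leaving a single additive \(\pm(\epsilon+\delta)\) from the treated subgroup. The only difference is presentational: the paper writes out the full nested chain of \(h\)-inequalities explicitly, whereas you package the same logic as a ``sandwich lemma plus monotone propagation,'' which makes the non-accumulation of errors more transparent but is otherwise the identical argument.
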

\begin{proof} Suppose $t\in\{t_0,t_1\}$.
\begin{equation}\label{eq:prpgen1p}
    \begin{split}
        F_{\Y(t)\mid s,d}(y)
        &=\mathbbm{P}(h_{s,d}(U;t)\leq y\mid S=s,D=d, T=t)\\
        &=\mathbbm{P}(h_{s,d}(U;t)\leq y, \:U\leq h^{-1}_{s,d}(y;t)\mid S=s,D=d, T=t)
        \\&+\mathbbm{P}(h_{s,d}(U;t)\leq y, \:U>h^{-1}_{s,d}(y;t)\mid S=s,D=d, T=t),
    \end{split}
\end{equation}
which implies the following inequalities:
\begin{equation}\label{eq:prpgen1p1}
    \begin{split}
        F_{\Y(t)\mid s,d}(y)
        \\&\overset{(a)}{\leq}\mathbbm{P}(U\leq h^{-1}_{s,d}(y;t)\mid S=s,D=d, T=t)
        \\&+\mathbbm{P}(h_{s,d}(U;t)\leq y, \:U>h^{-1}_{s,d}(y;t)\mid S=s,D=d, T=t)
        \\&\overset{(b)}{\leq}\mathbbm{P}(U\leq h^{-1}_{s,d}(y;t)\mid S=s,D=d, T=t)+\frac{\epsilon}{2}
        \\&=F_{U\mid s,d,t}(h_{s,d}^{-1}(y;t))+\frac{\epsilon}{2},
    \end{split}
\end{equation}
where $(a)$ is a basic probability manipulation and $(b)$ follows from \ref{as:asm}, and
\begin{equation}\label{eq:prpgen1p2}
    \begin{split}
        F_{\Y(t)\mid s,d}(y)
        \\&\overset{(c)}{\geq}\mathbbm{P}(h_{s,d}(U;t)\leq y, \:U\leq h^{-1}_{s,d}(y;t)\mid S=s,D=d, T=t)
        \\&=\mathbbm{P}(U\leq h^{-1}_{s,d}(y;t)\mid S=s,D=d, T=t)
        \\&-\mathbbm{P}(h_{s,d}(U;t)> y, \:U\leq h^{-1}_{s,d}(y;t)\mid S=s,D=d, T=t)
        \\&\overset{(d)}{\geq}
        \mathbbm{P}(U\leq h^{-1}_{s,d}(y;t)\mid S=s,D=d, T=t)
        -\frac{\epsilon}{2}
        \\&=F_{U\mid s,d,t}(h_{s,d}^{-1}(y;t))-\frac{\epsilon}{2},
    \end{split}
\end{equation}
where $(c)$ is a consequence of \eqref{eq:prpgen1p}, and $(d)$ follows from \ref{as:asm}.
To summarize, we got
\begin{equation}\label{eq:prpgen1pp}
    F_{U\mid s,d,t}(h_{s,d}^{-1}(y;t))-\frac{\epsilon}{2}\leq
    F_{\Y(t)\mid s,d}(y)
    \leq F_{U\mid s,d,t}(h_{s,d}^{-1}(y;t))+\frac{\epsilon}{2},
\end{equation}
and using \ref{as:delta},
\begin{equation}\label{eq:prpgen1ppnew}
    F_{U\mid s,d,\tilde{t}}(h_{s,d}^{-1}(y;t))-\frac{\epsilon}{2}-\delta
    \leq
    F_{\Y(t)\mid s,d}(y)
    \leq F_{U\mid s,d,\tilde{t}}(h_{s,d}^{-1}(y;t))+\frac{\epsilon}{2}+\delta,
\end{equation}
where $\tilde{t}\in\{t_0,t_1\}$.
Eq.~\eqref{eq:prpgen1ppnew} can also be written as
\begin{equation}\label{eq:prpgen1pp2}
    F_{\Y(t)\mid s,d}(y)-\frac{\epsilon}{2}-\delta
    \leq 
    F_{U\mid s,d,\tilde{t}}(h_{s,d}^{-1}(y;t))
    \leq 
    F_{\Y(t)\mid s,d}(y)+\frac{\epsilon}{2}+\delta,
\end{equation}
and consequently,
\begin{equation}\label{eq:prpgen2p}
    F_{U\mid s,d, \tilde{t}}^{-1}\big(F_{\Y(t)\mid s,d}(y)-\frac{\epsilon}{2}-\delta
    \big)\leq h_{s,d}^{-1}(y;t) 
    \leq 
    F_{U\mid s,d,\tilde{t}}^{-1}\big(F_{\Y(t)\mid s,d}(y)+\frac{\epsilon}{2}+\delta
    \big).
\end{equation}
Also, choosing $y=h_{s,d}(u;t)$ in Eq.~\eqref{eq:prpgen1pp}, and applying $F^{-1}_{\Y(t)\mid s,d}(\cdot)$ to its sides, and renaming $t$ to $\tilde{t}$:
\begin{equation}\label{eq:prpgen3p}
    \begin{split}
        F^{-1}_{\Y(\tilde{t})\mid s,d}\big(F_{U\mid s,d,\tilde{t}}(u)-\frac{\epsilon}{2}\big)
        \leq
        h_{s,d}(u;\tilde{t}) \leq F^{-1}_{\Y(\tilde{t})\mid s,d}\big(F_{U\mid s,d,\tilde{t}}(u)+\frac{\epsilon}{2}\big)
    \end{split}
\end{equation}
Finally, choose $u=h_{s,d}^{-1}(y;t)$ in Eq.~\eqref{eq:prpgen3p}, and utilize Eq.~\eqref{eq:prpgen2p} to get
\begin{equation}\label{eq:prpgenineq}
    F^{-1}_{\Y(\tilde{t})\mid s,d}
    \big(
    F_{\Y(t)\mid s,d}(y)
    -\epsilon-\delta
    \big)
    \leq
    h_{s,d}\big(
    h_{s,d}^{-1}(y;t)
    ;\tilde{t}\big)
    \leq
    F^{-1}_{\Y(\tilde{t})\mid s,d}
    \big(
    F_{\Y(t)\mid s,d}(y)
    +\epsilon+\delta
    \big).
\end{equation}
Note that we utilized the fact that CDFs and their inverses are increasing.

In \ref{as:stindep} (see Eq.~\ref{eq:stindep}), choose $y=h_{s_1,d_0}\big(
            h_{s_1,d_0}^{-1}(y';t_0)
            ;t_1\big)$,
and take inverses
which results in
\begin{equation}
    \begin{split}
        h_{s_1,d_1}\big(
        h_{s_1,d_1}^{-1}(y;t_1)
        ;t_0\big)
        =
        h_{s_0,d_1}\Bigg(
        h^{-1}_{s_0,d_1}\bigg(
        h_{s_0,d_0}\Big\{
        h^{-1}_{s_0,d_0}\big\{
        h_{s_1,d_0}\big(
        h^{-1}_{s_1,d_0}(
        y;t_1)
        ; t_0\big)
        ; t_0\big\}
        ; t_1\Big\}
        ; t_1\bigg)
        ;t_0\Bigg),
    \end{split}
\end{equation}
Using the inequalities of \eqref{eq:prpgenineq},
\begin{equation}
\small
    \begin{split}
        &F^{-1}_{\Y(t_0)\mid s_0,d_1}
        \Bigg(
        F_{\Y(t_1)\mid s_0,d_1}\bigg(
        F^{-1}_{\Y(t_1)\mid s_0,d_0}
        \Big\{
        F_{\Y(t_0)\mid s_0,d_0}\big\{
        F^{-1}_{\Y(t_0)\mid s_1,d_0}
        \big(
        F_{\Y(t_1)\mid s_1,d_0}(
        y)
        -\epsilon-\delta\big)
        \big\}
        -\epsilon-\delta\Big\}
        \bigg)
        -\epsilon-\delta\Bigg)
        \\&\leq
        F^{-1}_{\Y(t_0)\mid s_0,d_1}
        \Bigg(
        F_{\Y(t_1)\mid s_0,d_1}\bigg(
        F^{-1}_{\Y(t_1)\mid s_0,d_0}
        \Big\{
        F_{\Y(t_0)\mid s_0,d_0}\big\{
        h_{s_1,d_0}\big(
        h^{-1}_{s_1,d_0}(
        y;t_1)
        ; t_0\big)
        \big\}
        -\epsilon-\delta\Big\}
        \bigg)
        -\epsilon-\delta\Bigg)
        \\&\leq
        F^{-1}_{\Y(t_0)\mid s_0,d_1}
        \Bigg(
        F_{\Y(t_1)\mid s_0,d_1}\bigg(
        h_{s_0,d_0}\Big\{
        h^{-1}_{s_0,d_0}\big\{
        h_{s_1,d_0}\big(
        h^{-1}_{s_1,d_0}(
        y;t_1)
        ; t_0\big)
        ; t_0\big\}
        ; t_1\Big\}
        \bigg)
        -\epsilon-\delta\Bigg)
        \\&
        \leq 
        h_{s_1,d_1}\big(
        h_{s_1,d_1}^{-1}(y;t_1)
        ;t_0\big)
        \\&
        \leq
        F^{-1}_{\Y(t_0)\mid s_0,d_1}
        \Bigg(
        F_{\Y(t_1)\mid s_0,d_1}\bigg(
        h_{s_0,d_0}\Big\{
        h^{-1}_{s_0,d_0}\big\{
        h_{s_1,d_0}\big(
        h^{-1}_{s_1,d_0}(
        y;t_1)
        ; t_0\big)
        ; t_0\big\}
        ; t_1\Big\}
        \bigg)
        +\epsilon+\delta\Bigg)
        \\&\leq
        F^{-1}_{\Y(t_0)\mid s_0,d_1}
        \Bigg(
        F_{\Y(t_1)\mid s_0,d_1}\bigg(
        F^{-1}_{\Y(t_1)\mid s_0,d_0}
        \Big\{
        F_{\Y(t_0)\mid s_0,d_0}\big\{
        h_{s_1,d_0}\big(
        h^{-1}_{s_1,d_0}(
        y;t_1)
        ; t_0\big)
        \big\}
        +\epsilon+\delta\Big\}
        \bigg)
        +\epsilon+\delta\Bigg)
        \\&\leq
        F^{-1}_{\Y(t_0)\mid s_0,d_1}
        \Bigg(
        F_{\Y(t_1)\mid s_0,d_1}\bigg(
        F^{-1}_{\Y(t_1)\mid s_0,d_0}
        \Big\{
        F_{\Y(t_0)\mid s_0,d_0}\big\{
        F^{-1}_{\Y(t_0)\mid s_1,d_0}
        \big(
        F_{\Y(t_1)\mid s_1,d_0}(
        y)
        +\epsilon+\delta\big)
        \big\}
        +\epsilon+\delta\Big\}
        \bigg)
        +\epsilon+\delta\Bigg),
    \end{split}
\end{equation}
and finally applying \eqref{eq:prpgenineq} to the term $h_{s_1,d_1}\big(
        h_{s_1,d_1}^{-1}(y;t_1)
        ;t_0\big)$
itself, we get the two inequalities
\begin{equation}
    \begin{split}
\small
    &F^{-1}_{\Y(t_0)\mid s_1,d_1}
    \big(
    F_{\Y(t_1)\mid s_1,d_1}(
    y)-\epsilon-\delta
    \big)\leq
    \\
    &F^{-1}_{\Y(t_0)\mid s_0,d_1}
    \Bigg(
    F_{\Y(t_1)\mid s_0,d_1}\bigg(
    F^{-1}_{\Y(t_1)\mid s_0,d_0}
    \Big\{
    F_{\Y(t_0)\mid s_0,d_0}\big\{
    F^{-1}_{\Y(t_0)\mid s_1,d_0}
    \big(
    F_{\Y(t_1)\mid s_1,d_0}(
    y)
    +\epsilon+\delta\big)
    \big\}
    +\epsilon+\delta\Big\}
    \bigg)
    +\epsilon+\delta\Bigg),
    \end{split}
\end{equation}
and
\begin{equation}
    \begin{split}
\small
    &F^{-1}_{\Y(t_0)\mid s_0,d_1}
    \Bigg(
    F_{\Y(t_1)\mid s_0,d_1}\bigg(
    F^{-1}_{\Y(t_1)\mid s_0,d_0}
    \Big\{
    F_{\Y(t_0)\mid s_0,d_0}\big\{
    F^{-1}_{\Y(t_0)\mid s_1,d_0}
    \big(
    F_{\Y(t_1)\mid s_1,d_0}(
    y)
    -\epsilon-\delta\big)
    \big\}
    -\epsilon-\delta\Big\}
    \bigg)
    -\epsilon-\delta\Bigg)
    \\&\leq F^{-1}_{\Y(t_0)\mid s_1,d_1}
    \big(
    F_{\Y(t_1)\mid s_1,d_1}(
    y)+\epsilon+\delta
    \big),
    \end{split}
\end{equation}
which complete the proof by applying $F_{\Y(t_0)\mid s_1,d_1}(\cdot)$ to their sides and applying \ref{as:cons}.

\end{proof}

\subsection{Partial identification for changes-in-changes}\label{apx:CiC}
In this section, we present the partial identification result for the conventional framework of changes-in-changes under relaxed versions of monotonicity (\ref{as:monotone}), and time invariance (\ref{as:invariance}).
Serving as a counterpart to Proposition \ref{prp:partialgen} within our triple changes framework, this result stands on its own merit, given the significance and wide applicability of changes-in-changes.
We begin with stating the relaxed assumption.
\begin{assumption}[$\epsilon$-monotonicity]\label{as:epscic}
    Functions $h^{-1}(\cdot;t)$ are well-defined, and for any $u$ in the support of $U$, and any $d\in\{d_0,d_11\}$,
    \[\p\Big(\langle U- u,h(U;t)- h(u;t)\rangle<0\mid D=d, T=t\Big)\leq\frac{\epsilon}{2}.\]
\end{assumption}
\begin{assumption}[$\delta$-invariance]\label{as:deltacic}
    Within every subgroup, the Kolmogorov distance of the distribution of the latent variable at $T=t_0$ and $T=t_1$ is bounded by $\delta$.
     That is,
    \[\sup_u\big\vert F_{U\vert D=d,T=t_1}(u)-F_{U\vert D=d,T=t_0}(u)\big\vert\leq\delta.\]
\end{assumption}
\begin{proposition}
    Under assumptions \ref{as:model}, \ref{as:deltacic}, \ref{as:epscic}, \ref{as:sup}, and \ref{as:cons}, for any $y$ in the support of $\Y$,
    \begin{multline*}
        F_{Y(t_0)\mid d_1}\Big(F^{-1}_{Y(t_0)\mid d_0}
        \big(
        F_{Y(t_1)\mid d_0}(y)
        -\epsilon-\delta
        \big)\Big) - \epsilon - \delta
        \leq
        F_{\Y(t_1)\mid d_1}(y) 
        \leq
        F_{Y(t_0)\mid d_1}\Big(F^{-1}_{Y(t_0)\mid d_0}
        \big(
        F_{Y(t_1)\mid d_0}(y)
        +\epsilon+\delta
        \big)\Big) + \epsilon+\delta.
    \end{multline*}
\end{proposition}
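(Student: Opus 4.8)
The plan is to mirror the proof of Proposition~\ref{prp:partialgen}, specializing it to the single, group-independent production function $h$ of the classic CiC model (\ref{as:model}). First I would establish, exactly as in Equations~\eqref{eq:prpgen1p}--\eqref{eq:prpgen1pp}, the pointwise sandwich
\[F_{U\mid d,t}\big(h^{-1}(y;t)\big)-\tfrac{\epsilon}{2}\leq F_{\Y(t)\mid d}(y)\leq F_{U\mid d,t}\big(h^{-1}(y;t)\big)+\tfrac{\epsilon}{2},\]
which follows from decomposing $F_{\Y(t)\mid d}(y)=\p(h(U;t)\le y\mid d,t)$ over the events $\{U\le h^{-1}(y;t)\}$ and $\{U>h^{-1}(y;t)\}$ and controlling the ``wrong-order'' probability by $\epsilon/2$ via $\epsilon$-monotonicity (\ref{as:epscic}). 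Applying $\delta$-invariance (\ref{as:deltacic}) to replace $F_{U\mid d,t}$ by $F_{U\mid d,\tilde t}$ at the cost of an additive $\delta$ then yields the time-swapped bound.

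Next I would invert these inequalities---using that all CDFs and their generalized inverses are monotone increasing---to obtain the master inequality analogous to \eqref{eq:prpgenineq}:
\[F^{-1}_{\Y(\tilde t)\mid d}\big(F_{\Y(t)\mid d}(y)-\epsilon-\delta\big)\leq h\big(h^{-1}(y;t);\tilde t\big)\leq F^{-1}_{\Y(\tilde t)\mid d}\big(F_{\Y(t)\mid d}(y)+\epsilon+\delta\big),\]
valid for each group $d$ and each pair $t,\tilde t\in\{t_0,t_1\}$.

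The crux of the CiC argument is the step that replaces the state-independent-drift input used in the triple-changes proof. Here the key observation is that under \ref{as:model} the map $y\mapsto h\big(h^{-1}(y;t_1);t_0\big)$ does not depend on the group $d$, since $h$ itself is common to both groups. I would therefore specialize the master inequality with $(t,\tilde t)=(t_1,t_0)$ once for $d=d_0$ and once for $d=d_1$ and chain the two through this shared middle quantity. Taking the $d_1$-lower bound against the $d_0$-upper bound gives
\[F^{-1}_{\Y(t_0)\mid d_1}\big(F_{\Y(t_1)\mid d_1}(y)-\epsilon-\delta\big)\leq F^{-1}_{\Y(t_0)\mid d_0}\big(F_{\Y(t_1)\mid d_0}(y)+\epsilon+\delta\big),\]
and symmetrically the $d_1$-upper bound against the $d_0$-lower bound. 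Applying the increasing map $F_{\Y(t_0)\mid d_1}(\cdot)$ to both chained inequalities and rearranging the additive $\epsilon+\delta$ slacks then produces the two-sided bound in the statement; a final appeal to consistency (\ref{as:cons}) replaces every $F_{\Y(t)\mid d}$ by the observed $F_{Y(t)\mid d}$ except in the counterfactual target $F_{\Y(t_1)\mid d_1}$.

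I expect the only real subtlety to be bookkeeping: correctly orienting each of the four one-sided inequalities so that the shared term $h\big(h^{-1}(y;t_1);t_0\big)$ is sandwiched in the right direction, and tracking that the slacks accumulate to exactly $\epsilon+\delta$ on each side rather than $2(\epsilon+\delta)$. Because CiC involves a single common $h$ rather than four group-specific production functions, the chaining is substantially shorter than in Proposition~\ref{prp:partialgen}, and no analog of the long composition in \eqref{eq:stindep} is required.
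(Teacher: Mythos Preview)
Your proposal is correct and follows essentially the same route as the paper's own proof: the same $\epsilon/2$ sandwich from the event decomposition, the same $\delta$ time-swap, the same master inequality \eqref{eq:ineq2}, and the same chaining through the group-independent quantity $h(h^{-1}(y;t_1);t_0)$ before applying $F_{\Y(t_0)\mid d_1}$ and consistency. Your remark that the slack comes out to exactly $\epsilon+\delta$ (via $\tfrac{\epsilon}{2}+\delta$ from the time-swapped bound combined with $\tfrac{\epsilon}{2}$ from the un-swapped bound used to control $h(u;\tilde t)$) matches the paper's bookkeeping precisely.
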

\begin{proof}Suppose $t\in\{t_0,t_1\}$.
    \begin{equation}\label{eq:p31}
    \begin{split}
        F_{\Y(t)\mid d}(y)
        &=\mathbbm{P}(h(U;t)\leq y\mid D=d,T=t)\\
        &=\mathbbm{P}(h(U;t)\leq y, \:U\leq h^{-1}(y;t)\mid D=d,T=t)
        \\&+\mathbbm{P}(h(U;t)\leq y, \:U>h^{-1}(y;t)\mid D=d,T=t),
    \end{split}
\end{equation}
which implies the following inequalities:
\begin{equation}\label{eq:p32}
    \begin{split}
        F_{\Y(t)\mid d}(y)
        &\overset{(a)}{\leq}\mathbbm{P}(U\leq h^{-1}(y;t)\mid D=d,T=t)
        +\mathbbm{P}(h(U;t)\leq y, \:U>h^{-1}(y;t)\mid D=d,T=t)
        \\&\overset{(b)}{\leq}\mathbbm{P}(U\leq h^{-1}(y;t)\mid D=d,T=t)+\frac{\epsilon}{2}
        \\&=F_{U\mid d,t}(h^{-1}(y;t))+\frac{\epsilon}{2},
    \end{split}
\end{equation}
where $(a)$ is a basic probability manipulation and $(b)$ follows from \ref{as:epscic}, and
\begin{equation}\label{eq:p322}
    \begin{split}
        F_{\Y(t)\mid d}(y)
        &\overset{(c)}{\geq}\mathbbm{P}(h(U;t)\leq y, \:U\leq h^{-1}(y;t)\mid D=d,T=t)
        \\&=\mathbbm{P}(U\leq h^{-1}(y;t)\mid D=d,T=t)
        -\mathbbm{P}(h(U;t)> y, \:U\leq h^{-1}(y;t)\mid D=d,T=t)
        \\&\overset{(d)}{\geq}
        \mathbbm{P}(U\leq h^{-1}(y;t)\mid D=d,T=t)
        -\frac{\epsilon}{2}
        \\&=F_{U\mid d,t}(h^{-1}(y;t))-\frac{\epsilon}{2},
    \end{split}
\end{equation}
where $(c)$ is a consequence of \eqref{eq:p31}, and $(d)$ follows from \ref{as:epscic}.
To summarize, we got
\begin{equation}\label{eq:p33}
    F_{U\mid d,t}(h^{-1}(y;t))-\frac{\epsilon}{2}\leq
    F_{\Y(t)\mid d}(y)
    \leq F_{U\mid d,t}(h^{-1}(y;t))+\frac{\epsilon}{2},
\end{equation}
and using \ref{as:deltacic},
\begin{equation}\label{eq:p33new}
    F_{U\mid d,\tilde{t}}(h^{-1}(y;t))-\frac{\epsilon}{2}-\delta\leq
    F_{\Y(t)\mid d}(y)
    \leq F_{U\mid d,\tilde{t}}(h^{-1}(y;t))+\frac{\epsilon}{2}+\delta,
\end{equation}
where $\tilde{t}\in\{t_0,t_1\}$.
Eq.~\eqref{eq:p33new} can also be written as
\begin{equation}\label{eq:p332}
    F_{\Y(t)\mid d}(y)-\frac{\epsilon}{2}-\delta\leq F_{U\mid d,\tilde{t}}(h^{-1}(y;t))\leq F_{\Y(t)\mid d}(y)+\frac{\epsilon}{2}+\delta,
\end{equation}
and consequently,
\begin{equation}\label{eq:p333}
    F_{U\mid d,\tilde{t}}^{-1}\big(F_{\Y(t)\mid d}(y)-\frac{\epsilon}{2}-\delta\big)\leq h^{-1}(y;t) \leq F_{U\mid d,\tilde{t}}^{-1}\big(F_{\Y(t)\mid d}(y)+\frac{\epsilon}{2}+\delta\big).
\end{equation}
Also, choosing $y=h(u;t)$ in Eq.~\eqref{eq:p33}, and applying $F^{-1}_{\Y(t)\mid d}(\cdot)$ to its sides, and renaming $t$ to $\tilde{t}$:
\begin{equation}\label{eq:p34}
    \begin{split}
        F^{-1}_{\Y(\tilde{t})\mid d}\big(F_{U\mid d,\tilde{t}}(u)-\frac{\epsilon}{2}\big)
        \leq
        h(u;\tilde{t}) \leq F^{-1}_{\Y(\tilde{t})\mid d}\big(F_{U\mid d,\tilde{t}}(u)+\frac{\epsilon}{2}\big)
    \end{split}
\end{equation}
Finally, choose $u = h^{-1}(y;t)$ in \eqref{eq:p34} and use \eqref{eq:p333}, choosing $\tilde{t}=t_0, t=t_1$ to get
\begin{equation}\label{eq:ineq2}
    F^{-1}_{\Y(t_0)\mid d}
    \big(
    F_{\Y(t_1)\mid d}(y)
    -\epsilon-\delta
    \big)
    \leq
    h\big(
    h^{-1}(y;t_1)
    ;t_0\big)
    \leq
    F^{-1}_{\Y(t_0)\mid d}
    \big(
    F_{\Y(t_1)\mid d}(y)
    +\epsilon+\delta
    \big).
\end{equation}
Note that we utilized the fact that CDFs and their inverses are increasing.
Note that Eq.~\eqref{eq:ineq2} holds both for $d=d_0,d_1$.
Combining Equations \eqref{eq:ineq2} for $d=d_0$ and $d=d_1$ yields the following two inequalities, concluding the proof after applying $F_{\Y(t_0)\vert d_1}$ to both sides.
\[
    F^{-1}_{\Y(t_0)\mid d_0}
    \big(
    F_{\Y(t_1)\mid d_0}(y)
    -\epsilon-\delta
    \big)
    \leq
    F^{-1}_{\Y(t_0)\mid d_1}
    \big(
    F_{\Y(t_1)\mid d_1}(y)
    +\epsilon+\delta
    \big),
\]
and,
\[
    F^{-1}_{\Y(t_0)\mid d_1}
    \big(
    F_{\Y(t_1)\mid d_1}(y)
    -\epsilon-\delta
    \big)
    \leq
    F^{-1}_{\Y(t_0)\mid d_0}
    \big(
    F_{\Y(t_1)\mid d_0}(y)
    +\epsilon+\delta
    \big).
\]
The result follows by \ref{as:cons}.
\end{proof}

\section{Asymptotic Normality of $\hat{\tau}$}\label{apx:asymptotic}
Due to space limitations, we discuss the details of our estimator and its asymptotic properties in this appendix.
For simplicity of notation, we adopt $F_{sdt}$ to denote $F_{Y(t)\vert s,d}$ throughout.
We use the same convention for $F^{-1}_{Y(t)\vert s,d}$ and their corresponding estimators.
In particular, the empirical estimators were defined as
\begin{equation}\label{eq:empF}
    \hat{F}_{sdt}(y) \coloneqq N_{sdt}^{-1}\sum_{i=1}^{N_{sdt}}\ind{Y_{sd,i}(t)\leq y},
\end{equation}
and 
\begin{equation}\label{eq:empinvF}
    \hat{F}^{-1}_{sdt}(u) \coloneqq \inf\{y\in\mathbb{Y}_t, \hat{F}_{sdt}(y)\geq u\}.
\end{equation}
Our estimator for $\tau$ was also built as
\begin{multline}
    \hat{\tau}\coloneqq
    N^{-1}_{s_1d_1t_1}\sum_{i=1}^{N_{s_1d_1t_1}}Y_{s_1d_1,i}(t_1) - N^{-1}_{s_1d_1t_0}\sum_{i=1}^{N_{s_1d_1t_0}}
    \hat{F}^{-1}_{s_0d_1t_1}
        \circ
        \hat{F}_{s_0d_1t_0}
        \circ
        \hat{F}^{-1}_{s_0d_0t_0}
        \circ
        \hat{F}_{s_0d_0t_1}
        \circ
        \hat{F}^{-1}_{s_1d_0t_1}
        \circ
        \hat{F}_{s_1d_0t_0}\big(
        Y_{s_1d_1,i}(t_0)
        \big).
\end{multline}
For ease of notation, in what follows, we adopt the notation $\ex{sd}$ to denote expectation conditioned on $S=s,D=d$.
We define the following functions:
\begin{equation}\label{eq:gs}
\begin{split}
    g_2(\cdot)\coloneqq F^{-1}_{s_0d_1t_1}(\cdot),\quad
    &g_3(\cdot)\coloneqq g_2
        \circ
        F_{s_0d_1t_0}(\cdot),\quad
    g_4(\cdot)\coloneqq g_3
        \circ
        F^{-1}_{s_0d_0t_0}(\cdot),\\
    g_5(\cdot)\coloneqq &\:g_4
        \circ
        F_{s_0d_0t_1}(\cdot),\quad
    g_6(\cdot)\coloneqq g_5
        \circ
        F^{-1}_{s_1d_0t_1}(\cdot),\quad
    g_7(\cdot)\coloneqq g_6
        \circ
        F_{s_1d_0t_0}(\cdot),
\end{split}
\end{equation}
as well as
\begin{equation}\label{eq:rs}
    \begin{split}
        r_5(\cdot)\coloneqq F_{s_1d_0t_0}(\cdot),\quad
        &r_4(\cdot)\coloneqq
        F^{-1}_{s_1d_0t_1}
        \circ
        r_5(\cdot),\quad
        r_3(\cdot)\coloneqq F_{s_0d_0t_1}
        \circ
        r_4(\cdot),\\
        r_2(\cdot)\coloneqq &\:F^{-1}_{s_0d_0t_0}
        \circ
        r_3(\cdot),\quad
        r_1(\cdot)\coloneqq F_{s_0d_1t_0}
        \circ
        r_2(
        \cdot
        ).
    \end{split}
\end{equation}

Accordingly, define $Q_0(y) = y-\ex{s_1d_1}[Y(t_1)]$, and
\begin{equation}
\begin{split}
    q_1(y,z)&\coloneqq \frac{
    \ind{F_{s_0d_1t_1}(y)\leq
    r_1(z)
    }
    -r_1(z)
    }{f_{s_0d_1t_1}\circ
    F^{-1}_{s_0d_1t_1}
    \circ r_1(z)
    },\quad Q_1(y)\coloneqq\ex{s_1d_1}[q_1(y,Y(t_0))]\\
    q_2(y,z) &\coloneqq
    \big(\ind{y\leq
    r_2(z)
    }
    -
    F_{s_0d_1t_0}
    \circ
    r_2(z)
    \big)\cdot
    \big(g_2'\circ F_{s_0d_1t_0}
    \circ
    r_2(z)
    \big),\quad Q_2(y)\coloneqq\ex{s_1d_1}[q_2(y,Y(t_0))]\\
    q_3(y,z)&\coloneqq
    \frac{
    \ind{F_{s_0d_0t_0}(y)\leq
    r_3(z) 
    }
    -r_3(z)
    }{f_{s_0d_0t_0}\circ
    F^{-1}_{s_0d_0t_0}\circ
    r_3(z) }
    \cdot\big(g_3'\circ f_{s_0d_0t_0}\circ F^{-1}_{s_0d_0t_0}
    \circ
    r_3(z)
    \big),\quad Q_3(y)\coloneqq\ex{s_1d_1}[q_3(y,Y(t_0))]\\
    q_4(y,z)&\coloneqq 
    \big(\ind{y\leq
    r_4\big(z\big)  
    }
    -
    F_{s_0d_0t_1}
    \circ
    r_4(z)  
    \big)\cdot
    \big(g_4'\circ F_{s_0d_0t_1}
    \circ
    r_4(z)\big) 
    ,\quad Q_4(y)\coloneqq\ex{s_1d_1}[q_4(y,Y(t_0))]\\
    q_5(y,z)&\coloneqq
    \frac{
    \ind{F_{s_1d_0t_1}(y)\leq
    r_5(z) 
    }
    -r_5(z) 
    }{f_{s_1d_0t_1}\circ
    F^{-1}_{s_1d_0t_1}\circ
    r_5(z) }\cdot
    \big(g_5'\circ f_{s_1d_0t_1}\circ F^{-1}_{s_1d_0t_1}
    \circ
    r_5(z)
    \big)\quad Q_5(y)\coloneqq\ex{s_1d_1}[q_5(y,Y(t_0))]
    ,\\
    q_6(y,z) &\coloneqq
    \big(\ind{y\leq
    z 
    }
    -
    F_{s_1d_0t_0}(z)  
    \big)\cdot
    \big(g_6'\circ F_{s_1d_0t_0}(z) 
    \big),\quad Q_6(y)\coloneqq\ex{s_1d_1}[q_6(y,Y(t_0))],\\
    Q_7(y)&\coloneqq
    g_7(y)-
    \ex{s_1d_1}\big[g_7\big(
        Y(t_0)
        \big)\big].
\end{split}
\end{equation}

Also define \begin{equation}\label{eq:muhat0}\hat{\upmu}_0\coloneqq\frac{1}{N_{s_1d_1t_1}}\sum_{i=1}^{N_{s_1d_1t_1}}Q_0\big(Y_{s_1d_1,i}(t_1)\big),\end{equation} 
and,

\begin{equation}\label{eq:muhat1}
    \hat{\mu}_1\coloneqq\frac{1}{N_{s_1d_1t_0}}\frac{1}{N_{s_0d_1t_1}}\sum_{i=1}^{N_{s_1d_1t_0}}\sum_{j=1}^{N_{s_0d_1t_1}}
    q_1(
    Y_{s_0d_1t,j}(t_1)
    ,Y_{s_1d_1,i}(t_0)
    ),\quad\quad\hat{\upmu}_1 \coloneqq
    \frac{1}{N_{s_0d_1t_1}}\sum_{j=1}^{N_{s_0d_1t_1}}
    Q_1(
    Y_{s_0d_1t,j}(t_1)
    ),
\end{equation}

\begin{equation}\label{eq:muhat2}
    \hat{\mu}_2=
    \frac{1}{N_{s_1d_1t_0}}\frac{1}{N_{s_0d_1t_0}}\sum_{i=1}^{N_{s_1d_1t_0}}\sum_{j=1}^{N_{s_0d_1t_0}}
    q_2(
    Y_{s_0d_1t,j}(t_0)
    ,Y_{s_1d_1,i}(t_0)
    ),\quad\quad
    \hat{\upmu}_2=
    \frac{1}{N_{s_0d_1t_0}}\sum_{j=1}^{N_{s_0d_1t_0}}
    Q_2(
    Y_{s_0d_1t,j}(t_0)
    ),
\end{equation}

\begin{equation}\label{eq:muhat3}
    \hat{\mu}_3=
    \frac{1}{N_{s_1d_1t_0}}\frac{1}{N_{s_0d_0t_0}}\sum_{i=1}^{N_{s_1d_1t_0}}\sum_{j=1}^{N_{s_0d_0t_0}}
    q_3(
    Y_{s_0d_0t,j}(t_0)
    ,Y_{s_1d_1,i}(t_0)
    ),\quad\quad
    \hat{\upmu}_3=
    \frac{1}{N_{s_0d_0t_0}}\sum_{j=1}^{N_{s_0d_0t_0}}
    Q_3(
    Y_{s_0d_0t,j}(t_0)
    ),
\end{equation}

\begin{equation}\label{eq:muhat4}
    \hat{\mu}_4=
    \frac{1}{N_{s_1d_1t_0}}\frac{1}{N_{s_0d_0t_1}}\sum_{i=1}^{N_{s_1d_1t_0}}\sum_{j=1}^{N_{s_0d_0t_1}}
    q_4(
    Y_{s_0d_0t,j}(t_1)
    ,Y_{s_1d_1,i}(t_0)
    ),\quad\quad
    \hat{\upmu}_4=
    \frac{1}{N_{s_0d_0t_1}}\sum_{j=1}^{N_{s_0d_0t_1}}
    Q_4(
    Y_{s_0d_0t,j}(t_1)
    ),
\end{equation}
\begin{equation}\label{eq:muhat5}
    \hat{\mu}_5=
    \frac{1}{N_{s_1d_1t_0}}\frac{1}{N_{s_1d_0t_1}}\sum_{i=1}^{N_{s_1d_1t_0}}\sum_{j=1}^{N_{s_1d_0t_1}}
    q_5(
    Y_{s_0d_1t,j}(t_1)
    ,Y_{s_1d_1,i}(t_0)
    ),\quad\quad
    \hat{\upmu}_5=
    \frac{1}{N_{s_1d_0t_1}}\sum_{j=1}^{N_{s_1d_0t_1}}
    Q_5(
    Y_{s_0d_1t,j}(t_1)
    ),
\end{equation}

\begin{equation}\label{eq:muhat6}
    \hat{\mu}_6=
    \frac{1}{N_{s_1d_1t_0}}\frac{1}{N_{s_1d_0t_0}}\sum_{i=1}^{N_{s_1d_1t_0}}\sum_{j=1}^{N_{s_1d_0t_0}}
    q_6(
    Y_{s_0d_1t,j}(t_0)
    ,Y_{s_1d_1,i}(t_0)
    ),\quad\quad
    \hat{\upmu}_6=
    \frac{1}{N_{s_1d_0t_0}}\sum_{j=1}^{N_{s_1d_0t_0}}
    Q_6(
    Y_{s_0d_1t,j}(t_0)
    ),
\end{equation}
and,
\begin{equation}\label{eq:muhat7}
\begin{split}
    \hat{\upmu}_7 \coloneqq
    \frac{1}{N_{s_1d_1t_0}}\sum_{i=1}^{N_{s_1d_1t_0}}
    Q_7\big(
        Y_{s_1d_1,i}(t_0)\big).
\end{split}
\end{equation}
Finally, define the variance terms $\{V_0,\dots, V_7\}$ as
\begin{equation}\label{eq:varianceterms}
    \begin{split}
        V_0&\coloneqq \ex{s_1d_1}[Q_0\big(Y(t_1)\big)^2],\\
        V_1&\coloneqq \ex{s_0d_1}[Q_1\big(Y(t_0)\big)^2],\\
        V_2&\coloneqq \ex{s_0d_1}[Q_2\big(Y(t_1)\big)^2],\\
        V_3&\coloneqq \ex{s_0d_0}[Q_3\big(Y(t_1)\big)^2],\\
        V_4&\coloneqq \ex{s_0d_0}[Q_4\big(Y(t_0)\big)^2],\\
        V_5&\coloneqq \ex{s_0d_1}[Q_5\big(Y(t_0)\big)^2],\\
        V_6&\coloneqq \ex{s_0d_1}[Q_6\big(Y(t_1)\big)^2],\\
        V_7&\coloneqq \ex{s_1d_1}[Q_7\big(Y(t_0)\big)^2].\\
    \end{split}
\end{equation}
For the sake of completeness, we review the consistency and asymptotic normality result from the main text for $\hat{\tau}$:
\thmconsistency*

In order to estimate the asymptotic variance, one can substitute expectations with sample averages. This involves utilizing empirical distributions and their inverses for cumulative distribution functions, as well as employing any uniformly consistent non-parametric estimator for the density functions.

Twelve lemmas are in order within the next section which will be used to prove Theorem \ref{thm:consistency} in Section \ref{apx:proofs}.
\section{Preliminary Lemmas}\label{apx:lemmas}
Throughout this subsection, we suppress the subscripts $sdt$ and replace $N_{sdt}$ by $N$ whenever it does not affect the statements.
Note that since $N_{sdt}/N\to p_{sdt}>0$ (see \ref{as:estimation}), if a term is $O_p(N_{sdt}^{-\alpha})$ ($o_p(N_{sdt}^{-\alpha})$), it is also $O_p(N^{-\alpha})$ ($o_p(N^{-\alpha})$).
For the following four lemmas, let $X$ be a random variable with twice continuously differentiable cdf $F(\cdot)$, where $F$ has bounded first and second derivatives. Suppose the pdf of $X$ is bounded away from $0$.
Let $\hat{F}$ and $\hat{F}^{-1}$ be empirical estimators of the CDF of $X$ and its inverse, analogous to Eq.~\eqref{eq:empF} and Eq.~\eqref{eq:empinvF}.

\begin{lemma}[\citealp{athey2006identification}]\label{lem:uniform}
    For any $\delta<1/2$,
    \[
        \sup_x N^\delta\cdot\big\vert\hat{F}(x)-F(x)\big\vert\overset{P}{\to}0, \quad\quad \sup_{u\in[0,1]} N^\delta\cdot\big\vert\hat{F}^{-1}(u)-F^{-1}(u)\big\vert\overset{P}{\to}0.
    \]
\end{lemma}

The following lemma is a slight generalization of the latter.
\begin{lemma}\label{lem:guniform}
    Let $g(\cdot)$ be a continuously differentiable function with a bounded derivative. 
    For any $\delta< \frac{1}{2}$, 
    \[
        \sup_x N^\delta\cdot\big\vert g\circ\hat{F}(x)- g\circ F(x)\big\vert\overset{P}{\to}0, \quad\quad \sup_{u\in[0,1]} N^\delta\cdot\big\vert g\circ \hat{F}^{-1}(u)- g\circ F^{-1}(u)\big\vert\overset{P}{\to}0.
    \]
\end{lemma}
\begin{proof}
    \[
        \sup_xN^\delta\cdot\big\vert g\circ\hat{F}(x)- g\circ F(x)\big\vert
        \leq \sup_{x,\tilde{x}}N^\delta\cdot\big\vert g'(\tilde{x})\big\vert\cdot\big\vert\hat{F}(x)- F(x)\big\vert
        \leq \sup_{x,\tilde{x}}\big\vert g'(\tilde{x})\big\vert\cdot\sup_xN^\delta\cdot\big\vert\hat{F}(x)- F(x)\big\vert,
    \]
    and the result follows from Lemma \ref{lem:uniform}.
    The proof for the second part is identical.
\end{proof}
\begin{lemma}\label{lem:genuniform}
    Let $\{X_i\}_{i=1}^k$ be random variables with twice differentiable cdfs $\{F_i(\cdot)\}_{i=1}^k$, where $F_i$s have bounded first and second derivatives.
    Suppose also that all $X_i$s have pdfs that are bounded away from $0$.
    Then for any $\delta< \frac{1}{2}$, 
    \[
    \sup_xN^\delta\cdot\big\vert \hat{g}_1\circ\dots\circ\hat{g}_k(x)-g_1\circ\dots\circ g_k(x)\big\vert\overset{P}{\to}0,
    \]
    where $g_i\in\{F_i,F^{-1}_i\}$, and $\hat{g}_i$ is the empirical estimator of $g_i$, analogous to Equations \eqref{eq:empF} and \eqref{eq:empinvF}.
\end{lemma}
\begin{proof}
    Using triangle inequality, 
    \[\begin{split}
        \sup_xN^\delta&\cdot\big\vert \hat{g}_1\circ\dots\circ\hat{g}_k(x)-g_1\circ\dots\circ g_k(x)\big\vert\leq
        \\
        &\sup_xN^\delta\cdot\big\vert \hat{g}_1\circ\dots\circ\hat{g}_k(x)-g_1\circ\hat{g_2}\circ\dots\circ \hat{g}_k(x)\big\vert
        \\
        +&\sup_xN^\delta\cdot\big\vert g_1\circ\hat{g_2}\circ\dots\circ \hat{g}_k(x)-g_1\circ g_2(x)\circ\hat{g_3}\circ\dots\circ \hat{g}_k(x)\big\vert
        \\
        +&\dots
        \\
        +&
        \sup_xN^\delta\cdot\big\vert g_1\circ\dots\circ g_{k-1}\circ\hat{g}_k(x)-g_1\circ\dots\circ g_k(x)\big\vert,
    \end{split}\]
    and every term on the right hand side is $o_p(1)$ due to Lemma \ref{lem:guniform}.
\end{proof}

\begin{lemma}\label{lem:gena6}
    Let $g(\cdot)$ be a twice continuously differentiable function, with bounded first and second derivatives $g'(\cdot)$ and $g''(\cdot)$, respectively.
    Then for all $0<\eta<5/7$,
    \begin{equation}
        \sup_{u\in[0,1]} N^{\eta}\cdot\Big\vert
        g\circ\hat{F}^{-1}(u)-
        g\circ F^{-1}(u)
        +\frac{g'\circ f\circ F^{-1}(u)}{f\circ F^{-1}(u)}\big(
        \hat{F}\circ F^{-1}(u)-u
        \big)
        \Big\vert \overset{P}{\to}0.
    \end{equation}
\end{lemma}
\begin{proof}
    Using triangle inequality,
    \begin{multline}\label{eq:proofgena6}
        \sup_{u\in[0,1]} N^{\eta}\cdot\Big\vert
        g\circ\hat{F}^{-1}(u)-
        g\circ F^{-1}(u)
        +\frac{g'\circ f\circ F^{-1}(u)}{f\circ F^{-1}(u)}\big(
        \hat{F}\circ F^{-1}(u)-u
        \big)
        \Big\vert
        \\
        \leq
        \sup_{u\in[0,1]} N^{\eta}\cdot\Big\vert
        g\circ\hat{F}^{-1}(u)-
        g\circ F^{-1}\circ
        \hat{F}\circ\hat{F}^{-1}
        (u)
        +
        \frac{g'\circ f\circ \hat{F}^{-1}(u)}{f\circ \hat{F}^{-1}(u)}\big(
        \hat{F}\circ \hat{F}^{-1}(u)-
        F\circ \hat{F}^{-1}(u)
        \big)
        \Big\vert
        \\
        +
        \sup_{u\in[0,1]} N^{\eta}\cdot\Big\vert
        \frac{g'\circ f\circ F^{-1}(u)}{f\circ F^{-1}(u)}\big(
        \hat{F}\circ F^{-1}(u)-u
        \big)
        -
        \frac{g'\circ f\circ \hat{F}^{-1}(u)}{f\circ \hat{F}^{-1}(u)}\big(
        \hat{F}\circ \hat{F}^{-1}(u)-
        F\circ \hat{F}^{-1}(u)
        \big)
        \Big\vert
        \\
        +
        \sup_{u\in[0,1]} N^{\eta}\cdot\Big\vert
        g\circ F^{-1}\circ
        \hat{F}\circ\hat{F}^{-1}
        (u) - g\circ F^{-1}(u)
        \Big\vert,
    \end{multline}
where we shall show each of the three terms on the right hand side are $o_p(1)$.
The first term is bounded as 
\begin{multline}
    \sup_x N^{\eta}\cdot\Big\vert
        g(x)-
        g\circ F^{-1}\circ
        \hat{F}
        (x)
        +
        \frac{g'\circ f(x)}{f(x)}\big(
        \hat{F}(x)-
        F(x)
        \big)
        \Big\vert
    \\=
    \sup_x N^{\eta}\cdot\Big\vert
        g\circ F^{-1}\big( F(x)\big)-
        g\circ F^{-1}\big(
        \hat{F}
        (x)\big)
        -
        \frac{g'\circ f(x)}{f\circ F^{-1}\big(F(x)\big)}\big(
        F(x)-
        \hat{F}(x)
        \big)
        \Big\vert
    \\\leq
    \sup_{\tilde{x},x}N^\eta\cdot\Big\vert
    \big(
    -\frac{1}{f(\tilde{x})^3}\frac{\partial f}{\partial x}(\tilde{x})g'\circ F^{-1}(\tilde{x})
    +\frac{1}{f(\tilde{x})^2}g''\circ F^{-1}(\tilde{x})
    \big)
    \big(F(x)-
        \hat{F}(x)\big)^2
    \Big\vert,
\end{multline}
where we used the expansion of the function $g\circ F^{-1}(\cdot)$ around the point $F(x)$.
The latter is $o_p(1)$ for every $\eta<1$ due to Lemma \ref{lem:uniform} and smoothness assumptions over $g$ and $f$.
Next, the second term of Eq.~\eqref{eq:proofgena6} can be bounded using triangle inequality by
\begin{multline}
    \sup_{u\in[0,1]} N^{\eta}\cdot\Big\vert
        \frac{g'\circ f\circ F^{-1}(u)}{f\circ F^{-1}(u)}\big(
        \hat{F}\circ F^{-1}(u)-u
        \big)
        -
        \frac{g'\circ f\circ \hat{F}^{-1}(u)}{f\circ \hat{F}^{-1}(u)}\big(
        \hat{F}\circ \hat{F}^{-1}(u)-
        F\circ \hat{F}^{-1}(u)
        \big)
        \Big\vert
        \\\leq
    \sup_{u\in[0,1]} N^{\eta}\cdot\Big\vert
        \frac{g'\circ f\circ F^{-1}(u)}{f\circ F^{-1}(u)}\big(
        \hat{F}\circ F^{-1}(u)-u
        \big)
        -
        \frac{g'\circ f\circ \hat{F}^{-1}(u)}{f\circ \hat{F}^{-1}(u)}\big(
        \hat{F}\circ F^{-1}(u)-
        u
        \big)
        \Big\vert
        \\+
    \sup_{u\in[0,1]} N^{\eta}\cdot\Big\vert
        \frac{g'\circ f\circ \hat{F}^{-1}(u)}{f\circ \hat{F}^{-1}(u)}\big(
        \hat{F}\circ F^{-1}(u)-
        u
        \big)
        -
        \frac{g'\circ f\circ \hat{F}^{-1}(u)}{f\circ \hat{F}^{-1}(u)}\big(
        \hat{F}\circ \hat{F}^{-1}(u)-
        F\circ \hat{F}^{-1}(u)
        \big)
        \Big\vert
        \\\leq
        \sup_{u\in[0,1]} N^{\eta/2}\cdot\Big\vert
        \hat{F}\circ F^{-1}(u)-u
        \Big\vert\cdot
        \sup_{u\in[0,1]} N^{\eta/2}\cdot\Big\vert
        \frac{g'\circ f\circ F^{-1}(u)}{f\circ F^{-1}(u)}
        -
        \frac{g'\circ f\circ \hat{F}^{-1}(u)}{f\circ \hat{F}^{-1}(u)}
        \Big\vert
        \\+
        \sup_{x}\Big\vert
        \frac{g'\circ f(x)}{f(x)}
        \Big\vert\cdot
        \sup_{u\in[0,1]} N^{\eta}\big\vert
        \hat{F}\circ F^{-1}(u)-
        u
        -
        \big(
        \hat{F}\circ \hat{F}^{-1}(u)-
        F\circ \hat{F}^{-1}(u)
        \big)
        \big\vert,
\end{multline}
where the first term is $o_p(1)$ due to Lemma \ref{lem:uniform}, Lemma A.3 of \citet{athey2006identification}, and the smoothness assumptions over $f$ and $g$,
and the second term is $o_p(1)$ as shown in the proof of Lemma A.6 of \citet{athey2006identification}.
Finally, consider the third term of Eq.~\eqref{eq:proofgena6}:
\begin{equation}
    \sup_{u\in[0,1]} N^{\eta}\cdot\Big\vert
        g\circ F^{-1}\circ
        \hat{F}\circ\hat{F}^{-1}
        (u) - g\circ F^{-1}(u)
        \Big\vert.
\end{equation}
Since $\vert\hat{F}\circ\hat{F}^{-1}(u)-u\vert< 1/N$ for all $u$ \cite{athey2006identification}, and due to smoothness assumptions over $g\circ F^{-1}$, this term converges uniformly to zero.
\end{proof}

\begin{lemma}\label{lem:muupmu}
    Let $\{\hat{\mu}_k\}_{k=1}^6$ and $\{\hat{\upmu}_k\}_{k=1}^6$ be given by Equations \eqref{eq:muhat1} through \eqref{eq:muhat6}.
    Under \ref{as:estimation}, for any $k\in\{1,\dots,6\}$,
    \[\sqrt{N}(\hat{\mu}_k-\hat{\upmu}_k)\overset{P}{\to}0.\]
\end{lemma}
\begin{proof}
    We give the proof for $k=1$. 
    The proof is identical for the rest.
    $\hat{\mu}_1$ is a twp-sample V-statistic, and standard V-statistic theory implies
    \[\begin{split}
        \hat{\mu}_1 &= \frac{1}{N_{s_0d_1t_1}}\sum_{i=1}^{N_{s_0d_1t_1}}\ex{s_1d_1}\big[q_1\big(Y_{s_0d_1,i}(t_1), Y(t_0)\big)\big] 
    + \frac{1}{N_{s_1d_1t_0}}\sum_{i=1}^{N_{s_1d_1t_0}}\ex{s_0d_1}\big[q_1\big(Y(t_0), Y_{s_1d_1,i}(t_0)\big)\big] 
    + o_p(N^{-1/2})\\
    &= \frac{1}{N_{s_0d_1t_1}}\sum_{i=1}^{N_{s_0d_1t_0}}Q_1\big(Y_{s_0d_1,i}(t_1)\big)
    + 0 
    + o_p(N^{-1/2})=\hat{\upmu}_1+o_p(N^{-1/2}),
    \end{split}
    \]
    which completes the proof.
\end{proof}

\begin{lemma}\label{lem:term3}
    Define $\hat{\mu}_3$ as given by Eq.~\eqref{eq:muhat3}. Under \ref{as:estimation},
    \begin{multline}\label{eq:lemterm3}
        \sqrt{N}\cdot
        \Big\vert
        \frac{1}{N_{s_1d_1t_0}}\sum_{i=1}^{N_{s_1d_1t_0}}
        F^{-1}_{s_0d_1t_1}
        \circ
        F_{s_0d_1t_0}
        \circ
        \hat{F}^{-1}_{s_0d_0t_0}
        \circ
        \hat{F}_{s_0d_0t_1}
        \circ
        \hat{F}^{-1}_{s_1d_0t_1}
        \circ
        \hat{F}_{s_1d_0t_0}\big(
        Y_{s_1d_1,i}(t_0)
        \big) 
        \\- 
        \frac{1}{N_{s_1d_1t_0}}\sum_{i=1}^{N_{s_1d_1t_0}}
        F^{-1}_{s_0d_1t_1}
        \circ
        F_{s_0d_1t_0}
        \circ
        F^{-1}_{s_0d_0t_0}
        \circ
        \hat{F}_{s_0d_0t_1}
        \circ
        \hat{F}^{-1}_{s_1d_0t_1}
        \circ
        \hat{F}_{s_1d_0t_0}\big(
        Y_{s_1d_1,i}(t_0)
        \big) 
        +\hat{\mu}_3
        \Big\vert
    \overset{P}{\to}0.\end{multline}
\end{lemma}

\begin{proof}
For ease of notation, define
$\hat{q}_i \coloneqq 
        \hat{F}_{s_0d_0t_1}
        \circ
        \hat{F}^{-1}_{s_1d_0t_1}
        \circ
        \hat{F}_{s_1d_0t_0}\big(
        Y_{s_1d_1,i}(t_0)
        \big) $, and
        $q_i \coloneqq 
        F_{s_0d_0t_1}
        \circ
        F^{-1}_{s_1d_0t_1}
        \circ
        F_{s_1d_0t_0}\big(
        Y_{s_1d_1,i}(t_0)
        \big)$.
Also, define $g(\cdot)\coloneqq F^{-1}_{s_0d_1t_1}
        \circ
        F_{s_0d_1t_0}(\cdot)$.
Using triangle inequality, Eq.~\eqref{eq:lemterm3} can be bounded as
\begin{multline}\label{eq:proofterm3}
    \sqrt{N}\cdot
        \Big\vert
        \frac{1}{N_{s_1d_1t_0}}\sum_{i=1}^{N_{s_1d_1t_0}}
        g
        \circ
        \hat{F}^{-1}_{s_0d_0t_0}
        (\hat{q}_i) 
        - 
        \frac{1}{N_{s_1d_1t_0}}\sum_{i=1}^{N_{s_1d_1t_0}}
        g
        \circ
        F^{-1}_{s_0d_0t_0}
        (\hat{q}_i) 
        +\hat{\mu}_3
        \Big\vert
    \leq\\
    \sqrt{N}\cdot
    \Big\vert
    \frac{1}{N_{s_1d_1t_0}}\sum_{i=1}^{N_{s_1d_1t_0}}
    g
        \circ\hat{F}^{-1}_{s_0d_0t_0}
    \big(\hat{q}_i\big)
    -
    \frac{1}{N_{s_1d_1t_0}}\sum_{i=1}^{N_{s_1d_1t_0}}
    g
        \circ F^{-1}_{s_0d_0t_0}
    \big(\hat{q}_i\big)
    \\+
    \frac{1}{N_{s_1d_1t_0}}\frac{1}{N_{s_0d_0t_0}}\sum_{i=1}^{N_{s_1d_1t_0}}\sum_{j=1}^{N_{s_0d_0t_0}}
    \frac{
    \ind{F_{s_0d_0t_0}\big(Y_{s_0d_0,j}(t_0)\big)\leq
    \hat{q}_i  
    }
    -\hat{q}_i
    }{f_{s_0d_0t_0}\big(
    F^{-1}_{s_0d_0t_0}
    (\hat{q}_i)
    \big)}
    \big(g'\circ f_{s_0d_0t_0}\circ F^{-1}_{s_0d_0t_0}
    (\hat{q}_i)
    \big)
    \Big\vert
    \\
    +
    \sqrt{N}\cdot
    \Big\vert
    \frac{1}{N_{s_1d_1t_0}}\frac{1}{N_{s_0d_0t_0}}\sum_{i=1}^{N_{s_1d_1t_0}}\sum_{j=1}^{N_{s_0d_0t_0}}
    \frac{
    \ind{F_{s_0d_0t_0}\big(Y_{s_0d_0,j}(t_0)\big)\leq
    \hat{q}_i  
    }
    -\hat{q}_i
    }{f_{s_0d_0t_0}\big(
    F^{-1}_{s_0d_0t_0}
    (\hat{q}_i)
    \big)}
    \big(g'\circ f_{s_0d_0t_0}\circ F^{-1}_{s_0d_0t_0}
    (\hat{q}_i)
    \big)
    -\hat{\mu}_3
    \Big\vert
\end{multline}

The first term on the right hand side of \eqref{eq:proofterm3} can be bounded by
\begin{multline}
    \sqrt{N}\cdot
    \frac{1}{N_{s_1d_1t_0}}\cdot\sum_{i=1}^{N_{s_1d_1t_0}}
    \Big\vert
    g\circ\hat{F}^{-1}_{s_0d_0t_0}
    (\hat{q}_i)
    -
    g\circ
    F^{-1}_{s_0d_0t_0}
    (\hat{q}_i)
    \\+
    \frac{1}{N_{s_0d_0t_0}}\sum_{j=1}^{N_{s_0d_0t_0}}
    \frac{
    \ind{F_{s_0d_0t_0}\big(Y_{s_0d_0,j}(t_0)\big)\leq
    \hat{q}_i
    }
    -\hat{q}_i
    }{f_{s_0d_0t_0}\big(
    F^{-1}_{s_0d_0t_0}
    (\hat{q}_i)
    \big)}
    \big(g'\circ f_{s_0d_0t_0}\circ F^{-1}_{s_0d_0t_0}
    (\hat{q}_i)
    \big)
    \Big\vert\leq
    \\
    \sqrt{N}\cdot
    \sup_u
    \Big\vert
    g\circ
    \hat{F}^{-1}_{s_0d_0t_0}
    (u) 
    -
    g\circ
    F^{-1}_{s_0d_0t_0}
    (u) 
    \\+
    \frac{1}{N_{s_0d_0t_0}}\sum_{j=1}^{N_{s_0d_0t_0}}
    \frac{
    \ind{F_{s_0d_0t_0}\big(Y_{s_0d_0,j}(t_0)\big)\leq
    u 
    }
    -u
    }{f_{s_0d_0t_0}\big(
    F^{-1}_{s_0d_0t_0}
    (u) 
    \big)}
    \big(g'\circ f_{s_0d_0t_0}\circ F^{-1}_{s_0d_0t_0}
    (u)
    \big)
    \Big\vert
    \\=
    \sqrt{N}\cdot
    \sup_u
    \Big\vert
    g\circ
    \hat{F}^{-1}_{s_0d_0t_0}
    (u) 
    -
    g\circ
    F^{-1}_{s_0d_0t_0}
    (u) 
    +
    \frac{
    \hat{F}_{s_0d_0t_0}\circ F^{-1}_{s_0d_0t_0}
    (u) 
    -u
    }{f_{s_0d_0t_0}\big(
    F^{-1}_{s_0d_0t_0}
    (u) 
    \big)}
    \big(g'\circ f_{s_0d_0t_0}\circ F^{-1}_{s_0d_0t_0}
    (u)
    \big)
    \Big\vert
\end{multline}
Lemma \ref{lem:gena6} implies that the latter is $o_p(1)$.

The second term on the right hand side of \eqref{eq:proofterm3} can also be rephrased as follows.
\begin{multline}
    \sqrt{N}\cdot
    \Big\vert
    \frac{1}{N_{s_1d_1t_0}}\frac{1}{N_{s_0d_0t_0}}\sum_{i=1}^{N_{s_1d_1t_0}}\sum_{j=1}^{N_{s_0d_0t_0}}
    \frac{
    \ind{F_{s_0d_0t_0}\big(Y_{s_0d_0,j}(t_0)\big)\leq
    \hat{q}_i  
    }
    -\hat{q}_i
    }{f_{s_0d_0t_0}\big(
    F^{-1}_{s_0d_0t_0}
    (\hat{q}_i)
    \big)}
    \big(g'\circ f_{s_0d_0t_0}\circ F^{-1}_{s_0d_0t_0}
    (\hat{q}_i)
    \big)
    -\hat{\mu}_3
    \Big\vert
    \\
    =\sqrt{N}\cdot
    \Big\vert
    \frac{1}{N_{s_1d_1t_0}}\frac{1}{N_{s_0d_0t_0}}\sum_{i=1}^{N_{s_1d_1t_0}}\sum_{j=1}^{N_{s_0d_0t_0}}
    \frac{
    \ind{F_{s_0d_0t_0}\big(Y_{s_0d_0,j}(t_0)\big)\leq
    \hat{q}_i  
    }
    -\hat{q}_i
    }{f_{s_0d_0t_0}\big(
    F^{-1}_{s_0d_0t_0}
    (\hat{q}_i)
    \big)}
    \big(g'\circ f_{s_0d_0t_0}\circ F^{-1}_{s_0d_0t_0}
    (\hat{q}_i)
    \big)
    \\
    -
    \frac{1}{N_{s_1d_1t_0}}\frac{1}{N_{s_0d_0t_0}}\sum_{i=1}^{N_{s_1d_1t_0}}\sum_{j=1}^{N_{s_0d_0t_0}}
    \frac{
    \ind{F_{s_0d_0t_0}\big(Y_{s_0d_0,j}(t_0)\big)\leq
    q_i  
    }
    -q_i
    }{f_{s_0d_0t_0}\big(
    F^{-1}_{s_0d_0t_0}
    (q_i)
    \big)}
    \big(g'\circ f_{s_0d_0t_0}\circ F^{-1}_{s_0d_0t_0}
    (q_i)
    \big)
    \Big\vert
    \\
    =\sqrt{N}\cdot
    \Big\vert
    \frac{1}{N_{s_1d_1t_0}}\sum_{i=1}^{N_{s_1d_1t_0}}
    \frac{
    \hat{F}_{s_0d_0t_0}\circ F^{-1}_{s_0d_0t_0}
    (\hat{q}_i)
    -\hat{q}_i
    }{f_{s_0d_0t_0}\big(
    F^{-1}_{s_0d_0t_0}
    (\hat{q}_i)
    \big)}
    \big(g'\circ f_{s_0d_0t_0}\circ F^{-1}_{s_0d_0t_0}
    (\hat{q}_i)
    \big)
    \\
    -
    \frac{1}{N_{s_1d_1t_0}}\sum_{i=1}^{N_{s_1d_1t_0}}
    \frac{
    \hat{F}_{s_0d_0t_0}\circ F^{-1}_{s_0d_0t_0}
    (q_i)
    -q_i
    }{f_{s_0d_0t_0}\big(
    F^{-1}_{s_0d_0t_0}
    (q_i)
    \big)}
    \big(g'\circ f_{s_0d_0t_0}\circ F^{-1}_{s_0d_0t_0}
    (q_i)
    \big)
    \Big\vert,
\end{multline}
which can be bounded further by
\begin{multline}\label{eq:proof3a1011}
    \sqrt{N}\cdot
    \Big\vert
    \frac{1}{N_{s_1d_1t_0}}\sum_{i=1}^{N_{s_1d_1t_0}}
    \frac{
    \hat{F}_{s_0d_0t_0}\circ F^{-1}_{s_0d_0t_0}(\hat{q}_i)
    -\hat{q}_i
    }{f_{s_0d_0t_0}\big(
    F^{-1}_{s_0d_0t_0}
    (\hat{q}_i ) 
    \big)}
    \big(g'\circ f_{s_0d_0t_0}\circ F^{-1}_{s_0d_0t_0}
    (\hat{q}_i)
    \big)
    \\
    -\frac{1}{N_{s_1d_1t_0}}\sum_{i=1}^{N_{s_1d_1t_0}}
    \frac{
    \hat{F}_{s_0d_0t_0}\circ F^{-1}_{s_0d_0t_0}(q_i)
    -q_i
    }{f_{s_0d_0t_0}\big(
    F^{-1}_{s_0d_0t_0}
    (\hat{q}_i ) 
    \big)}
    \big(g'\circ f_{s_0d_0t_0}\circ F^{-1}_{s_0d_0t_0}
    (\hat{q}_i)
    \big)
    \Big\vert
    \\
    +\sqrt{N}\cdot
    \Big\vert
    \frac{1}{N_{s_1d_1t_0}}\sum_{i=1}^{N_{s_1d_1t_0}}
    \frac{
    \hat{F}_{s_0d_0t_0}\circ F^{-1}_{s_0d_0t_0}(q_i)
    -q_i
    }{f_{s_0d_0t_0}\big(
    F^{-1}_{s_0d_0t_0}
    (\hat{q}_i ) 
    \big)}\big(g'\circ f_{s_0d_0t_0}\circ F^{-1}_{s_0d_0t_0}
    (\hat{q}_i)
    \big)
    \\
    -\frac{1}{N_{s_1d_1t_0}}\sum_{i=1}^{N_{s_1d_1t_0}}
    \frac{
    \hat{F}_{s_0d_0t_0}\circ F^{-1}_{s_0d_0t_0}(q_i)
    -q_i
    }{f_{s_0d_0t_0}\big(
    F^{-1}_{s_0d_0t_0}
    (q_i ) 
    \big)}
    \big(g'\circ f_{s_0d_0t_0}\circ F^{-1}_{s_0d_0t_0}
    (q_i)
    \big)
    \Big\vert
\end{multline}
The first term of Eq.~\eqref{eq:proof3a1011} is bounded by
\begin{multline}\label{eq:proof3a10}
    \sqrt{N}\cdot
    \frac{1}{N_{s_1d_1t_0}}\sum_{i=1}^{N_{s_1d_1t_0}}\Big\vert
    \frac{
    \hat{F}_{s_0d_0t_0}\circ F^{-1}_{s_0d_0t_0}(\hat{q}_i)
    -\hat{q}_i
    }{f_{s_0d_0t_0}\big(
    F^{-1}_{s_0d_0t_0}
    (\hat{q}_i ) 
    \big)}
    -
    \frac{
    \hat{F}_{s_0d_0t_0}\circ F^{-1}_{s_0d_0t_0}(q_i)
    -q_i
    }{f_{s_0d_0t_0}\big(
    F^{-1}_{s_0d_0t_0}
    (\hat{q}_i ) 
    \big)}
    \Big\vert\cdot
    \big\vert g'\circ f_{s_0d_0t_0}\circ F^{-1}_{s_0d_0t_0}
    (\hat{q}_i)
    \big\vert
    \leq\\
    \sqrt{N}\cdot\sup_u
    \vert
    \frac{g'\circ f_{s_0d_0t_0}\circ F^{-1}_{s_0d_0t_0}
    (u)}{f_{s_0d_0t_0}\big(
    F^{-1}_{s_0d_0t_0}
    (u) 
    \big)}
    \vert
    \cdot
    \sup_y\Big\vert
    \hat{F}_{s_0d_0t_0}\circ F^{-1}_{s_0d_0t_0}\circ
    \hat{F}_{s_0d_0t_0}
        \circ
        \hat{F}^{-1}_{s_0d_1t_0}
        \circ
        \hat{F}_{s_0d_1t_1}
        (y)
    -
    \hat{F}_{s_0d_0t_0}
        \circ
        \hat{F}^{-1}_{s_0d_1t_0}
        \circ
        \hat{F}_{s_0d_1t_1}
        (y)
    \\
    -\big(
    \hat{F}_{s_0d_0t_0}\circ F^{-1}_{s_0d_0t_0}\circ
    F_{s_0d_0t_0}
        \circ
        F^{-1}_{s_0d_1t_0}
        \circ
        F_{s_0d_1t_1}
        (y)
    -
    F_{s_0d_0t_0}
        \circ
        F^{-1}_{s_0d_1t_0}
        \circ
        F_{s_0d_1t_1}
        (y)
    \big)
    \Big\vert
    =\\
    \sqrt{N}\cdot\sup_u
    \vert
    \frac{g'\circ f_{s_0d_0t_0}\circ F^{-1}_{s_0d_0t_0}
    (u)}{f_{s_0d_0t_0}\big(
    F^{-1}_{s_0d_0t_0}
    (u) 
    \big)}
    \vert
    \cdot
    \sup_y\Big\vert
    \hat{F}_{s_0d_0t_0}\circ F^{-1}_{s_0d_0t_0}\circ
    \hat{F}_{s_0d_0t_0}
        \circ
        \hat{F}^{-1}_{s_0d_1t_0}
        \circ
        \hat{F}_{s_0d_1t_1}
        (y)
    -\\
    \hat{F}_{s_0d_0t_0}\circ F^{-1}_{s_0d_0t_0}\circ
    F_{s_0d_0t_0}
        \circ
        F^{-1}_{s_0d_1t_0}
        \circ
        F_{s_0d_1t_1}
        (y)
    \\
    -\big(
    F_{s_0d_0t_0}\circ F^{-1}_{s_0d_0t_0}\circ
    \hat{F}_{s_0d_0t_0}
        \circ
        \hat{F}^{-1}_{s_0d_1t_0}
        \circ
        \hat{F}_{s_0d_1t_1}
        (y)
    -
    F_{s_0d_0t_0}\circ F^{-1}_{s_0d_0t_0}\circ
    F_{s_0d_0t_0}
        \circ
        F^{-1}_{s_0d_1t_0}
        \circ
        F_{s_0d_1t_1}
        (y)
    \big)
    \Big\vert
    =\\
    \sqrt{N}\cdot\sup_u
    \vert
    \frac{g'\circ f_{s_0d_0t_0}\circ F^{-1}_{s_0d_0t_0}
    (u)}{f_{s_0d_0t_0}\big(
    F^{-1}_{s_0d_0t_0}
    (u) 
    \big)}
    \vert
    \cdot
    \sup_y\Big\vert
    \hat{F}_{s_0d_0t_0}\circ F^{-1}_{s_0d_0t_0}\circ
    \hat{m}
        (y)
    -
    \hat{F}_{s_0d_0t_0}\circ F^{-1}_{s_0d_0t_0}\circ
    m
        (y)
    \\
    -\big(
    F_{s_0d_0t_0}\circ F^{-1}_{s_0d_0t_0}\circ
    \hat{m}
        (y)
    -
    F_{s_0d_0t_0}\circ F^{-1}_{s_0d_0t_0}\circ
    m
        (y)
    \big)
    \Big\vert,
\end{multline}
where $m(\cdot)=F_{s_0d_0t_1}
        \circ
        F^{-1}_{s_1d_0t_1}
        \circ
        F_{s_1d_0t_0}
        (\cdot)$, and $\hat{m}(\cdot)=
        \hat{F}_{s_0d_0t_1}
        \circ
        \hat{F}^{-1}_{s_1d_0t_1}
        \circ
        \hat{F}_{s_1d_0t_0}
        (\cdot)$.
The term $\sup_u
    \vert
    \frac{g'\circ f_{s_0d_0t_0}\circ F^{-1}_{s_0d_0t_0}
    (u)}{f_{s_0d_0t_0}\big(
    F^{-1}_{s_0d_0t_0}
    (u) 
    \big)}
    \vert$
is bounded due to \ref{as:estimation}.
Choose $\delta=1/3$.
Due to Lemma \ref{lem:genuniform}, $\hat{m}
        (y)-
        m
        (y)=o_p(N^{-\delta})$ and $F^{-1}_{s_0d_0t_0}$ has the required smoothness properties due to \ref{as:estimation}, $F^{-1}_{s_0d_0t_0}\circ\hat{m}(y)-F^{-1}_{s_0d_0t_0}\circ m(y)=o_p(N^{-\delta})$.
Therefore, Lemma A.5 of \citet{athey2006identification} with the choice of $\eta=1/2$ implies that the right hand side of Eq.~\eqref{eq:proof3a10} is $o_p(1)$.

Finally, the second term of Eq.~\eqref{eq:proof3a1011} is upper-bounded by
\begin{multline}
    \sqrt{N}\cdot
    \frac{1}{N_{s_1d_1t_0}}\sum_{i=1}^{N_{s_1d_1t_0}}
    \Big\vert
    \hat{F}_{s_0d_0t_0}\circ F^{-1}_{s_0d_0t_0}(q_i)
    -q_i
    \Big\vert\cdot\Big\vert\frac{
    g'\circ f_{s_0d_0t_0}\circ F^{-1}_{s_0d_0t_0}
    (\hat{q}_i)
    }{f_{s_0d_0t_0}\big(
    F^{-1}_{s_0d_0t_0}
    (\hat{q}_i)
    \big)}
    -
    \frac{g'\circ f_{s_0d_0t_0}\circ F^{-1}_{s_0d_0t_0}
    (q_i)}{f_{s_0d_0t_0}\big(
    F^{-1}_{s_0d_0t_0}
    (q_i)
    \big)}
    \Big\vert\\
    \leq
    \Big(N^{1/4}
    \sup_y
    \Big\vert
    \hat{F}_{s_0d_0t_0}\circ F^{-1}_{s_0d_0t_0}\big(m(
        y)
        \big)
    -m(
        y
        )
    \Big\vert\Big)\\\times
    \Big(N^{1/4}
    \sup_y\Big\vert
    \frac{
    g'\circ f_{s_0d_0t_0}\circ F^{-1}_{s_0d_0t_0}
    \circ\hat{m}(y)}{f_{s_0d_0t_0}\circ
    F^{-1}_{s_0d_0t_0}
    \circ\hat{m}(
        y)}
    -
    \frac{
    g'\circ f_{s_0d_0t_0}\circ F^{-1}_{s_0d_0t_0}
    \circ m(y)
    }{f_{s_0d_0t_0}\circ
    F^{-1}_{s_0d_0t_0}
    \circ m(
        y)
    }
    \Big\vert\Big)
    \\
    =
    \Big(N^{1/4}
    \sup_y
    \Big\vert
    \hat{F}_{s_0d_0t_0}\circ F^{-1}_{s_0d_0t_0}\big(m(
        y)
        \big)
    -F_{s_0d_0t_0}\circ F^{-1}_{s_0d_0t_0}\big(m(
        y
        )\big)
    \Big\vert\Big)\\\times\Big(N^{1/4}
    \sup_y\Big\vert
    \frac{
    g'\circ f_{s_0d_0t_0}\circ F^{-1}_{s_0d_0t_0}
    \circ\hat{m}(y)}{f_{s_0d_0t_0}\circ
    F^{-1}_{s_0d_0t_0}
    \circ\hat{m}(
        y)}
    -
    \frac{
    g'\circ f_{s_0d_0t_0}\circ F^{-1}_{s_0d_0t_0}
    \circ m(y)
    }{f_{s_0d_0t_0}\circ
    F^{-1}_{s_0d_0t_0}
    \circ m(
        y)
    }
    \Big\vert\Big)
    \\
    =
    \Big(N^{1/4}
    \sup_y
    \Big\vert
    \hat{F}_{s_0d_0t_0}(
        y)
    -F_{s_0d_0t_0}(
        y
        )
    \Big\vert\Big)\\\times\Big(N^{1/4}
    \sup_y\Big\vert
    \frac{
    g'\circ f_{s_0d_0t_0}\circ F^{-1}_{s_0d_0t_0}
    \circ\hat{m}(y)}{f_{s_0d_0t_0}\circ
    F^{-1}_{s_0d_0t_0}
    \circ\hat{m}(
        y)}
    -
    \frac{
    g'\circ f_{s_0d_0t_0}\circ F^{-1}_{s_0d_0t_0}
    \circ m(y)
    }{f_{s_0d_0t_0}\circ
    F^{-1}_{s_0d_0t_0}
    \circ m(
        y)
    }
    \Big\vert\Big),
\end{multline}
where the first term is $o_p(1)$ due to Lemma \ref{lem:uniform}, and the second term is $o_p(1)$ due to Lemma \ref{lem:genuniform} and Assumption \ref{as:estimation}.
\end{proof}

\begin{lemma}\label{lem:term1} Define $\hat{\mu}_1$ as given by Eq.~\eqref{eq:muhat1}. Under \ref{as:estimation},
    \begin{multline}\label{eq:lemterm1}
        \sqrt{N}\cdot
        \Big\vert
        \frac{1}{N_{s_1d_1t_0}}\sum_{i=1}^{N_{s_1d_1t_0}}
        \hat{F}^{-1}_{s_0d_1t_1}
        \circ
        \hat{F}_{s_0d_1t_0}
        \circ
        \hat{F}^{-1}_{s_0d_0t_0}
        \circ
        \hat{F}_{s_0d_0t_1}
        \circ
        \hat{F}^{-1}_{s_1d_0t_1}
        \circ
        \hat{F}_{s_1d_0t_0}\big(
        Y_{s_1d_1,i}(t_0)
        \big) 
        \\- 
        \frac{1}{N_{s_1d_1t_0}}\sum_{i=1}^{N_{s_1d_1t_0}}
        F^{-1}_{s_0d_1t_1}
        \circ
        \hat{F}_{s_0d_1t_0}
        \circ
        \hat{F}^{-1}_{s_0d_0t_0}
        \circ
        \hat{F}_{s_0d_0t_1}
        \circ
        \hat{F}^{-1}_{s_1d_0t_1}
        \circ
        \hat{F}_{s_1d_0t_0}\big(
        Y_{s_1d_1,i}(t_0)
        \big) 
        +\hat{\mu}_1
        \Big\vert
    \overset{P}{\to}0.\end{multline}
\end{lemma}
\begin{proof}
    The proof is analogous to that of Lemma \ref{lem:term3}.
    To adapt the proof, we just need to replace the following definitions:
    Define
        $\hat{q}_i \coloneqq 
        \hat{F}^{-1}_{s_0d_1t_0}
        \circ
        \hat{F}^{-1}_{s_0d_0t_0}
        \circ
        \hat{F}_{s_0d_0t_1}
        \circ
        \hat{F}^{-1}_{s_1d_0t_1}
        \circ
        \hat{F}_{s_1d_0t_0}\big(
        Y_{s_1d_1,i}(t_0)
        \big) $, 
        $q_i \coloneqq 
        F^{-1}_{s_0d_1t_0}
        \circ
        F^{-1}_{s_0d_0t_0}
        \circ
        F_{s_0d_0t_1}
        \circ
        F^{-1}_{s_1d_0t_1}
        \circ
        F_{s_1d_0t_0}\big(
        Y_{s_1d_1,i}(t_0)
        \big)$, and $g(\cdot)$ is the identity function.
        Finally, the subscripts $s_0d_0t_0$ are replaced by $s_0d_1t_1$.
        The rest of the proof is identical.
\end{proof}

\begin{lemma}\label{lem:term5} Define $\hat{\mu}_5$ as given by Eq.~\eqref{eq:muhat5}. Under \ref{as:estimation},
    \begin{multline}\label{eq:lemterm5}
        \sqrt{N}\cdot
        \Big\vert
        \frac{1}{N_{s_1d_1t_0}}\sum_{i=1}^{N_{s_1d_1t_0}}
        F^{-1}_{s_0d_1t_1}
        \circ
        F_{s_0d_1t_0}
        \circ
        F^{-1}_{s_0d_0t_0}
        \circ
        F_{s_0d_0t_1}
        \circ
        \hat{F}^{-1}_{s_1d_0t_1}
        \circ
        \hat{F}_{s_1d_0t_0}\big(
        Y_{s_1d_1,i}(t_0)
        \big) 
        \\- 
        \frac{1}{N_{s_1d_1t_0}}\sum_{i=1}^{N_{s_1d_1t_0}}
        F^{-1}_{s_0d_1t_1}
        \circ
        F_{s_0d_1t_0}
        \circ
        F^{-1}_{s_0d_0t_0}
        \circ
        F_{s_0d_0t_1}
        \circ
        F^{-1}_{s_1d_0t_1}
        \circ
        \hat{F}_{s_1d_0t_0}\big(
        Y_{s_1d_1,i}(t_0)
        \big) 
        +\hat{\mu}_5
        \Big\vert
    \overset{P}{\to}0.\end{multline}
\end{lemma}
\begin{proof}
    The proof is analogous to that of Lemma \ref{lem:term3}.
    To adapt the proof, we just need to replace the following definitions:
    Define
        $\hat{q}_i \coloneqq 
        \hat{F}_{s_1d_0t_0}\big(
        Y_{s_1d_1,i}(t_0)
        \big) $, 
        $q_i \coloneqq 
        F_{s_1d_0t_0}\big(
        Y_{s_1d_1,i}(t_0)
        \big)$, and $g(\cdot)\coloneqq F^{-1}_{s_0d_1t_1}
        \circ
        F_{s_0d_1t_0}
        \circ
        F^{-1}_{s_0d_0t_0}
        \circ
        F_{s_0d_0t_1}(\cdot)$.
        Finally, the subscripts $s_0d_0t_0$ are replaced by $s_1d_0t_1$.
        The rest of the proof is identical.
\end{proof}

\begin{lemma}\label{lem:term2}
    Define $\hat{\mu}_2$ as given by Eq.~\eqref{eq:muhat2}. Under \ref{as:estimation},
    \begin{multline}\label{eq:lemterm2}
        \sqrt{N}\cdot
        \Big\vert
        \frac{1}{N_{s_1d_1t_0}}\sum_{i=1}^{N_{s_1d_1t_0}}
        F^{-1}_{s_0d_1t_1}
        \circ
        \hat{F}_{s_0d_1t_0}
        \circ
        \hat{F}^{-1}_{s_0d_0t_0}
        \circ
        \hat{F}_{s_0d_0t_1}
        \circ
        \hat{F}^{-1}_{s_1d_0t_1}
        \circ
        \hat{F}_{s_1d_0t_0}\big(
        Y_{s_1d_1,i}(t_0)
        \big) 
        \\- 
        \frac{1}{N_{s_1d_1t_0}}\sum_{i=1}^{N_{s_1d_1t_0}}
        F^{-1}_{s_0d_1t_1}
        \circ
        F_{s_0d_1t_0}
        \circ
        \hat{F}^{-1}_{s_0d_0t_0}
        \circ
        \hat{F}_{s_0d_0t_1}
        \circ
        \hat{F}^{-1}_{s_1d_0t_1}
        \circ
        \hat{F}_{s_1d_0t_0}\big(
        Y_{s_1d_1,i}(t_0)
        \big) 
        -\hat{\mu}_2
        \Big\vert
    \overset{P}{\to}0.\end{multline}
\end{lemma}
\begin{proof}
For ease of notation, define 
$\hat{q}_i \coloneqq \hat{F}^{-1}_{s_0d_0t_0}
        \circ
        \hat{F}_{s_0d_0t_1}
        \circ
        \hat{F}^{-1}_{s_1d_0t_1}
        \circ
        \hat{F}_{s_1d_0t_0}\big(
        Y_{s_1d_1,i}(t_0)
        \big) $, and
        $q_i \coloneqq F^{-1}_{s_0d_0t_0}
        \circ
        F_{s_0d_0t_1}
        \circ
        F^{-1}_{s_1d_0t_1}
        \circ
        F_{s_1d_0t_0}\big(
        Y_{s_1d_1,i}(t_0)
        \big)$.
Also define $g(\cdot)\coloneqq F^{-1}_{s_0d_1t_1}(\cdot)$.
Using triangle inequality and plugging in $\hat{\mu}_2$, Eq.~\eqref{eq:lemterm2} can be bounded as
\begin{multline}\label{eq:prooflema3}
    \sqrt{N}\cdot
        \Big\vert
        \frac{1}{N_{s_1d_1t_0}}\sum_{i=1}^{N_{s_1d_1t_0}}
        g
        \circ
        \hat{F}_{s_0d_1t_0}
         (\hat{q}_i)
        - 
        \frac{1}{N_{s_1d_1t_0}}\sum_{i=1}^{N_{s_1d_1t_0}}
        g
        \circ
        F_{s_0d_1t_0}(\hat{q}_i)
        -\hat{\mu}_2
        \Big\vert
        \\\leq 
        \sqrt{N}\cdot
        \Big\vert
        \frac{1}{N_{s_1d_1t_0}}\sum_{i=1}^{N_{s_1d_1t_0}}\big(
        g
        \circ
        \hat{F}_{s_0d_1t_0}
         (\hat{q}_i)
         -g
        \circ
        F_{s_0d_1t_0}(\hat{q}_i)\big)
        \\- 
        \frac{1}{N_{s_1d_1t_0}}\frac{1}{N_{s_0d_1t_0}}\sum_{i=1}^{N_{s_1d_1t_0}}\sum_{j=1}^{N_{s_0d_1t_0}}
        \big(
        \ind{
        Y_{s_0d_1,j}(t_0)\leq \hat{q}_i
        }
        -F_{s_0d_1t_0}(\hat{q}_i)
        \big)
        \cdot g' \circ F_{s_0d_1t_0}(\hat{q}_i)
        \Big\vert
        \\+ 
        \sqrt{N}\cdot
        \Big\vert
        \frac{1}{N_{s_1d_1t_0}}\frac{1}{N_{s_0d_1t_0}}\sum_{i=1}^{N_{s_1d_1t_0}}\sum_{j=1}^{N_{s_0d_1t_0}}
        \big(
        \ind{
        Y_{s_0d_1,j}(t_0)\leq \hat{q}_i
        }
        -F_{s_0d_1t_0}(\hat{q}_i)
        \big)
        \cdot g' \circ F_{s_0d_1t_0}(\hat{q}_i)
        \\- 
        \frac{1}{N_{s_1d_1t_0}}\frac{1}{N_{s_0d_1t_0}}\sum_{i=1}^{N_{s_1d_1t_0}}\sum_{j=1}^{N_{s_0d_1t_0}}
        \big(
        \ind{
        Y_{s_0d_1,j}(t_0)\leq q_i
        }
        -F_{s_0d_1t_0}(q_i)
        \big)\cdot g' \circ F_{s_0d_1t_0}(q_i)
        \Big\vert
\end{multline}

The first term on the right hand side of Eq.~\eqref{eq:prooflema3} can be bounded by
\begin{multline}\label{eq:prooflema3first}
    \sqrt{N}\cdot
        \sup_y\Big\vert\big(
        g
        \circ
        \hat{F}_{s_0d_1t_0}
         (y)
         -g
        \circ
        F_{s_0d_1t_0}(y)\big)
        - 
        \frac{1}{N_{s_0d_1t_0}}\sum_{j=1}^{N_{s_0d_1t_0}}
        \big(
        \ind{
        Y_{s_0d_1,j}(t_0)\leq y
        }
        -F_{s_0d_1t_0}(y)
        \big)\cdot g' \circ F_{s_0d_1t_0}(y)
        \Big\vert
        \\= 
    \sqrt{N}\cdot
        \sup_y\Big\vert
        g
        \circ
        \hat{F}_{s_0d_1t_0}
         (y)
         -g
        \circ
        F_{s_0d_1t_0}(y)
        - 
        \big(
        \hat{F}_{s_0d_1t_0}(y)
        -F_{s_0d_1t_0}(y)
        \big)\cdot g' \circ F_{s_0d_1t_0}(y)
        \Big\vert
        \\\overset{(a)}{\leq} 
    \sqrt{N}\cdot
    \sup_u\big\vert
    g''(u)
    \big\vert
    \cdot
        \sup_y\big\vert\hat{F}_{s_0d_1t_0}(y)
        -F_{s_0d_1t_0}(y)\big\vert^2
        \\= 
    \sqrt{N}\cdot
    \sup_u\big\vert
    g''(u)
    \big\vert
    \cdot
        \sup_y\big\vert\hat{F}_{s_0d_1t_0}(y)
        -F_{s_0d_1t_0}(y)\big\vert^2,
\end{multline}
where $g''(u) = -\frac{1}{f_{s_0d_1t_1}(y)^3}\frac{\partial f_{s_0d_1t_1}}{\partial y}(u)$.
Note that in $(a)$ we used the expansion of $g\equiv F^{-1}_{s_0d_1t_1}$ around $F_{s_0d_1t_0}(y)$.
The right hand side of Eq.~\eqref{eq:prooflema3first} is $o_p(1)$ due to Assumption \ref{as:estimation}, and Lemma \ref{lem:uniform}.

Now consider the second term of Eq.~\eqref{eq:prooflema3}, which is equal to
\begin{multline}
    \sqrt{N}\cdot
        \Big\vert
        \frac{1}{N_{s_1d_1t_0}}\sum_{i=1}^{N_{s_1d_1t_0}}
        \big(
        \hat{F}_{s_0d_1t_0}(\hat{q}_i)
        -F_{s_0d_1t_0}(\hat{q}_i)
        \big)\cdot g' \circ F_{s_0d_1t_0}(\hat{q}_i)
        \\- 
        \frac{1}{N_{s_1d_1t_0}}\sum_{i=1}^{N_{s_1d_1t_0}}
        \big(
        \hat{F}_{s_0d_1t_0}(q_i)
        -F_{s_0d_1t_0}(q_i)
        \big)\cdot g' \circ F_{s_0d_1t_0}(q_i)
        \Big\vert.
\end{multline}
The latter can be bounded using triangle inequality by:
\begin{multline}\label{eq:prooflema3second}
    \sqrt{N}\Big\vert
        \frac{1}{N_{s_1d_1t_0}}\sum_{i=1}^{N_{s_1d_1t_0}}
        \big(
        \hat{F}_{s_0d_1t_0}(\hat{q}_i)
        -F_{s_0d_1t_0}(\hat{q}_i)
        \big)\cdot g' \circ F_{s_0d_1t_0}(\hat{q}_i)
        \\- 
        \frac{1}{N_{s_1d_1t_0}}\sum_{i=1}^{N_{s_1d_1t_0}}
        \big(
        \hat{F}_{s_0d_1t_0}(q_i)
        -F_{s_0d_1t_0}(q_i)
        \big)\cdot g' \circ F_{s_0d_1t_0}(\hat{q}_i)
        \Big\vert
        \\+
        \sqrt{N}\Big\vert
        \frac{1}{N_{s_1d_1t_0}}\sum_{i=1}^{N_{s_1d_1t_0}}
        \big(
        \hat{F}_{s_0d_1t_0}(q_i)
        -F_{s_0d_1t_0}(q_i)
        \big)\cdot g' \circ F_{s_0d_1t_0}(\hat{q}_i)
        -\\ 
        \frac{1}{N_{s_1d_1t_0}}\sum_{i=1}^{N_{s_1d_1t_0}}
        \big(
        \hat{F}_{s_0d_1t_0}(q_i)
        -F_{s_0d_1t_0}(q_i)
        \big)\cdot g' \circ F_{s_0d_1t_0}(q_i)
        \Big\vert.
\end{multline}
The first term on the right hand side of Eq.~\eqref{eq:prooflema3second} is bounded by
\begin{equation}
    \sqrt{N}\cdot
    \sup_y\big\vert g' \circ F_{s_0d_1t_0}(y)\big\vert
    \cdot
    \sup_y\big\vert
    \hat{F}_{s_0d_1t_0}\circ \hat{m}(y)
        -F_{s_0d_1t_0}
        \circ m(y)
    -\big(
    \hat{F}_{s_0d_1t_0}
    \circ m(y)
        -F_{s_0d_1t_0}
        \circ m(y)
    \big)
    \big\vert,
\end{equation}
where $m(y)=F^{-1}_{s_0d_0t_0}
        \circ
        F_{s_0d_0t_1}
        \circ
        F^{-1}_{s_1d_0t_1}
        \circ
        F_{s_1d_0t_0}(y)$, and $\hat{m}(y)=\hat{F}^{-1}_{s_0d_0t_0}
        \circ
        \hat{F}_{s_0d_0t_1}
        \circ
        \hat{F}^{-1}_{s_1d_0t_1}
        \circ
        \hat{F}_{s_1d_0t_0}(y)$.
The latter is $o_p(1)$ due to Assumption \ref{as:estimation} and Lemma A.5 of \citet{athey2006identification} with the choice of $\eta=1/2,\delta=1/3$.
Note that $
    \hat{m}(y)
    -
    m(y)=o_p(N^{-\delta})
    $ due to Lemma \ref{lem:genuniform} for any $\delta<1/2$, and the conditions of Lemma A.5 of \cite{athey2006identification} are satisfied.

Finally, the second term of Eq.~\eqref{eq:prooflema3second} is bounded by
\begin{multline}
    \sqrt{N}\cdot
    \sup_y\big\vert\hat{F}_{s_0d_1t_0}(y)-F_{s_0d_1t_0}(y)\big\vert\cdot
    \max_i
    \Big\vert
        g'\circ F_{s_0d_1t_0}(\hat{q}_i)-
        g'\circ F_{s_0d_1t_0}(q_i)
    \Big\vert
    \\
    \leq
    \Big(N^{1/4}\sup_y\big\vert\hat{F}_{s_0d_1t_0}(y)-F_{s_0d_1t_0}(y)\big\vert\Big)
    \cdot
    \Big(N^{1/4}
    \sup_y
    \big\vert
    g'\circ F_{s_0d_1t_0}\circ\hat{m}(y)-
        g'\circ F_{s_0d_1t_0}\circ m(y)
    \big\vert\Big),
\end{multline}
where the first term is $o_p(1)$ due to Lemma \ref{lem:uniform}, and second term is $o_p(1)$ due to Assumption \ref{as:estimation} and the fact that 
$\hat{m}(y)
    -
    m(y)=o_p(N^{-1/4})
    $ as above.
\end{proof}

\begin{lemma}\label{lem:term4}
    Define $\hat{\mu}_4$ as given by Eq.~\eqref{eq:muhat4}. Under \ref{as:estimation},
    \begin{multline}
        \sqrt{N}\cdot
        \Big\vert
        \frac{1}{N_{s_1d_1t_0}}\sum_{i=1}^{N_{s_1d_1t_0}}
        F^{-1}_{s_0d_1t_1}
        \circ
        F_{s_0d_1t_0}
        \circ
        F^{-1}_{s_0d_0t_0}
        \circ
        \hat{F}_{s_0d_0t_1}
        \circ
        \hat{F}^{-1}_{s_0d_1t_1}
        \circ
        \hat{F}_{s_0d_1t_0}\big(
        Y_{s_1d_1,i}(t_0)
        \big) 
        \\- 
        \frac{1}{N_{s_1d_1t_0}}\sum_{i=1}^{N_{s_1d_1t_0}}
        F^{-1}_{s_0d_1t_1}
        \circ
        F_{s_0d_1t_0}
        \circ
        F^{-1}_{s_0d_0t_0}
        \circ
        F_{s_0d_0t_1}
        \circ
        \hat{F}^{-1}_{s_0d_1t_1}
        \circ
        \hat{F}_{s_0d_1t_0}\big(
        Y_{s_1d_1,i}(t_0)
        \big) 
        -\hat{\mu}_4
        \Big\vert
    \overset{P}{\to}0.\end{multline}
\end{lemma}
\begin{proof}
    The proof is analogous to that of Lemma \ref{lem:term2}.
    To adapt the proof, we just need to replace the following definitions:
    Define $\hat{q}_i\coloneqq 
        \hat{F}^{-1}_{s_1d_0t_1}
        \circ
        \hat{F}_{s_1d_0t_0}\big(
        Y_{s_1d_1,i}(t_0)
        \big) $, 
        $q_i \coloneqq 
        F^{-1}_{s_1d_0t_1}
        \circ
        F_{s_1d_0t_0}\big(
        Y_{s_1d_1,i}(t_0)
        \big)$,
and $g(\cdot)\coloneqq F^{-1}_{s_0d_1t_1}\circ F_{s_0d_1t_0}\circ F^{-1}_{s_0d_0t_0}(\cdot)$.
The rest of the proof is identical.
\end{proof}

\begin{lemma}\label{lem:term6}
    Define $\hat{\mu}_6$ as given by Eq.~\eqref{eq:muhat6}. Under \ref{as:estimation},
    \begin{multline}\label{eq:lemterm6}
        \sqrt{N}\cdot
        \Big\vert
        \frac{1}{N_{s_1d_1t_0}}\sum_{i=1}^{N_{s_1d_1t_0}}
        F^{-1}_{s_0d_1t_1}
        \circ
        F_{s_0d_1t_0}
        \circ
        F^{-1}_{s_0d_0t_0}
        \circ
        F_{s_0d_0t_1}
        \circ
        F^{-1}_{s_1d_0t_1}
        \circ
        \hat{F}_{s_1d_0t_0}\big(
        Y_{s_1d_1,i}(t_0)
        \big) 
        \\- 
        \frac{1}{N_{s_1d_1t_0}}\sum_{i=1}^{N_{s_1d_1t_0}}
        F^{-1}_{s_0d_1t_1}
        \circ
        F_{s_0d_1t_0}
        \circ
        F^{-1}_{s_0d_0t_0}
        \circ
        F_{s_0d_0t_1}
        \circ
        F^{-1}_{s_1d_0t_1}
        \circ
        F_{s_1d_0t_0}\big(
        Y_{s_1d_1,i}(t_0)
        \big) 
        -\hat{\mu}_6
        \Big\vert
    \overset{P}{\to}0.\end{multline}
\end{lemma}
\begin{proof}
For ease of notation, define $y_i\coloneqq Y_{s_1d_1,i}(t_0)$, and $g(\cdot)\coloneqq F^{-1}_{s_0d_1t_1}
        \circ
        F_{s_0d_1t_0}
        \circ
        F^{-1}_{s_0d_0t_0}
        \circ
        F_{s_0d_0t_1}
        \circ
        F^{-1}_{s_1d_0t_1}(\cdot)$.
Using triangle inequality and plugging in $\hat{\mu}_6$, Eq.~\eqref{eq:lemterm6} can be bounded as
\begin{multline}\label{eq:prooflemterm6}
    \sqrt{N}\cdot
        \Big\vert
        \frac{1}{N_{s_1d_1t_0}}\sum_{i=1}^{N_{s_1d_1t_0}}
        g
        \circ
        \hat{F}_{s_1d_0t_0}
         (y_i)
        - 
        \frac{1}{N_{s_1d_1t_0}}\sum_{i=1}^{N_{s_1d_1t_0}}
        g
        \circ
        F_{s_1d_0t_0}(y_i)
        -\hat{\mu}_6
        \Big\vert
        \\= 
        \sqrt{N}\cdot
        \Big\vert
        \frac{1}{N_{s_1d_1t_0}}\sum_{i=1}^{N_{s_1d_1t_0}}\big(
        g
        \circ
        \hat{F}_{s_1d_0t_0}
         (y_i)
         -g
        \circ
        F_{s_1d_0t_0}(y_i)
        \\- 
        \frac{1}{N_{s_1d_0t_0}}\sum_{j=1}^{N_{s_1d_0t_0}}
        \big(
        \ind{
        Y_{s_1d_0,j}(t_0)\leq y_i
        }
        -F_{s_1d_0t_0}(y_i)
        \big)\cdot g' \circ F_{s_1d_0t_0}(y_i)
        \big)
        \Big\vert
        \\\leq
    \sqrt{N}\cdot
        \sup_y\Big\vert
        g
        \circ
        \hat{F}_{s_1d_0t_0}
         (y)
         -g
        \circ
        F_{s_1d_0t_0}(y)
        - 
        \frac{1}{N_{s_1d_0t_0}}\sum_{j=1}^{N_{s_1d_0t_0}}
        \big(
        \ind{
        Y_{s_1d_0,j}(t_0)\leq y
        }
        -F_{s_1d_0t_0}(y)
        \big)\cdot g' \circ F_{s_1d_0t_0}(y)
        \Big\vert
        \\= 
    \sqrt{N}\cdot
        \sup_y\Big\vert
        g
        \circ
        \hat{F}_{s_1d_0t_0}
         (y)
         -g
        \circ
        F_{s_1d_0t_0}(y)
        - 
        \big(
        \hat{F}_{s_1d_0t_0}(y)
        -F_{s_1d_0t_0}(y)
        \big)\cdot g' \circ F_{s_1d_0t_0}(y)
        \Big\vert
        \\\overset{(a)}{\leq} 
    \sqrt{N}\cdot
    \sup_u\big\vert
    g''(u)
    \big\vert
    \cdot
        \sup_y\big\vert\hat{F}_{s_1d_0t_0}(y)
        -F_{s_1d_0t_0}(y)\big\vert^2
        \\= 
    \sqrt{N}\cdot
    \sup_u\big\vert
    g''(u)
    \big\vert
    \cdot
        \sup_y\big\vert\hat{F}_{s_1d_0t_0}(y)
        -F_{s_1d_0t_0}(y)\big\vert^2,
\end{multline}
where in $(a)$ we used the expansion of $g$ around $F_{s_1d_0t_0}(y)$.
The right hand side of Eq.~\eqref{eq:prooflemterm6} is $o_p(1)$ due to Assumption \ref{as:estimation}, and Lemma \ref{lem:uniform}.
\end{proof}

\begin{lemma}\label{lem:linear}
    Let $\{\hat{\upmu}\}_{i=1}^7$ be given by Equations \eqref{eq:muhat1} through \eqref{eq:muhat7}.
    Under \ref{as:estimation},
    \begin{multline}\label{eq:lemlinear}
        \sqrt{N}\cdot
    \Big(
    \frac{1}{N_{s_1d_1t_0}}\sum_{i=1}^{N_{s_1d_1t_0}}
    \hat{F}^{-1}_{s_0d_1t_1}
        \circ
        \hat{F}_{s_0d_1t_0}
        \circ
        \hat{F}^{-1}_{s_0d_0t_0}
        \circ
        \hat{F}_{s_0d_0t_1}
        \circ
        \hat{F}^{-1}_{s_1d_0t_1}
        \circ
        \hat{F}_{s_1d_0t_0}\big(
        Y_{s_1d_1,i}(t_0)
        \big) 
        \\- 
        \ex{s_1d_1}\big[F^{-1}_{s_0d_1t_1}
        \circ
        F_{s_0d_1t_0}
        \circ
        F^{-1}_{s_0d_0t_0}
        \circ
        F_{s_0d_0t_1}
        \circ
        F^{-1}_{s_1d_0t_1}
        \circ
        F_{s_1d_0t_0}\big(
        Y(t_0)
        \big)\big]
        \\+\hat{\upmu}_1-\hat{\upmu}_2+\hat{\upmu}_3-\hat{\upmu}_4+\hat{\upmu}_5-\hat{\upmu}_6-\hat{\upmu}_7
        \Big)\overset{P}{\to}0.
    \end{multline}
\end{lemma}
\begin{proof}
    Using triangle inequality, the absolute value of the left hand side of Eq.~\eqref{eq:lemlinear} is bounded by
    \begin{multline}\label{eq:prooflinear}
        \sqrt{N}\cdot
    \Big\vert
    \frac{1}{N_{s_1d_1t_0}}\sum_{i=1}^{N_{s_1d_1t_0}}
    \hat{F}^{-1}_{s_0d_1t_1}
        \circ
        \hat{F}_{s_0d_1t_0}
        \circ
        \hat{F}^{-1}_{s_0d_0t_0}
        \circ
        \hat{F}_{s_0d_0t_1}
        \circ
        \hat{F}^{-1}_{s_1d_0t_1}
        \circ
        \hat{F}_{s_1d_0t_0}\big(
        Y_{s_1d_1,i}(t_0)
        \big) 
        \\- 
        \ex{s_1d_1}\big[F^{-1}_{s_0d_1t_1}
        \circ
        F_{s_0d_1t_0}
        \circ
        F^{-1}_{s_0d_0t_0}
        \circ
        F_{s_0d_0t_1}
        \circ
        F^{-1}_{s_1d_0t_1}
        \circ
        F_{s_1d_0t_0}\big(
        Y(t_0)
        \big)\big]
        \\+\hat{\mu}_1-\hat{\mu}_2+\hat{\mu}_3-\hat{\mu}_4+\hat{\mu}_5-\hat{\mu}_6-\hat{\upmu}_7
        \Big\vert
        +\sqrt{N}\cdot\sum_{i=1}^6\vert\hat{\mu}_i-\hat{\upmu}_i\vert.
    \end{multline}
The term $\sqrt{N}\cdot\sum_{i=1}^6\vert\hat{\mu}_i-\hat{\upmu}_i\vert$ is $o_p(1)$ due to Lemma \ref{lem:muupmu}.
The first term of Eq.~\eqref{eq:prooflinear} is bounded by
\end{proof}

\begin{multline}
        \sqrt{N}\cdot
        \Big\vert
        \frac{1}{N_{s_1d_1t_0}}\sum_{i=1}^{N_{s_1d_1t_0}}
        \hat{F}^{-1}_{s_0d_1t_1}
        \circ
        \hat{F}_{s_0d_1t_0}
        \circ
        \hat{F}^{-1}_{s_0d_0t_0}
        \circ
        \hat{F}_{s_0d_0t_1}
        \circ
        \hat{F}^{-1}_{s_1d_0t_1}
        \circ
        \hat{F}_{s_1d_0t_0}\big(
        Y_{s_1d_1,i}(t_0)
        \big) 
        \\- 
        \frac{1}{N_{s_1d_1t_0}}\sum_{i=1}^{N_{s_1d_1t_0}}
        F^{-1}_{s_0d_1t_1}
        \circ
        \hat{F}_{s_0d_1t_0}
        \circ
        \hat{F}^{-1}_{s_0d_0t_0}
        \circ
        \hat{F}_{s_0d_0t_1}
        \circ
        \hat{F}^{-1}_{s_1d_0t_1}
        \circ
        \hat{F}_{s_1d_0t_0}\big(
        Y_{s_1d_1,i}(t_0)
        \big) +\hat{\mu}_1 
        \Big\vert
        \\+
        \sqrt{N}\cdot
        \Big\vert
        \frac{1}{N_{s_1d_1t_0}}\sum_{i=1}^{N_{s_1d_1t_0}}
        F^{-1}_{s_0d_1t_1}
        \circ
        \hat{F}_{s_0d_1t_0}
        \circ
        \hat{F}^{-1}_{s_0d_0t_0}
        \circ
        \hat{F}_{s_0d_0t_1}
        \circ
        \hat{F}^{-1}_{s_1d_0t_1}
        \circ
        \hat{F}_{s_1d_0t_0}\big(
        Y_{s_1d_1,i}(t_0)
        \big) 
        \\- 
        \frac{1}{N_{s_1d_1t_0}}\sum_{i=1}^{N_{s_1d_1t_0}}
        F^{-1}_{s_0d_1t_1}
        \circ
        F_{s_0d_1t_0}
        \circ
        \hat{F}^{-1}_{s_0d_0t_0}
        \circ
        \hat{F}_{s_0d_0t_1}
        \circ
        \hat{F}^{-1}_{s_1d_0t_1}
        \circ
        \hat{F}_{s_1d_0t_0}\big(
        Y_{s_1d_1,i}(t_0)
        \big) -\hat{\mu}_2
        \Big\vert
        \\+
        \sqrt{N}\cdot
        \Big\vert
        \frac{1}{N_{s_1d_1t_0}}\sum_{i=1}^{N_{s_1d_1t_0}}
        F^{-1}_{s_0d_1t_1}
        \circ
        F_{s_0d_1t_0}
        \circ
        \hat{F}^{-1}_{s_0d_0t_0}
        \circ
        \hat{F}_{s_0d_0t_1}
        \circ
        \hat{F}^{-1}_{s_1d_0t_1}
        \circ
        \hat{F}_{s_1d_0t_0}\big(
        Y_{s_1d_1,i}(t_0)
        \big) 
        \\- 
        \frac{1}{N_{s_1d_1t_0}}\sum_{i=1}^{N_{s_1d_1t_0}}
        F^{-1}_{s_0d_1t_1}
        \circ
        F_{s_0d_1t_0}
        \circ
        F^{-1}_{s_0d_0t_0}
        \circ
        \hat{F}_{s_0d_0t_1}
        \circ
        \hat{F}^{-1}_{s_1d_0t_1}
        \circ
        \hat{F}_{s_1d_0t_0}\big(
        Y_{s_1d_1,i}(t_0)
        \big) +\hat{\mu}_3
        \Big\vert
        \\+
        \sqrt{N}\cdot
        \Big\vert
        \frac{1}{N_{s_1d_1t_0}}\sum_{i=1}^{N_{s_1d_1t_0}}
        F^{-1}_{s_0d_1t_1}
        \circ
        F_{s_0d_1t_0}
        \circ
        F^{-1}_{s_0d_0t_0}
        \circ
        \hat{F}_{s_0d_0t_1}
        \circ
        \hat{F}^{-1}_{s_1d_0t_1}
        \circ
        \hat{F}_{s_1d_0t_0}\big(
        Y_{s_1d_1,i}(t_0)
        \big) 
        \\- 
        \frac{1}{N_{s_1d_1t_0}}\sum_{i=1}^{N_{s_1d_1t_0}}
        F^{-1}_{s_0d_1t_1}
        \circ
        F_{s_0d_1t_0}
        \circ
        F^{-1}_{s_0d_0t_0}
        \circ
        F_{s_0d_0t_1}
        \circ
        \hat{F}^{-1}_{s_1d_0t_1}
        \circ
        \hat{F}_{s_1d_0t_0}\big(
        Y_{s_1d_1,i}(t_0)
        \big) -\hat{\mu}_4
        \Big\vert
        \\+
        \sqrt{N}\cdot
        \Big\vert
        \frac{1}{N_{s_1d_1t_0}}\sum_{i=1}^{N_{s_1d_1t_0}}
        F^{-1}_{s_0d_1t_1}
        \circ
        F_{s_0d_1t_0}
        \circ
        F^{-1}_{s_0d_0t_0}
        \circ
        F_{s_0d_0t_1}
        \circ
        \hat{F}^{-1}_{s_1d_0t_1}
        \circ
        \hat{F}_{s_1d_0t_0}\big(
        Y_{s_1d_1,i}(t_0)
        \big) 
        \\- 
        \frac{1}{N_{s_1d_1t_0}}\sum_{i=1}^{N_{s_1d_1t_0}}
        F^{-1}_{s_0d_1t_1}
        \circ
        F_{s_0d_1t_0}
        \circ
        F^{-1}_{s_0d_0t_0}
        \circ
        F_{s_0d_0t_1}
        \circ
        F^{-1}_{s_1d_0t_1}
        \circ
        \hat{F}_{s_1d_0t_0}\big(
        Y_{s_1d_1,i}(t_0)
        \big) +\hat{\mu}_5
        \Big\vert
        \\+
        \sqrt{N}\cdot
        \Big\vert
        \frac{1}{N_{s_1d_1t_0}}\sum_{i=1}^{N_{s_1d_1t_0}}
        F^{-1}_{s_0d_1t_1}
        \circ
        F_{s_0d_1t_0}
        \circ
        F^{-1}_{s_0d_0t_0}
        \circ
        F_{s_0d_0t_1}
        \circ
        F^{-1}_{s_1d_0t_1}
        \circ
        \hat{F}_{s_1d_0t_0}\big(
        Y_{s_1d_1,i}(t_0)
        \big) 
        \\- 
        \frac{1}{N_{s_1d_1t_0}}\sum_{i=1}^{N_{s_1d_1t_0}}
        F^{-1}_{s_0d_1t_1}
        \circ
        F_{s_0d_1t_0}
        \circ
        F^{-1}_{s_0d_0t_0}
        \circ
        F_{s_0d_0t_1}
        \circ
        F^{-1}_{s_1d_0t_1}
        \circ
        F_{s_1d_0t_0}\big(
        Y_{s_1d_1,i}(t_0)
        \big)-\hat{\mu}_6 
        \Big\vert
        \\+
        \sqrt{N}\cdot
        \Big\vert
        \frac{1}{N_{s_1d_1t_0}}\sum_{i=1}^{N_{s_1d_1t_0}}
        F^{-1}_{s_0d_1t_1}
        \circ
        F_{s_0d_1t_0}
        \circ
        F^{-1}_{s_0d_0t_0}
        \circ
        F_{s_0d_0t_1}
        \circ
        F^{-1}_{s_1d_0t_1}
        \circ
        F_{s_1d_0t_0}\big(
        Y_{s_1d_1,i}(t_0)
        \big) 
        \\- 
        \ex{s_1d_1}\big[F^{-1}_{s_0d_1t_1}
        \circ
        F_{s_0d_1t_0}
        \circ
        F^{-1}_{s_0d_0t_0}
        \circ
        F_{s_0d_0t_1}
        \circ
        F^{-1}_{s_1d_0t_1}
        \circ
        F_{s_1d_0t_0}\big(
        Y(t_0)
        \big)\big]-\hat{\upmu}_7
        \Big\vert,
\end{multline}
where the first through sixth terms are $o_p(1)$ due to Lemmas \ref{lem:term1}, \ref{lem:term2}, \ref{lem:term3}, \ref{lem:term4}, \ref{lem:term5} and \ref{lem:term6}, respectively, and the seventh term is $0$ by definition of $\hat{\upmu}_7$ (see Eq.~\ref{eq:muhat7}.)

\section{Omitted Proofs}\label{apx:proofs}

\thmid*
\begin{proof}
\begin{equation}\label{eq:1}
    \begin{split}
        F_{\Y(t)\mid s,d}(y)
        &=\mathbbm{P}(h_{s,d}(U;t)\leq y\mid S=s,D=d)\\
        &=\mathbbm{P}(U\leq h_{s,d}^{-1}(y;t)\mid S=s,D=d)\\
        &=F_{U\mid s,d}(h_{s,d}^{-1}(y;t)),
    \end{split}
\end{equation}
and consequently,
\begin{equation}\label{eq:2}
    h_{s,d}^{-1}(y;t) = F_{U\mid s,d}^{-1}\big(F_{\Y(t)\mid s,d}(y)\big).
\end{equation}

Choose $y=h_{s,d}(u;t)$ in Eq.~\eqref{eq:1}, and apply $F^{-1}_{\Y(t)\mid s,d}(\cdot)$ to both sides:
\begin{equation}\label{eq:3}
    \begin{split}
        h_{s,d}(u;t) = F^{-1}_{\Y(t)\mid s,d}\big(F_{U\mid s,d}(u)\big)
    \end{split}
\end{equation}
Finally, combine Eq.~\eqref{eq:2} and \eqref{eq:3} to get
\begin{equation}
    h_{s,d}\big(
    h_{s,d}^{-1}(y;t)
    ;\tilde{t}\big)
    =
    F^{-1}_{\Y(\tilde{t})\mid s,d}
    \big(
    F_{\Y(t)\mid s,d}(y)
    \big).
\end{equation}

In \ref{as:stindep} (see Eq.~\ref{eq:stindep}), choose $y=h_{s_1,d_0}\big(
            h_{s_1,d_0}^{-1}(y';t_0)
            ;t_1\big)$, and take inverses
which results in
\begin{equation}
    \begin{split}
        h_{s_1,d_1}\big(
        h_{s_1,d_1}^{-1}(y;t_1)
        ;t_0\big)
        =
        h_{s_0,d_1}\Bigg(
        h^{-1}_{s_0,d_1}\bigg(
        h_{s,d}\Big\{
        h^{-1}_{s,d}\big\{
        h_{s_1,d_0}\big(
        h^{-1}_{s_1,d_0}(
        y;t_1)
        ; t_0\big)
        ; t_0\big\}
        ; t_1\Big\}
        ; t_1\bigg)
        ;t_0\Bigg),
    \end{split}
\end{equation}
or equivalently, in terms of cumulative density functions,
\begin{equation}
    \begin{split}
        F^{-1}_{\Y(t_0)\mid s_1,d_1}
        &\big(
        F_{\Y(t_1)\mid s_1,d_1}(y)
        \big)
        =\\
        &F^{-1}_{\Y(t_0)\mid s_0,d_1}
        \Bigg(
        F_{\Y(t_1)\mid s_0,d_1}\bigg(
        F^{-1}_{\Y(t_1)\mid s,d}
        \Big\{
        F_{\Y(t_0)\mid s,d}\big\{
        F^{-1}_{\Y(t_0)\mid s_1,d_0}
        \big(
        F_{\Y(t_1)\mid s_1,d_0}(
        y)
        \big)
        \big\}
        \Big\}
        \bigg)
        \Bigg),
    \end{split}
\end{equation}
which results in the identification of the density of the missing counterfactual:
\begin{multline*}
        F_{\Y(t_1)\mid s_1,d_1}(y)
        =\\
        F_{\Y(t_0)\mid s_1,d_1}
        \!\circ\!
        F^{-1}_{\Y(t_0)\mid s_0,d_1}
        \!\circ\!
        F_{\Y(t_1)\mid s_0,d_1}
        \!\circ\!
        F^{-1}_{\Y(t_1)\mid s_0,d_0}
        \!\circ\!
        F_{\Y(t_0)\mid s_0,d_0}
        \!\circ\!
        F^{-1}_{\Y(t_0)\mid s_1,d_0}
        \!\circ\!
        F_{\Y(t_1)\mid s_1,d_0}(
        y)
        .
\end{multline*}
The rest follows from \ref{as:cons}.
\end{proof}

\prppartial*
\begin{proof}
    Under \ref{as:invariance2}, \ref{as:asm} is equivalent to \ref{as:eps}.
    Further, \ref{as:invariance2} implies that \ref{as:delta} holds for $\delta=0$.
    The result follows from Proposition \ref{prp:partialgen} with $\delta=0$.
\end{proof}

\prpdelta*
\begin{proof}
    \ref{as:monotone2} implies that \ref{as:eps} holds for $\epsilon=0$.
    The result follows from Proposition \ref{prp:partialgen} with $\epsilon=0$.
\end{proof}

\prpjoint*
\begin{proof}
Under the listed assumptions, latent variable $U$ is a function of $\Y(t_0)$:
\[
\Y(t_0) = h_{s_1,d_1}(U;t_0) \Rightarrow U = h_{s_1,d_1}^{-1}(\Y(t_0);t_0),
\]
and consequently, $\Y(t_1)$ can be expressed as a function of $\Y(t_0)$:
\begin{equation}\label{eq:prp2eq1}
    \Y(t_1) = h_{s_1,d_1}(U;t_1) = h_{s_1,d_1}\big(h_{s_1,d_1}^{-1}(\Y(t_0);t_0);t_1\big).
\end{equation}
Analogous to the proof of Theorem \ref{thm:id} (see Equations \ref{eq:2} and \ref{eq:3}), the following hold for every $s,d,t$:
\[h_{s_1,d_1}^{-1}(y;t) = F_{U\mid s_1,d_1}^{-1}\big(F_{\Y(t)\mid s_1,d_1}(y)\big),\quad \text{and}\quad h_{s_1,d_1}(u;t) = F^{-1}_{\Y(t)\mid s_1,d_1}\big(F_{U\mid s_1,d_1}(u)\big),\]
and combining the two, therefore, for any $s,d$:
\begin{equation}\label{eq:prp2eq2}
    h_{s_1,d_1}\big(h_{s_1,d_1}^{-1}(y;t_0);t_1\big) = F^{-1}_{\Y(t_1)\mid s_1,d_1} \circ F_{\Y(t_0)\mid s_1,d_1}(y).
\end{equation}
From Equations \eqref{eq:prp2eq1} and \eqref{eq:prp2eq2},
\begin{equation}
    \Y(t_1) = F^{-1}_{\Y(t_1)\mid s_1,d_1} \circ F_{\Y(t_0)\mid s_1,d_1}\big(\Y(t_0)\big).
\end{equation}
Note that CDFs and their inverses are strictly increasing, and so is $\Y(t_1)$ in terms of $\Y(t_0)$.
As a result,
\[
\begin{split}
    F_{\Y(t_1), Y^1(t_1)\mid s_1,d_1}&(y^0,y^1) 
    \\&= \mathbbm{P}(\Y(t_1)\leq y^0, Y^1(t_1)\leq y^1\mid S=s_1,D=d_1)\\
    &=
    \mathbbm{P}\big(F^{-1}_{\Y(t_1)\mid s_1,d_1} \circ F_{\Y(t_0)\mid s_1,d_1}\big(\Y(t_0)\big)\leq y^0, Y^1(t_1)\leq y^1\mid S=s_1,D=d_1\big)\\
    &=
    \mathbbm{P}\big(\Y(t_0)\leq F^{-1}_{\Y(t_0)\mid s_1,d_1} \circ F_{\Y(t_1)\mid s_1,d_1}(y^0), Y^1(t_1)\leq y^1\mid S=s_1,D=d_1\big)\\
    &=
    F_{\Y(t_0), Y^1(t_1)\mid s_1,d_1}\big(F^{-1}_{\Y(t_0)\mid s_1,d_1} \circ F_{\Y(t_1)\mid s_1,d_1}(y^0),y^1\big)\\
    &=
    F_{Y(t_0), Y(t_1)\mid s_1,d_1}\big(F^{-1}_{Y(t_0)\mid s_1,d_1} \circ F_{\Y(t_1)\mid s_1,d_1}(y^0),y^1\big).
\end{split}
\]
Since the assumptions of Theorem \ref{thm:id} are satisfied, $F_{\Y(t_1)\mid s_1,d_1}(\cdot)$ is identified as in Equation \eqref{eq:thm1} of Theorem \ref{thm:id}.
\end{proof}

\thmconsistency*
\begin{proof}
\emph{Consistency.}
The estimators $\hat{F}$ converge uniformly to their corresponding $F$, and the estimators $\hat{F}^{-1}$ converge uniformly to their corresponding $F^{-1}$ (see Lemma \ref{lem:uniform}.)
As such, $\hat{F}^{-1}_{s_0d_1t_1}
        \circ
        \hat{F}_{s_0d_1t_0}
        \circ
        \hat{F}^{-1}_{s_0d_0t_0}
        \circ
        \hat{F}_{s_0d_0t_1}
        \circ
        \hat{F}^{-1}_{s_1d_0t_1}
        \circ
        \hat{F}_{s_1d_0t_0}(y) $
converges to 
$F^{-1}_{s_0d_1t_1}
        \circ
        F_{s_0d_1t_0}
        \circ
        F^{-1}_{s_0d_0t_0}
        \circ
        F_{s_0d_0t_1}
        \circ
        F^{-1}_{s_1d_0t_1}
        \circ
        F_{s_1d_0t_0}(y)$
uniformly in $y$.
It follows from the law of large numbers and \ref{as:estimation} that
$
\frac{1}{N_{s_1d_1t_0}}\sum_{i=1}^{N_{s_1d_1t_0}}\hat{F}^{-1}_{s_0d_1t_1}
        \circ
        \hat{F}_{s_0d_1t_0}
        \circ
        \hat{F}^{-1}_{s_0d_0t_0}
        \circ
        \hat{F}_{s_0d_0t_1}
        \circ
        \hat{F}^{-1}_{s_1d_0t_1}
        \circ
        \hat{F}_{s_1d_0t_0}\big(Y_{s_1d_1,i}(t_0)\big)
$
converges to $\ex{s_1d_1}\big[F^{-1}_{s_0d_1t_1}
        \circ
        F_{s_0d_1t_0}
        \circ
        F^{-1}_{s_0d_0t_0}
        \circ
        F_{s_0d_0t_1}
        \circ
        F^{-1}_{s_1d_0t_1}
        \circ
        F_{s_1d_0t_0}\big(Y(t_0)\big)\big]$.
Further, the law of large numbers also implies the convergence of $\frac{1}{N_{s_1d_1t_1}}\sum_{i=1}^{N_{s_1d_1t_1}}Y_{s_1d_1,i}(t_1)$ to $\ex{s_1d_1}\big[Y(t_1)\big]$, which completes the proof of consistency.

\emph{Asymptotic normality.}
Note that each of $\{\hat{\upmu}_i\}_{i=0}^7$ is a sample averages of i.i.d. random variables.
Under \ref{as:estimation}, the central limit theorem applies and
\begin{equation}
\begin{split}
    \sqrt{N}(\hat{\upmu}_0+\hat{\upmu}_1-\hat{\upmu}_2+\hat{\upmu}_3-\hat{\upmu}_4+\hat{\upmu}_5-\hat{\upmu}_6-&\hat{\upmu}_7)\overset{D}{\to}\mathcal{N}(0,
    V_0/p_{s_1d_1t_1}+
    V_1/p_{s_0d_1t_1}+
    V_2/p_{s_0d_1t_0}\\&+
    V_3/p_{s_0d_0t_0}+
    V_4/p_{s_0d_0t_1}+
    V_5/p_{s_1d_0t_1}+
    V_6/p_{s_1d_0t_0}+
    V_7/p_{s_1d_1t_0}
    ).
\end{split}
\end{equation}
Also, due to Lemma \ref{lem:linear} and definition of $\hat{\upmu}_0$ (Eq.~\ref{eq:muhat0}),
\begin{equation}
    \begin{split}
        \sqrt{N}\cdot\vert(\hat{\tau}-\tau)
        -(\hat{\upmu}_0+\hat{\upmu}_1-\hat{\upmu}_2+\hat{\upmu}_3-\hat{\upmu}_4+\hat{\upmu}_5-\hat{\upmu}_6-&\hat{\upmu}_7)\vert\\=
         \sqrt{N}\cdot
    \Big(
    \frac{1}{N_{s_1d_1t_0}}\sum_{i=1}^{N_{s_1d_1t_0}}
    \hat{F}^{-1}_{s_0d_1t_1}
        \circ
        \hat{F}_{s_0d_1t_0}
        \circ
        \hat{F}^{-1}_{s_0d_0t_0}
        \circ&
        \hat{F}_{s_0d_0t_1}
        \circ
        \hat{F}^{-1}_{s_1d_0t_1}
        \circ
        \hat{F}_{s_1d_0t_0}\big(
        Y_{s_1d_1,i}(t_0)
        \big) 
        \\- 
        \ex{s_1d_1}\big[F^{-1}_{s_0d_1t_1}
        \circ
        F_{s_0d_1t_0}
        \circ&
        F^{-1}_{s_0d_0t_0}
        \circ
        F_{s_0d_0t_1}
        \circ
        F^{-1}_{s_1d_0t_1}
        \circ
        F_{s_1d_0t_0}\big(
        Y(t_0)
        \big)\big]
        \\&+\hat{\upmu}_1-\hat{\upmu}_2+\hat{\upmu}_3-\hat{\upmu}_4+\hat{\upmu}_5-\hat{\upmu}_6-\hat{\upmu}_7
        \Big)=o_p(1),
    \end{split}
\end{equation}
which concludes the proof.
\end{proof}

\thmhighd*
\begin{proof}
$T^*$ is the Brenier map that pushes forward $(T_{s_0,d_0\#}\eta_{s_0,d_1})$ to $\mu_{s_0,d_1}$.
By \ref{as:stindep}, $T^*$ is also the Brenier map that pushes forward $(T_{s_1,d_0\#}\eta_{s_1,d_1})$ to $\mu_{s_1,d_1}$.
The result follows by uniqueness of these maps under the given assumptions.
\end{proof}

\section{Further on Empirical Evaluations}\label{apx:exp}
We begin with providing the complete details of the data generating mechanism we have used for our illustrations on synthetic data in the main text. 
We later provide an additional result which is postponed to this appendix.
\begin{figure}[t]
    \centering
    \includegraphics[width=.65\textwidth]{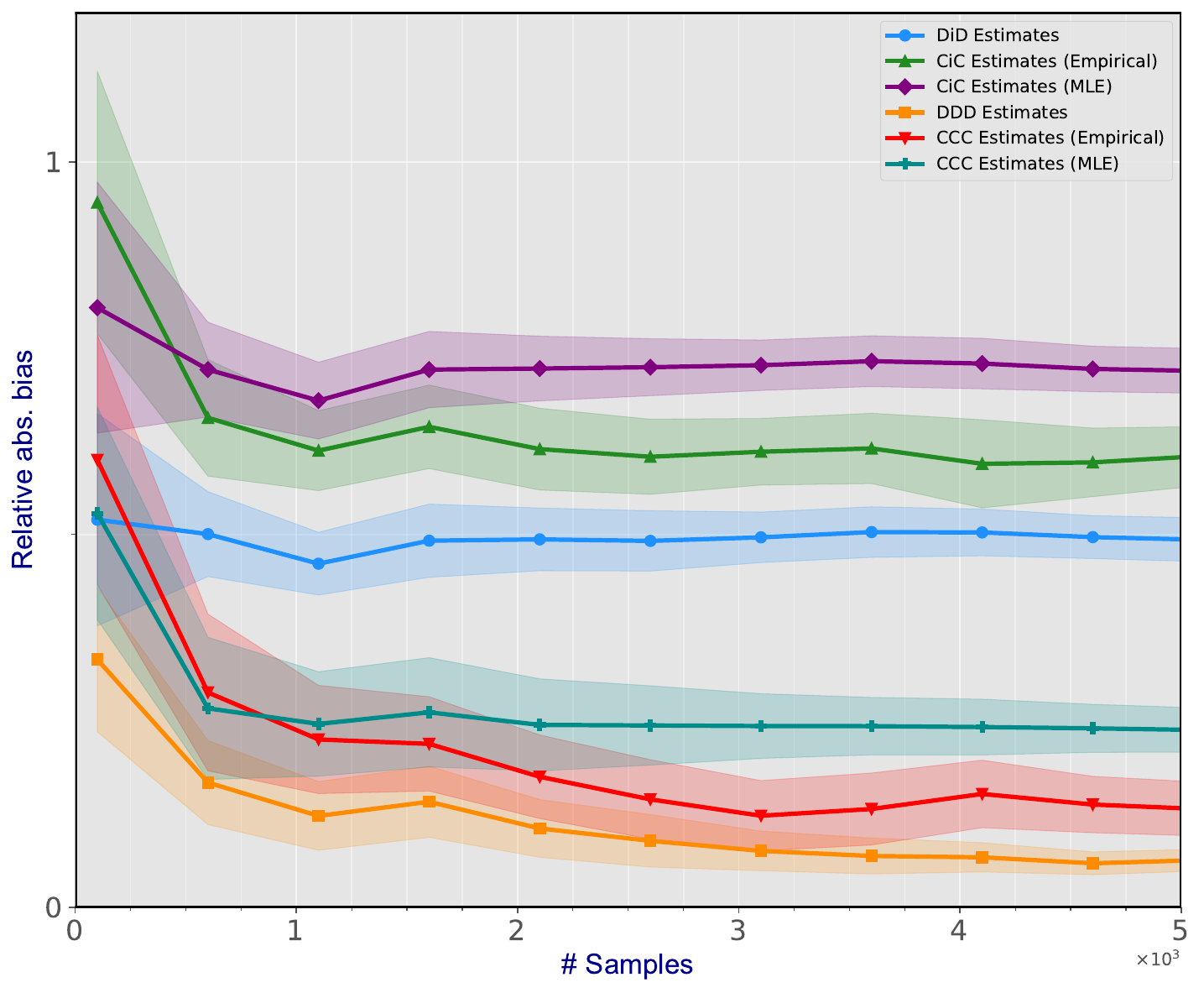}
    \caption{Relative bias of the estimators evaluated on a linear model. The models for maximum likelihood estimation are misspecified.}
    \label{fig:syn-exp}
\end{figure}
\subsection{Data generating mechanism}
To produce Figure \ref{fig:syn-linear}, we sampled the latent variable $U$ given $s,d$ from a Gaussian distribution with mean $\nu_{s,d}$ and variance $1$, where $\nu_{s,d}$s were chosen as follows:
\[
\begin{split}
    \nu_{s_0,d_0} &=0,\\
    \nu_{s_0,d_1} &=0.25, \\
    \nu_{s_1,d_0} &=-0.25,\\
    \nu_{s_1,d_1} &=0.5.
\end{split}
\]
We also generated the $\Y(t)$ counterfactuals as 
\[\Y(t)=h_{s,d}(u;t),\]
where $h_{s,d}$ was defined as a linear function:
\[h_{s,d}(u;t) \coloneqq 2u+\big(\frac{1+s}{4}+\frac{d-0.5}{2}\big) t.\]

Finally, the actual outcome $Y(t_1)$ in the group $(s_1,d_1)$ was sampled from a Gaussian distribution with mean $2.75$ and variance $1$, and the true ATT was calculated as 
$\tau = \ex{s_1d_1}[Y(t_1)] - \ex{s_1d_1}[\Y(t_1)]$.

For producing Figure \ref{fig:syn-nonlinear}, the model above was slightly modified as follows.
First, we sampled $U$ from a Gaussian distribution with the following means and variances given each group:
\[
\begin{split}
    \nu_{s_0,d_0} &=0, \quad\sigma_{s_0,d_0}=1,\\
    \nu_{s_0,d_1} &=0.25, \quad\sigma_{s_0,d_1}=1,\\
    \nu_{s_1,d_0} &=-0.25,\quad\sigma_{s_1,d_0}=1,\\
    \nu_{s_1,d_1} &=-0.5, \quad\sigma_{s_1,d_1}=1.25.
\end{split}
\]
Second, to add non-linearities to the model, two of the production functions were modified as:
\[
\begin{split}
    h_{s_0,d_1}(u;t_1) &= 0.1\exp\big(2u+\frac{1}{2} \big), \quad\text{and,}\\
    h_{s_0,d_1}(u;t_1) &= 0.1\exp\big(2u+\frac{3}{4} \big).\\
\end{split}
\]
Note that these are the same functions as the ones given in the text, after evaluating at $d=1, t=1$, and their corresponding $s\in\{0,1\}$.
Moreover, the corresponding maximum likelihood estimators were accordingly adapted to the choice of log-linear distribution instead of Gaussian.
Third, the actual outcome was sampled from a Gaussian distribution with mean $10$ and variance $1$.

\subsection{A complementary plot}
Throughout the main text, the models were correctly specified when using maximum likelihood estimators.
For the sake of completeness, herein, we provide the estimation results under a setting similar to that of Figure \ref{fig:syn-linear}, with the distinction that we sample the latent variables $U$ from an exponential distribution.
Specifically, $U$ given $s,d$ is sampled from an exponential distribution with parameters specified as follows.
\[\begin{split}
    \lambda_{s_0,d_0} = 1,\\
    \lambda_{s_0,d_1} = 2,\\
    \lambda_{s_1,d_0} = 3,\\
    \lambda_{s_1,d_1} = 1.
\end{split}\]
The outcome in the treated groups is also sampled from an exponential distribution with parameter $\frac{4}{15}$.
However, we use the Gaussian distribution for maximum likelihood estimation, causing a model mismatch.
As a result, it is expected that the CiC and CCC estimators that are based on MLE are biased due to model misspecification. 
Figure \ref{fig:syn-exp} coincides with our expectations.
Note that even the CiC estimates with MLE exhibit larger biases compared to their empirical estimator counterpart.
As such, caution is advised when using model-based estimators.
Note that since the production functions are linear, DDD is unbiased in large samples.

\end{document}